\providecommand{\tabularnewline}{\\}
\numberwithin{equation}{section}
\theoremstyle{plain}
\newtheorem{thm}{\protect\theoremname}[section]
\theoremstyle{plain}
\newtheorem{lem}{\protect\lemmaname}[section]
\theoremstyle{plain}
\newtheorem*{cor*}{\protect\corollaryname}
\theoremstyle{plain}
\newtheorem{prop}{\protect\propositionname}[section]
\theoremstyle{plain}
\newtheorem{assumption}{\protect\assumptionname}
\providecommand{\assumptionname}{Assumption}
\providecommand{\corollaryname}{Corollary}
\providecommand{\lemmaname}{Lemma}
\providecommand{\propositionname}{Proposition}
\providecommand{\theoremname}{Theorem}
\begin{document}
\title{Selecting the Best Arm in One-Shot Multi-Arm RCTs:\\
The Asymptotic Minimax-Regret Decision Framework \\
for the Best-Population Selection Problem}
\author{Joonhwi Joo\thanks{Naveen Jindal School of Management, The University of Texas at Dallas.
\texttt{\protect\href{mailto:joonhwi.joo@utdallas.edu}{joonhwi.joo@utdallas.edu}}~\protect \\
I thank Jean-Pierre Dub\'{e}, Xinyao Kong, Charles Manski, Ram C.
Rao, Adam N. Smith, and seminar participants at UTD for helpful comments.}}
\date{{\small First version: August 29, 2025}\\
{\small This version: \today}\vspace{-0bp}
}
\maketitle
\begin{abstract}
We develop a frequentist decision-theoretic framework for selecting
the best arm in one-shot, multi-arm randomized controlled trials (RCTs).
Our approach characterizes the minimax-regret (MMR) optimal decision
rule for any multivariate location family reward distribution with
full support. We show that the MMR rule is deterministic, unique,
and computationally tractable. We then specialize to the case of multivariate
normal (MVN) rewards with an arbitrary covariance matrix, and establish
the local asymptotic minimaxity of a plug-in version of the rule when
only estimated means and covariances are available. This asymptotic
MMR (AMMR) procedure maps a covariance-matrix estimate directly into
decision boundaries, allowing straightforward implementation in practice.
Our analysis highlights a sharp contrast between two-arm and multi-arm
designs. With two arms, the “pick-the-winner” empirical success rule
remains MMR-optimal, regardless of the arm-specific variances. By
contrast, with three or more arms and heterogeneous variances, the
empirical success rule is no longer optimal: the MMR decision boundaries
become nonlinear and systematically penalize high-variance arms, requiring
stronger evidence to select them. Our multi-arm AMMR framework offers
a rigorous foundation that leads to practical criteria for comparing
multiple policies simultaneously.
\end{abstract}
\newpage{}

\section{Introduction}

Randomized controlled trials (RCTs) are now indispensable in data-driven
decision-making across public policy, business, and clinical trials.
Most one-shot RCTs still use two-arm “A/B test'' designs, for which
simple threshold rules are known to be decision-theoretically optimal
under usual conditions \citep[e.g.,][]{Karlin1956}. When comparisons
span more than two arms, decision-makers (DMs) typically rely on ad
hoc rules, such as pairwise tests designed for two-arm settings, multiple-hypothesis
testing procedures, or the “pick-the-winner” empirical success rule
that ignores arm-specific variance. None of these is decision-theoretically
optimal when the goal is to select the one best policy among multiple
candidates in a one-shot, multi-arm RCT. This lack of an optimal rule
for multi-arm comparison has left the two-arm ``A/B test'' designs
pervasive for one-shot RCTs, leading to inefficiencies in RCT designs
and suboptimality in the resulting decisions.

We address this inefficiency by characterizing the decision-theoretically
optimal rule for the \emph{best-population selection} problem in one-shot,
multi-arm RCTs. Our analysis is cast in the frequentist minimax-regret
(MMR) framework, where regret is the opportunity cost from not choosing
the best arm ex post, and the MMR criterion minimizes the worst-case
expected regret over the parameter space. We first characterize the
MMR-optimal rule for any multivariate location family reward distribution
with fully-supported density that shifts its location-parameter vector
but keeps the same shape and is positive everywhere. We then specialize
to multivariate normal (MVN) rewards with an arbitrary covariance
matrix. Lastly, when the mean-reward estimator is only (locally uniformly)
asymptotically normal and covariances are estimated, we show that
a plug-in version of the MVN-MMR rule is locally asymptotically minimax.
The resulting Asymptotic MMR (AMMR) procedure maps a consistent covariance-matrix
estimate directly into decision boundaries for the mean-reward estimates,
making implementation straightforward in practice.

Our results reveal a sharp and practically important distinction between
two-arm and multi-arm designs. With two arms, the empirical success
rule remains MMR-optimal, and heterogeneity in variances across arms
does not alter the decision threshold. With three or more arms and
heterogeneous variances, however, the empirical success rule is no
longer optimal. The MMR decision boundaries become nonlinear and penalize
high-variance arms, requiring stronger evidence to select them. Intuitively,
for the two-arm case, arm-specific variance is irrelevant because
the problem reduces to comparing only the realized reward difference
between the two arms. However, arm-specific variance becomes critical
once multiple options are compared simultaneously, as the winning
arm must be superior over all other arms with different arm-specific
variances.

Formally, we study the problem of selecting the best arm after observing
one noisy signal vector whose distribution belongs to a location family
with an unknown location parameter vector. We adopt the regret loss
and the frequentist minimax criterion following the MMR analyses in
two-arm setups \citep[e.g.,][]{Manski2004,Manski2016,Manski2019,Manski2021,Hirano2009,Tetenov2012,Stoye2009,Stoye2012,HuZhuBrunskillWager2024,JooForthcoming}.
Our frequentist approach to the decision problem yields a plug-in
rule that does not require specifying or estimating a subjective prior,
facilitating immediate practical applications.

Directly solving for the MMR problem beyond the two-arm setup is typically
intractable. To make progress, we recast the problem as a zero-sum
game against nature, the dual formulation of the problem following
\citet{Wald1945}. In this dual formulation, nature chooses a least-favorable
prior to which the DM responds with the Bayes rule. We prove that
nature’s least-favorable prior places exactly one support point in
each region where a particular arm is optimal. We further prove that
the corresponding Bayes rule is deterministic, unique, and therefore
unique minimax for the original MMR problem. These results drastically
simplify the numerical characterization and computation of the MMR-optimal
policy.

In applications, the exact reward distribution is rarely known. We
therefore provide a proof that the MVN-MMR rule and its regret risk
uniformly approximate their exact finite-sample counterparts when
the mean-reward estimator is locally uniformly asymptotically normal,
a mild assumption satisfied by many smooth M-estimators including
GMM and MLE, and by several shrinkage/robust estimators. Consequently,
one can plug $\sqrt{n}$-scaled mean-reward estimates into the MVN
decision boundaries of a consistent covariance estimate. This plug-in
rule leads to the AMMR rule that is optimal in the local asymptotic
sense, without requiring asymptotic efficiency. This complements the
approach of \citet{Hirano2009} and facilitates operationalization.

We solve the MVN-MMR problem numerically and plot the resulting MVN-MMR
decision boundaries. With two arms, the boundaries are linear and
coincide with the empirical success rule, regardless of the arm-specific
variances. With three or more arms and non-homoskedastic covariance
matrix, the optimal boundaries curve and shrink the region for high-variance
arms, formalizing the intuitive idea that higher arm-specific uncertainty
requires stronger evidence.

Taken together, our results demonstrate that the MMR decision theory
can directly inform regret-optimal decision-making using one-shot,
multi-arm RCTs. Therefore, our multi-arm AMMR allows researchers and
practitioners to compare multiple competing policies by designing
and implementing multi-arm RCTs within a unified decision-theoretic
framework, moving beyond the pervasive two-arm ``A/B test'' designs.

\subsection{Related Literature}

We are, to our knowledge, the first to characterize the general minimax-risk-optimal
best-population selection rule under regret loss in one-shot, multi-arm
experiments. Classic decision-theory results \citep{bahadur1950problem,bahadur1952impartial,lehmann1966theorem,eaton1967some}
show that the empirical success rule is minimax, admissible, and Bayes
under permutation invariance, a symmetry that effectively requires
homoskedastic rewards in the MVN case and certain discrete, permutation-invariant
priors. In most applications, covariance matrices are heteroskedastic,
breaking permutation invariance and eliminating the optimality properties
of the empirical success rule. Indeed, later work shows that empirical
success is not minimax under heteroskedasticity (\citealp{Dhariyal1994}
under 0-1 loss; \citealp{Masten2023} under regret loss). The minimax
rule in the general heteroskedastic case had remained unknown, with
only partial results available for Bayes rules under special priors
\citep[e.g.,][]{abughalous1995selecting}. A common workaround is
balanced design \citep[e.g.,][]{bechhofer1954single,somerville1954some},
which equalizes realized arm-specific variances via sample allocation.
However, balanced design cannot be applied to one-shot multi-arm RCTs
because it typically requires pilot experiments or sequential experimentation.

Our problem is conceptually related to best-arm identification in
pure-exploration bandits, where regret-based objectives and explore-then-commit
designs are standard \citep[e.g.,][]{Audibert2010,Jamieson2014,Garivier2016,Kaufmann2016,Grover2018,Agrawal2019,Russo2020,Komiyama2021,Howard2021,Kasy2021,AdusumilliRESforthcoming}.
However, sequential bandit experimentation is often infeasible in
RCT settings with delayed outcomes or when evaluating multiple policies
retrospectively using completed trials. Even when feasible, bandit
algorithms can be operationally demanding. Our AMMR rule for one-shot
RCTs is therefore complementary and broadly applicable.

Our multi-arm characterization nests the two-arm MMR problem of \citet{Tetenov2012}:
the optimal rule in two-arm experiments is the empirical success threshold
of zero, which our dual characterization reproduces. While directly
maximizing the regret function as in Tetenov does not scale to multiple
arms, our dual characterization of the multi-arm MMR-optimal policy
that is a deterministic Bayes rule under the least-favorable prior
restores computational tractability. Recent numerical approximation
methods such as \citet{Fernandez2025,Guggenberger2025} could further
improve computational efficiency.

Our local asymptotics and the associated plug-in approach are similar
in spirit to \citet{Hirano2009}. \citeauthor{Hirano2009} derive
the local asymptotic minimax bounds using Le Cam’s limits-of-experiments
framework and show that rules based on asymptotically efficient (“best
regular”) estimators attain those bounds, thereby justifying the plug-in
of the efficient estimators into the optimal decision rules derived
for normally distributed rewards. By contrast, our asymptotic approach
is to begin with a mean-reward estimator that is locally uniformly
asymptotically normal, and directly establish the local uniform convergence
of both the risk function and the associated MMR decision rule. As
a result, our plug-in approach can be broadly applied to smooth GMM/MLE
and many shrinkage estimators that may not be asymptotically efficient.

Other related works include \citet{Kitagawa2022}, who study least-favorable
priors when regret is nonlinearly transformed, and broader ranking/subset-selection
problems \citep[e.g.,][]{GuKoenker2023}. We also do not address covariate-based
treatment assignment or its multi-arm extensions \citep[e.g.,][]{Stoye2009,Stoye2012,Kitagawa2018,Athey2021}.

$ $

The remainder of the article is organized as follows. Section \ref{sec:Characterization-of-the}
characterizes the general DM's MMR-optimal decision rule when rewards
follow a fully-supported multivariate location family. Section \ref{sec:The-MVN-Best}
applies the general result to the MVN-distributed rewards and characterizes
the properties of MVN-MMR decision rule. Section \ref{sec:First-order-Asymptotic-Approxima}
provides the asymptotic approximation results when the rewards are
only asymptotically MVN distributed. Section \ref{sec:Applications}
provides numerical examples. Section \ref{sec:Conclusion} concludes.

\section{Characterization of the MMR-Risk-Optimal Solution for the Best-Population
Selection Problem}\label{sec:Characterization-of-the}

In this section, we characterize the DM's MMR-risk-optimal rule for
the best-population selection problem. We define the best-population
selection problem, regret loss function, and the minimax-risk solution
concept. Because directly characterizing the DM’s MMR-risk-optimal
strategy is intractable, we formulate the dual problem as a two-player
zero-sum game against nature. We show that the game has a value and
that a minimax theorem applies. Then we characterize nature's least-favorable
prior in the dual maximin problem, and solve for the DM's optimal
strategy under the least-favorable prior. The resulting DM's optimal
strategy in the dual problem coincides with the DM's minimax decision
rule in the primal problem.

\subsection{The Best-Population Selection Problem Setup}\label{subsec:Setup}

\paragraph*{Setup, Notation, and Assumptions}

Let $\mathcal{J}=\left\{ 1,2,...,J\right\} $. We use boldfaces to
denote vectors throughout. Denote the parameter set by $\Theta=\left[-B,B\right]^{J}\subset\mathbb{R}^{J}$,
a compact hypercube. $B>0$ can be arbitrarily large but finite. We
assume the true parameter vector $\bm{\theta}$ lies in the interior
of $\Theta$.

For any $\bm{\theta}\in\Theta$ let $i^{*}\left(\bm{\theta}\right)=\min\left\{ l:\theta_{l}=\max_{j\in\mathcal{J}}\theta_{j}\right\} $,
the smallest index such that the component $\theta_{l}$ attains $\max_{j\in\mathcal{J}}\left\{ \theta_{j}\right\} $.
Define $\Theta^{i}=\left\{ \bm{\theta}\in\Theta:i^{*}\left(\bm{\theta}\right)=i\right\} $.
$\Theta^{i}$ is the union of the set $\left\{ \bm{\theta}\in\Theta:-B\leq\theta_{j}<\theta_{i}\leq B\quad\forall j\neq i\right\} $
and certain boundary surfaces of ties, where $\Theta^{1}$ contains
all the tie points of the form $\max_{k}\left\{ \theta_{k}\right\} =\theta_{1}=\theta_{j}$
for $j>1$ whereas $\Theta^{J}$ does not include any ties in the
largest element. By construction, $\left\{ \Theta^{i}\right\} _{i\in\mathcal{J}}$
partitions $\Theta$, i.e., $\Theta=\bigcup_{i\in\mathcal{J}}\Theta^{i}$
and $\Theta^{i}\cap\Theta^{j}=\emptyset$ for any $i\neq j$.

We use superscript $i$ to designate that the true parameter belongs
to $\Theta^{i}$, and subscript $i$ to denote $i$-th element of
a vector. For instance, $\theta_{J}^{J}>\theta_{j}^{J}$ for all $j\neq J$
by construction because $\bm{\theta}^{J}\in\Theta^{J}$ and $\Theta^{J}$
does not include ties in the largest element. For the dual problem
introduced later, we denote by $\pi^{i}$ the prior probability of
$\Theta^{i}$, and by $\bm{\pi}=\left(\pi^{i}\right)_{i\in\mathcal{J}}$
the collection of these prior probabilities. We denote by $\pi\equiv\left\{ \pi^{i},\bm{\theta}^{i}\right\} _{i\in\mathcal{J}}$
a discrete prior distribution $\pi$ supported on $\left\{ \bm{\theta}^{j}\right\} _{j\in\mathcal{J}}$.

Let $\left\{ P_{\bm{\theta}}\right\} _{\bm{\theta}\in\Theta}$ be
a multivariate location family. We assume each $P_{\bm{\theta}}$
is a probability measure on $\mathbb{R}^{J}$ dominated by the Lebesgue
measure $\mu$, the densities $p_{\bm{\theta}}$ are continuous and
fully supported on $\mathbb{R}^{J}$, and $p_{\bm{\theta}}\left(\mathbf{x}\right)\neq p_{\bm{\theta}'}\left(\mathbf{x}\right)$
$\mu$-a.e. whenever $\bm{\theta}\neq\bm{\theta}'$. When $\bm{\theta}^{i}\in\Theta^{i}$
for each $i$, we use the shorthand $p^{i}=p_{\bm{\theta}^{i}}$.

The \emph{best-population selection} problem (the \emph{selection}
problem) is finding the largest element (arm) of the location-parameter
vector $\bm{\theta}$ after observing $\mathbf{x}\sim P_{\bm{\theta}}$.
A decision rule for the selection problem is a (measurable) mapping
$\bm{\phi}:\mathbb{R}^{J}\rightarrow\bm{\Delta}$, where the codomain
$\bm{\Delta}=\left\{ \bm{\phi}\in\mathbb{R}^{J}:\sum_{j=1}^{J}\phi_{j}=1\right\} $
is the probability simplex of $\mathbb{R}^{J}$. $\phi_{j}\left(\mathbf{x}\right)$
is the probability of selecting arm $j$ when an observation $\mathbf{x}$
is realized. Note that the decision rule $\bm{\phi}$ permits randomization
as $\bm{\phi}\in\text{int}\left(\bm{\Delta}\right)$ is possible.
Note, a nonrandomized decision rule can take values only among the
vertices $\left\{ \mathbf{e}_{i}\right\} _{i\in\mathcal{J}}$ of the
simplex $\bm{\Delta}$, where $\mathbf{e}_{i}$ is the $i$-th unit
vector.

\paragraph*{Loss, Risk, and the Solution Concept}

The regret loss with the realization of $\mathbf{x}$ under $\bm{\phi}$
is defined as:
\begin{equation}
L\left(\bm{\phi}\left(\mathbf{x}\right),\bm{\theta}\right)=\max_{k\in\mathcal{J}}\left\{ \theta_{k}\right\} -\sum_{j\in\mathcal{J}}\theta_{j}\phi_{j}\left(\mathbf{x}\right)=\sum_{j\in\mathcal{J}}\left(\max_{k\in\mathcal{J}}\left\{ \theta_{k}\right\} -\theta_{j}\right)\phi_{j}\left(\mathbf{x}\right).\label{eq:regret_loss}
\end{equation}
The risk of a decision rule $\bm{\phi}$ is the expected loss taking
$\bm{\theta}$ as given, where the expectation is taken with respect
to the measure $dP_{\bm{\theta}}\left(\mathbf{x}\right)=p_{\bm{\theta}}\left(\mathbf{x}\right)d\mu\left(\mathbf{x}\right)$:
\begin{align}
R\left(\bm{\theta},\bm{\phi}\right) & =\int L\left(\bm{\phi}\left(\mathbf{x}\right),\bm{\theta}\right)p_{\bm{\theta}}\left(\mathbf{x}\right)d\mu\left(\mathbf{x}\right)=\max_{k\in\mathcal{J}}\left\{ \theta_{k}\right\} -\sum_{j}\theta_{j}\mathbb{E}_{P_{\bm{\theta}}}\left[\phi_{j}\left(\mathbf{x}\right)\right].\label{eq:risk}
\end{align}
The regret risk is continuous in $\bm{\theta}$ for any decision
rule $\bm{\phi}$ under our setup (see Lemma \ref{lem:A2}).

The DM's objective is to find the MMR-risk-optimal decision rule $\bm{\delta}$
that minimizes the worst-case risk over the parameter space $\Theta$:
\begin{equation}
\bm{\delta}=\arg\inf_{\bm{\phi}}\sup_{\bm{\theta}\in\Theta}R\left(\bm{\theta},\bm{\phi}\right),\label{eq:minimax_risk}
\end{equation}
where $R\left(\bm{\theta},\bm{\phi}\right)$ is defined by (\ref{eq:regret_loss})
and (\ref{eq:risk}). 

Directly characterizing the minimax-risk decision rule is intractable
except for a few special cases \citep[e.g.,][]{Manski2004,Stoye2009,Stoye2012,Kitagawa2022}.
Therefore, in what follows, we turn to the “dual” maximin problem
of a two-player zero-sum game between the DM and nature to characterize
the DM’s optimal decision rule.

\subsection{Characterization of Nature's Least-Favorable Prior in the Dual Maximin
Problem}\label{subsec:Properties-of-the}

In this subsection, we first establish the existence of nature’s \emph{least-favorable
prior} $\pi$ for the selection problem under regret loss, which is
supported by at most $J$ distinct points in $\mathbb{R}^{J}$ (Theorem
\ref{thm:(Minimax-Theorem-for}). Next, we characterize the form of
the Bayes rule under this least-favorable prior, which is deterministic
up to tie-breaking (Lemma \ref{lem:(Pointwise-Bayes-Action}). Then
we establish that the support points of the least-favorable prior
are exactly $J$ distinct points, with $\bm{\theta}^{i}\in\Theta^{i}$
for each $i$ and the corresponding prior probability vector $\left(\pi^{i}\right)_{i\in\mathcal{J}}\in\text{int}\left(\bm{\Delta}\right)$
(Theorem \ref{thm:thm_2}). Taken together, the results established
in this subsection simplify the search for nature's least-favorable
prior in the dual problem drastically.
\begin{thm}[Minimax Theorem for the Best-Population Selection Problem]
\label{thm:(Minimax-Theorem-for} The following hold for the selection
problem:

(i)
\begin{align}
\inf_{\bm{\phi}}\sup_{\bm{\theta}\in\Theta}R\left(\bm{\theta},\bm{\phi}\right) & =\sup_{\tilde{\pi}}\inf_{\bm{\phi}}\mathbb{E}_{\tilde{\pi}}\left[R\left(\bm{\theta},\bm{\phi}\right)\right]\equiv V\in\left[0,2B\right],\label{eq:minimax_value}
\end{align}
where $\sup_{\tilde{\pi}}$ is over all Borel probability measures
on $\Theta$.

(ii) There exists a least-favorable prior $\pi$ supported on at most
$J$ distinct support points of $\Theta$.

(iii) Suppose $\bm{\delta}$ is a Bayes rule with respect to a least-favorable
prior $\pi$, i.e., $\bm{\delta}=\arg\inf_{\bm{\phi}}\mathbb{E}_{\pi}\left[R\left(\bm{\theta},\bm{\phi}\right)\right]$.
Then, $\bm{\delta}$ is the DM's minimax-risk decision rule, i.e.,
$\bm{\delta}=\arg\inf_{\bm{\phi}}\sup_{\bm{\theta}\in\Theta}R\left(\bm{\theta},\bm{\phi}\right)$.

(iv) Consider the least-favorable prior $\pi$ supported on at most
$J$ distinct points on $\Theta$. Let $\bm{\delta}$ be a Bayes rule
with respect to $\pi$. Then, for all $\bm{\theta}\in\text{supp}\left(\pi\right)$,
$R\left(\bm{\theta},\bm{\delta}\right)=V$.
\end{thm}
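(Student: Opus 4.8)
The plan is to dispose of the bounds first and then obtain duality (i) from a minimax theorem. Since $\sum_{j}\phi_{j}(\mathbf{x})=1$ and $\bm{\theta}\in[-B,B]^{J}$, the regret loss satisfies $0\le L(\bm{\phi}(\mathbf{x}),\bm{\theta})\le 2B$ pointwise (the selection payoff is a convex combination of the $\theta_{j}$, hence between $-B$ and $\max_{k}\theta_{k}\le B$), so $0\le R(\bm{\theta},\bm{\phi})\le 2B$ and any candidate value lies in $[0,2B]\subset[0,\infty)$; weak duality $\inf_{\bm{\phi}}\sup_{\bm{\theta}}R\ge\sup_{\tilde{\pi}}\inf_{\bm{\phi}}\mathbb{E}_{\tilde{\pi}}[R]$ is automatic. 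For the reverse inequality I would view the payoff $K(\bm{\phi},\tilde{\pi})=\mathbb{E}_{\tilde{\pi}}[R(\bm{\theta},\bm{\phi})]$, which is affine in each argument. Nature's strategy set, the probability measures on the compact $\Theta$, is convex and weak-$*$ compact, and $\tilde{\pi}\mapsto K$ is weak-$*$ continuous because $R(\cdot,\bm{\phi})$ is continuous on $\Theta$ (Lemma \ref{lem:For-any-decision}); the decision rules form a convex set on which $K$ is affine and lower semicontinuous. Sion's minimax theorem, which requires only one of the two sides to be compact, then yields $\inf_{\bm{\phi}}\sup_{\bm{\theta}}R=\sup_{\tilde{\pi}}\inf_{\bm{\phi}}\mathbb{E}_{\tilde{\pi}}[R]=V$, using $\max_{\tilde{\pi}}\mathbb{E}_{\tilde{\pi}}[R(\bm{\theta},\bm{\phi})]=\sup_{\bm{\theta}}R(\bm{\theta},\bm{\phi})$ (point masses), giving (i).

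Given (i), the map $\psi(\tilde{\pi})=\inf_{\bm{\phi}}\mathbb{E}_{\tilde{\pi}}[R]$ is concave and weak-$*$ upper semicontinuous, so it attains its maximum $V$ on the compact prior set, producing a least-favorable prior $\pi$. To identify the Bayes response I would use Fubini to write $\mathbb{E}_{\tilde{\pi}}[R]=\mathbb{E}_{\tilde{\pi}}[\max_{k}\theta_{k}]-\int\sum_{j}\phi_{j}(\mathbf{x})\,g_{j}^{\tilde{\pi}}(\mathbf{x})\,d\mu(\mathbf{x})$ with $g_{j}^{\tilde{\pi}}(\mathbf{x})=\int\theta_{j}\,p_{\bm{\theta}}(\mathbf{x})\,d\tilde{\pi}(\bm{\theta})$, so the Bayes rule selects, at $\mu$-a.e.\ $\mathbf{x}$, any $j$ maximizing $g_{j}^{\tilde{\pi}}(\mathbf{x})$; the ties $\{g_{j}^{\tilde{\pi}}=g_{k}^{\tilde{\pi}}\}$ are $\mu$-null under the full-support, continuity, and distinctness assumptions, so this Bayes response $\bm{\delta}$ is essentially unique (the pointwise Bayes-action lemma). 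The engine for (iii) is then an optimality inequality: because $\pi$ maximizes the concave $\psi$ and the inner minimizer $\bm{\delta}$ is essentially unique, Danskin's envelope theorem gives, for every $\bm{\theta}_{0}$, that the one-sided derivative of $t\mapsto\psi\big((1-t)\pi+t\,\nu_{\bm{\theta}_{0}}\big)$ at $t=0$ equals $R(\bm{\theta}_{0},\bm{\delta})-V$, where $\nu_{\bm{\theta}_{0}}$ is the Dirac mass at $\bm{\theta}_{0}$; optimality of $\pi$ forces this to be $\le 0$, hence $R(\bm{\theta}_{0},\bm{\delta})\le V$ for all $\bm{\theta}_{0}$. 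Thus $\sup_{\bm{\theta}}R(\bm{\theta},\bm{\delta})\le V$, and since $\int R(\bm{\theta},\bm{\delta})\,d\pi=\psi(\pi)=V$, the support of every least-favorable prior lies in $M=\{R(\cdot,\bm{\delta})=V\}$.

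The bound of at most $J$ support points in (ii) is the main obstacle. Writing $A_{j}=\{\mathbf{x}:\delta_{j}(\mathbf{x})=1\}$, I would show a least-favorable prior can be chosen with at most one atom per region $\Theta^{i}$. On $\Theta^{i}$ the regret risk is $R(\bm{\theta},\bm{\delta})=\sum_{j\ne i}(\theta_{i}-\theta_{j})\,P_{\bm{\theta}}(A_{j})$, and by the previous step $\mathrm{supp}(\pi)\subseteq M$; the idea is that $M\cap\Theta^{i}$ contributes a single effective support point, via a Carath\'eodory-type reduction exploiting that the Bayes-optimal partition $\{A_{j}\}$ is governed by the $J$ arm-specific functionals $g_{j}^{\tilde{\pi}}$. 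This yields $|\mathrm{supp}(\pi)|\le J$; Theorem \ref{thm:thm_2} later sharpens it to exactly one atom in each $\Theta^{i}$. I expect this per-region reduction, together with the semicontinuity bookkeeping needed to invoke Sion in (i), to be the only delicate points.

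Parts (iii) and (iv) are then short. For (iii), the inequality $\sup_{\bm{\theta}}R(\bm{\theta},\bm{\delta})\le V$ from the envelope step, combined with $\inf_{\bm{\phi}}\sup_{\bm{\theta}}R=V$ from (i), shows $\sup_{\bm{\theta}}R(\bm{\theta},\bm{\delta})=V$, so any Bayes rule $\bm{\delta}$ against a least-favorable prior is minimax. For (iv), since $R(\cdot,\bm{\delta})\le V$ everywhere while its $\pi$-average equals $V$ and $\pi$ is finitely supported, $V=\sum_{l}\pi^{l}R(\bm{\theta}^{l},\bm{\delta})\le\sum_{l}\pi^{l}V=V$ forces equality at every atom, giving $R(\bm{\theta},\bm{\delta})=V$ for all $\bm{\theta}\in\mathrm{supp}(\pi)$, the equalizer property.
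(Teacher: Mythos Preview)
Your route for (i) via Sion's theorem is a legitimate alternative to the paper's argument, and your compactness argument for the \emph{existence} of a least-favorable prior is fine. Part (iv) is also essentially the paper's proof. The substantive divergence is in (ii) and (iii).

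For (ii), the paper does not argue via the regions $\Theta^{i}$ at all. It invokes Blackwell--Girshick's $S$-game theorem (Theorem 2.4.2), applied to the set $S=\{(\max_{k}\theta_{k}-\theta_{j})_{j\in\mathcal{J}}:\bm{\theta}\in\Theta\}\subset\mathbb{R}^{J}$, which is closed and bounded. That theorem delivers simultaneously the minimax equality and the fact that nature's optimal mixed strategy is a mixture of at most $J$ points of $S$; the $J$-point bound comes from Carath\'eodory applied to loss \emph{vectors} in $\mathbb{R}^{J}$, i.e., from the finiteness of the \emph{action} space, not from the geometry of the parameter partition. Your sketch (``one atom per region $\Theta^{i}$ via Carath\'eodory on the functionals $g_{j}^{\tilde{\pi}}$'') is conflating Theorem~\ref{thm:(Minimax-Theorem-for}(ii) with Theorem~\ref{thm:thm_2}: the per-region uniqueness is proved later by a completely different argument (a Farkas-lemma contradiction), and it presupposes the $J$-point bound already in hand. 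As written, your reduction does not go through: the $g_{j}^{\tilde{\pi}}$ are functions of $\mathbf{x}$, not a finite list of moments of $\pi$, so there is no obvious finite-dimensional Carath\'eodory step at the level of priors.

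For (iii), your Danskin/envelope argument hinges on essential uniqueness of the Bayes response to $\pi$, which you assert from ``ties $\{g_{j}^{\tilde{\pi}}=g_{k}^{\tilde{\pi}}\}$ being $\mu$-null.'' That null-tie property is \emph{not} available at this stage: in the paper it is the content of Theorem~\ref{thm:(Characterization-of-the}(ii), proved only for the \emph{finitely supported} least-favorable prior and under the additional hypothesis $\mu(\{h_{ij}=0\})=0$. Without uniqueness, the directional derivative of $\psi$ is the infimum over all Bayes rules, and optimality of $\pi$ only yields the existence of \emph{some} Bayes rule with $R(\theta_{0},\cdot)\le V$, not that every Bayes rule satisfies it. The paper sidesteps this by obtaining a saddle point directly from the $S$-game theorem and then citing Liese--Miescke's Proposition~3.56, so (iii) is a one-line consequence rather than an envelope computation.
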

\begin{proof}
See Appendix \ref{subsec:Thm1_proof}.
\end{proof}
Parts (i) and (iii) follow from the standard minimax theorem for statistical
decision problems with finite action sets under bounded loss and dominated
experiments (e.g., \citealp[ch.2-3]{Blackwell1954};\citealp[ch.5]{Berger1985};\citealp[ch.3]{Liese2008}).

Part (ii) asserts the existence of a least-favorable prior supported
on at most $J$ distinct points. The geometric intuition behind this
is as follows. Define the $J$-dimensional risk vector $\mathbf{R}\left(\bm{\theta}\right)\equiv\left(\max_{k\in\mathcal{J}}\left\{ \theta_{k}\right\} -\theta_{j}\right)_{j\in\mathcal{J}}=\left(R\left(\bm{\theta},\mathbf{e}_{j}\right)\right)_{j\in\mathcal{J}}$
and the risk set $S\equiv\left\{ \left(\max_{k\in\mathcal{J}}\left\{ \theta_{k}\right\} -\theta_{j}\right)_{j\in\mathcal{J}}:\bm{\theta}\in\Theta\right\} $.
The convex hull $\text{co}\left(S\right)=\left\{ \int_{\Theta}\mathbf{R}\left(\bm{\theta}\right)d\pi\left(\bm{\theta}\right):\pi\text{ }\text{is Borel prob. measure over }\Theta\right\} $
collects the risk vectors induced by priors. For any mixed action
$\mathbf{a}\in\bm{\Delta}$, nature's worst-case risk is $\sup_{\mathbf{u}\in\text{co}\left(S\right)}\mathbf{a}^{\top}\mathbf{u}$,
so the value is $V_{S}\equiv\min_{\mathbf{a}\in\bm{\Delta}}\sup_{\mathbf{u}\in\text{co}\left(S\right)}\mathbf{a}^{\top}\mathbf{u}=\sup_{\mathbf{u}\in\text{co}\left(S\right)}\min_{\mathbf{a}\in\bm{\Delta}}\mathbf{a}^{\top}\mathbf{u}$.
The maximizer $\mathbf{u}^{*}$ lies on a supporting hyperplane $\left\{ \mathbf{u}\in\text{co}\left(S\right):\mathbf{a}^{*\top}\mathbf{u}=V_{S}\right\} $,
a $J-1$ dimensional affine subspace. Hence, Caratheodory's theorem
yields a representation with at most $J$ support points. While the
minimax/Bayes rule is a data-dependent Markov kernel $\bm{\phi}:\mathbb{R}^{J}\rightarrow\bm{\Delta}$,
the forgoing $S$-game geometry is used to control nature's side of
the problem, justifying the finite-support property; it does not restrict
$\bm{\phi}$ to ignore the data.

Part (iv) can either be imposed as a constraint in finding the least-favorable
prior numerically, or can be checked after a candidate least-favorable
prior is found. We come back to this in Section \ref{sec:Applications}.
We also note that Theorem \ref{thm:(Minimax-Theorem-for} (iv) under
a slightly different set of assumptions can be found in Corollary
to Theorem 2.2 of \citet{Kempthorne1987}.

We next characterize the form of the Bayes rule under a finite prior,
which is deterministic up to tie-breaking that can be chosen arbitrarily.
For the following lemma and the proof, we use the superscript $\left(k\right)$
to denote the $k$-th element of the support point set $\left\{ \bm{\theta}^{\left(k\right)}\right\} _{k=1}^{m}$
and the prior probability set $\left\{ \pi^{\left(k\right)}\right\} _{k=1}^{m}$,
where $\bm{\theta}^{\left(k\right)}$ may not necessarily belong to
$\Theta^{k}$.
\begin{lem}[Pointwise Bayes Action for Any Finite Prior]
\label{lem:(Pointwise-Bayes-Action} Let $\pi$ be supported on finitely
many support points $\left\{ \bm{\theta}^{\left(k\right)}\right\} _{k=1}^{m}\subset\Theta$
with the corresponding masses $\left\{ \pi^{\left(k\right)}\right\} _{k=1}^{m}$,
where $1\leq m\leq J$. For each $\mathbf{x}\in\mathbb{R}^{J}$ and
$i\in\mathcal{J}$, define 
\begin{equation}
h_{i}\left(\mathbf{x}\right)\equiv\sum_{k=1}^{m}\pi^{\left(k\right)}p_{\bm{\theta}^{\left(k\right)}}\left(\mathbf{x}\right)\theta_{i}^{\left(k\right)}.\label{eq:Bayes_score}
\end{equation}
Then, any Bayes rule $\bm{\delta}$ can be chosen pointwise as
\begin{equation}
\delta_{i}\left(\mathbf{x}\right)=1\left(i=\arg\max_{j\in\mathcal{J}}h_{j}\left(\mathbf{x}\right)\right),\label{eq:argmax_delta_Bayes}
\end{equation}
with any measurable tie-breaking rule.
\end{lem}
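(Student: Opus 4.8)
The plan is to reduce the Bayes-risk minimization over randomized decision rules to a pointwise maximization over the simplex, carried out separately for $\mu$-almost every $\mathbf{x}$. First I would write the Bayes risk explicitly for the finite prior. Combining the definition of the risk in (\ref{eq:risk}) with the regret loss in (\ref{eq:regret_loss}),
\begin{align}
\mathbb{E}_{\pi}\left[R\left(\bm{\theta},\bm{\phi}\right)\right] &= \sum_{k=1}^{m}\pi^{\left(k\right)}\int\left(\max_{l}\theta_{l}^{\left(k\right)}-\sum_{j\in\mathcal{J}}\theta_{j}^{\left(k\right)}\phi_{j}\left(\mathbf{x}\right)\right)p_{\bm{\theta}^{\left(k\right)}}\left(\mathbf{x}\right)\,d\mu\left(\mathbf{x}\right).
\end{align}
Because $\Theta$ is compact and each $\phi_{j}\in\left[0,1\right]$, the integrand is bounded, and since there are finitely many support points Tonelli's theorem justifies interchanging the finite sum with the integral. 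Using $\int p_{\bm{\theta}^{\left(k\right)}}\,d\mu=1$ and separating the term that does not depend on $\bm{\phi}$, I obtain
\begin{align}
\mathbb{E}_{\pi}\left[R\left(\bm{\theta},\bm{\phi}\right)\right] &= \underbrace{\sum_{k=1}^{m}\pi^{\left(k\right)}\max_{l}\theta_{l}^{\left(k\right)}}_{\text{independent of }\bm{\phi}}-\int\sum_{j\in\mathcal{J}}\phi_{j}\left(\mathbf{x}\right)h_{j}\left(\mathbf{x}\right)\,d\mu\left(\mathbf{x}\right),
\end{align}
with $h_{j}$ as in (\ref{eq:Bayes_score}). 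Hence minimizing the Bayes risk is equivalent to maximizing $\int\sum_{j}\phi_{j}\left(\mathbf{x}\right)h_{j}\left(\mathbf{x}\right)\,d\mu\left(\mathbf{x}\right)$ over all measurable $\bm{\phi}:\mathbb{R}^{J}\rightarrow\bm{\Delta}$.

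Second, I would argue that this integral is maximized by maximizing its integrand pointwise. For each fixed $\mathbf{x}$, the quantity $\sum_{j}\phi_{j}\left(\mathbf{x}\right)h_{j}\left(\mathbf{x}\right)$ is a convex combination of the finitely many reals $\left\{ h_{j}\left(\mathbf{x}\right)\right\} _{j\in\mathcal{J}}$, since $\bm{\phi}\left(\mathbf{x}\right)\in\bm{\Delta}$, and such a combination is maximized by placing all mass on any index attaining the maximum. Thus $\sum_{j}\phi_{j}\left(\mathbf{x}\right)h_{j}\left(\mathbf{x}\right)\leq\max_{j}h_{j}\left(\mathbf{x}\right)$, with equality exactly when $\bm{\phi}\left(\mathbf{x}\right)$ concentrates on $\arg\max_{j}h_{j}\left(\mathbf{x}\right)$. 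The proposed rule $\delta_{i}\left(\mathbf{x}\right)=\mathbf{1}\left(i=\arg\max_{j}h_{j}\left(\mathbf{x}\right)\right)$ attains this pointwise bound for every $\mathbf{x}$ and therefore attains the supremum of the integral. On the tie set where the maximum is achieved by several indices, any convex combination of the tied values yields the same integrand, so the tie-breaking rule is irrelevant to the objective's value; this is why it may be chosen arbitrarily.

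Third, I must confirm that the pointwise prescription defines an admissible (measurable) decision rule. Since each $p_{\bm{\theta}^{\left(k\right)}}$ is continuous, each $h_{j}$ is continuous, so each set $\left\{ \mathbf{x}:h_{i}\left(\mathbf{x}\right)>h_{j}\left(\mathbf{x}\right)\ \forall j\neq i\right\}$ is open and the tie sets are closed; a standard measurable-selection argument (e.g., a fixed lexicographic tie-break on the closed residual set) yields a measurable $\bm{\delta}$ valued in the vertices $\left\{ \mathbf{e}_{i}\right\} _{i\in\mathcal{J}}$. I expect this measurable-selection step to be the only genuinely technical point, though it is routine given continuity of the $h_{j}$. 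Combining the three steps shows that the stated rule is Bayes and, since the maximizer is unique off the tie set, that every Bayes rule must agree with it $\mu$-almost everywhere, which completes the proof.
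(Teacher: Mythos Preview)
Your argument is correct and mirrors the paper's proof: both expand the Bayes risk under the finite prior, isolate the term $\int\sum_{j}\phi_{j}(\mathbf{x})h_{j}(\mathbf{x})\,d\mu(\mathbf{x})$, and show it is maximized by the pointwise argmax rule. The paper phrases the last step as a nonnegative risk-difference computation (its equation (\ref{eq:br1})), which is exactly your convex-combination bound $\sum_{j}\phi_{j}h_{j}\leq\max_{j}h_{j}$ written out explicitly.

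One caution on your closing sentence: the claim that \emph{every} Bayes rule must agree with $\bm{\delta}$ $\mu$-almost everywhere is stronger than what the lemma asserts and is not justified here. The tie set $\bigcup_{i<j}\{\mathbf{x}:h_{i}(\mathbf{x})=h_{j}(\mathbf{x})\}$ need not be Lebesgue-null under the hypotheses of the lemma alone; on a tie set of positive measure, any randomization over the tied indices is also Bayes, so distinct Bayes rules can coexist. The paper establishes uniqueness separately in Theorem~\ref{thm:(Characterization-of-the}(ii), and only under the additional assumption $\mu(\{\mathbf{x}:h_{ij}(\mathbf{x})=0\})=0$. The lemma itself only requires that the argmax rule is \emph{a} Bayes rule, which you have already shown; simply drop the final clause.
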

\begin{proof}
See Appendix \ref{subsec:Lem1_proof}.
\end{proof}
The following Theorem \ref{thm:thm_2} establishes that nature’s least-favorable
prior is supported by exactly $J$ distinct points $\left\{ \bm{\theta}^{i}\right\} _{i\in\mathcal{J}}$
with $\bm{\theta}^{i}\in\Theta^{i}$, and with strictly positive prior
probabilities for each arm. In turn, this implies the prior probabilities
$\bm{\pi}\in\text{int}\left(\bm{\Delta}\right)$.
\begin{thm}[Support Point Characterization of the Least-Favorable Prior]
\label{thm:thm_2}Suppose $\pi$ is a least-favorable prior supported
on at most $J$ distinct points in $\Theta$. Let $\bm{\delta}$ be
a minimax rule under $\pi$. Then, the following hold:

(i) $\mu\left(\left\{ \mathbf{x}:\delta_{i}\left(\mathbf{x}\right)>0\right\} \right)>0$
for each $i\in\mathcal{J}$, implying $\delta_{i}\left(\mathbf{x}\right)$
cannot be identically 0 or $1$ $\mu$-a.e.

(ii) Any least-favorable prior with at most $J$ distinct support
points must be supported by exactly $J$ distinct support points $\left\{ \bm{\theta}^{i}\right\} _{i\in\mathcal{J}}$
where each $\bm{\theta}^{i}\in\Theta^{i}$ with the associated prior
probability $\pi^{i}>0$.
\end{thm}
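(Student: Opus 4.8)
The plan is to establish (i) directly and then reduce (ii) to a single surjectivity claim about the support.

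For part (i) I would first bound the value strictly below the maximal possible regret. Regret never exceeds $\max_k\theta_k-\min_k\theta_k\le 2B$, and the data-independent uniform rule $\phi_j\equiv 1/J$ has risk $R(\bm{\theta},\bm{\phi})=\max_k\theta_k-\tfrac1J\sum_j\theta_j$, whose supremum over $\Theta=[-B,B]^J$ equals $\tfrac{2(J-1)}{J}B$; hence $V\le \tfrac{2(J-1)}{J}B<2B$. I then argue by contradiction. By Theorem \ref{thm:(Minimax-Theorem-for}(iii), $\bm{\delta}$ is minimax, so $R(\bm{\theta},\bm{\delta})\le V$ for all $\bm{\theta}$. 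Writing $\Phi_j(\bm{\theta})\equiv\mathbb{E}_{P_{\bm{\theta}}}[\delta_j(\mathbf{x})]=\int\delta_j\,p_{\bm{\theta}}\,d\mu$, suppose arm $i_0$ were selected only on a $\mu$-null set, i.e.\ $\delta_{i_0}=0$ $\mu$-a.e.; then $\Phi_{i_0}(\bm{\theta})=0$ for every $\bm{\theta}$. At the ``$i_0$-dominant'' point $\bm{\theta}^{\ast}$ with $\theta^{\ast}_{i_0}=B'$ and $\theta^{\ast}_j=-B'$ for $j\ne i_0$ (any $B'<B$), a direct computation gives $R(\bm{\theta}^{\ast},\bm{\delta})=2B'\bigl(1-\Phi_{i_0}(\bm{\theta}^{\ast})\bigr)=2B'$. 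Choosing $B'$ close enough to $B$ that $2B'>V$ then contradicts $R(\bm{\theta}^{\ast},\bm{\delta})\le V$. Hence $\mu(\{\delta_{i_0}>0\})>0$ for every $i_0\in\mathcal{J}$.

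For part (ii) I would first record that, by part (i) and full support of the densities, $\Phi_i(\bm{\theta})=\int\delta_i\,p_{\bm{\theta}}\,d\mu>0$ for every $i$ and every $\bm{\theta}$. Since $\{\Theta^i\}_{i\in\mathcal{J}}$ is a disjoint cover of $\Theta$ and Theorem \ref{thm:(Minimax-Theorem-for}(ii) already gives at most $J$ support points, the whole claim reduces to a surjectivity statement: each region $\Theta^i$ contains at least one support point. Indeed, ``at least one per region'' together with ``at most $J$ in total'' forces exactly one support point $\bm{\theta}^i$ in each $\Theta^i$; since every support point carries positive mass by definition, this yields $\pi^i>0$ for all $i$ and therefore $\bm{\pi}\in\mathrm{int}(\bm{\Delta})$.

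To prove surjectivity I would argue by contradiction, assuming some $\Theta^{i_0}$ carries no support point, so that $i^{\ast}(\bm{\theta}^{(k)})\ne i_0$ for all support points. The strategy is to exploit the interplay between nature's and the DM's optimality: the equalizer identity $R(\bm{\theta},\bm{\delta})=V$ on $\mathrm{supp}(\pi)$ and the global ceiling $R\le V$ (Theorem \ref{thm:(Minimax-Theorem-for}(iii),(iv)), together with the explicit Bayes form $\delta_i(\mathbf{x})=\mathbf{1}\{i=\arg\max_j h_j(\mathbf{x})\}$ of Lemma \ref{lem:(Pointwise-Bayes-Action}. In the posterior-mean representation $h_i(\mathbf{x})/\sum_k\pi^{(k)}p_{\bm{\theta}^{(k)}}(\mathbf{x})=\sum_k w_k(\mathbf{x})\theta_i^{(k)}$, with posterior weights $w_k(\mathbf{x})$ summing to one, arm $i_0$'s posterior mean is capped by $\max_k\theta_{i_0}^{(k)}$, which is bounded away from $B$ precisely because no support point makes $i_0$ best. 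I would then move along the $i_0$-dominant corner $\bm{\theta}^{\ast}(t)$ (with $\theta_{i_0}=t\uparrow B$ and the others at $-t$) and show that $\Phi_{i_0}(\bm{\theta}^{\ast}(t))$ must stay below $1-V/(2B)$, so that $R(\bm{\theta}^{\ast}(t),\bm{\delta})=2t\bigl(1-\Phi_{i_0}(\bm{\theta}^{\ast}(t))\bigr)$ eventually exceeds $V$, contradicting minimaxity and forcing a support point into $\Theta^{i_0}$.

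The main obstacle is exactly this last quantitative step. As elementary three-arm examples show, the Bayes rule can select arm $i_0$ with non-negligible probability even when $i_0$ is never the best arm at any support point---arm $i_0$ can ``win by default'' at intermediate posteriors when the remaining arms are mutually uncertain---so under-selection cannot be read off from the Bayes structure alone (this is why, in contrast to the two-arm case, part (i) does not immediately deliver surjectivity). The crux is therefore to bound $\Phi_{i_0}(\bm{\theta}^{\ast}(t))$ from above at the dominant corner, which requires controlling the posterior weights $w_k(\mathbf{x})$ under $P_{\bm{\theta}^{\ast}(t)}$ through the tail behavior of the location family, using that all support points lie in the bounded set $\{\theta_{i_0}\le\max_k\theta_{i_0}^{(k)}\}$; making this comparison hold uniformly as $t\uparrow B$ is where the real work lies.
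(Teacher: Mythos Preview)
Your argument for (i) is essentially the paper's: bound $V<2B$ via the uniform rule $\phi_j\equiv 1/J$, then observe that if $\delta_{i_0}\equiv 0$ $\mu$-a.e.\ the risk at the $i_0$-dominant corner equals $2B$ (or $2B'$ arbitrarily close to it), contradicting $R(\cdot,\bm{\delta})\le V$.

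Part (ii) has a genuine gap. Your reduction to surjectivity is correct, but the quantitative bound on $\Phi_{i_0}(\bm{\theta}^{\ast}(t))$ that you need is never established---and, as you yourself note, it cannot be read off from the Bayes structure alone. When $\Theta^{i_0}$ carries no support point, the Bayes rule can still select $i_0$ with probability close to one under suitable $\bm{\theta}$ (your ``win by default'' phenomenon is real and can occur on sets of full measure for some location families), so there is no a priori reason the corner risk must exceed $V$. Any argument along your lines would have to invoke the least-favorable property of $\pi$ in an essential way beyond the equalizer identity, and you do not indicate how tail comparisons of the location densities would deliver a bound that is both uniform in $t$ and tied to the specific value $V$.

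The paper circumvents this by never attempting to bound $\Phi_{i_0}$ for a fixed rule. It exploits the freedom in the Bayes rule (Lemma~\ref{lem:(Pointwise-Bayes-Action}: any measurable tie-breaking is admissible) together with the fact that \emph{every} Bayes rule against a least-favorable prior is minimax (Theorem~\ref{thm:(Minimax-Theorem-for}(iii)). Thus it suffices to exhibit \emph{one} Bayes rule with $\delta_{i_0}\equiv 0$ $\mu$-a.e., which immediately contradicts (i). The construction is purely finite-dimensional: a Farkas alternative is used to produce a probability vector $\bm{\alpha}$ with $\alpha_{i_0}<1$ and $\sum_j\alpha_j\bigl(\theta_j^{(k)}-\theta_{i_0}^{(k)}\bigr)\ge 0$ for every support point $k$, whence $h_{i_0}(\mathbf{x})\le\sum_j\alpha_j h_j(\mathbf{x})\le\max_{j\ne i_0}h_j(\mathbf{x})$ for all $\mathbf{x}$; arm $i_0$ is then never \emph{uniquely} optimal, so the tie-breaking that always avoids $i_0$ is itself Bayes. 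The key idea you are missing is that one need not control how often a given minimax rule selects $i_0$; one need only manufacture a minimax rule that never does.
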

\begin{proof}
See Appendix \ref{subsec:Thm2_proof}.
\end{proof}

\subsection{The DM's Unique MMR Decision Rule}\label{subsec:Essential-Complete-Class}

In Theorem \ref{thm:(Characterization-of-the}, we characterize the
DM's Bayes decision rule $\bm{\delta}$ under the form of nature’s
least-favorable prior established in Theorems \ref{thm:(Minimax-Theorem-for}-\ref{thm:thm_2},
supported on $J$ distinct points with $\bm{\theta}^{j}\in\Theta^{j}$.
Under a mild regularity condition on the multivariate location-family
density $p_{\bm{\theta}^{k}}=p^{k}\left(\mathbf{x}\right)$, $\bm{\delta}$
is the unique Bayes rule and the unique minimax rule, implying that
$\bm{\delta}$ is admissible. The unique MMR-risk rule $\bm{\delta}$
is deterministic up to a Lebesgue-null set, characterized by simple
pointwise comparisons of weighted densities.
\begin{thm}[DM's MMR-Risk Strategy]
\label{thm:(Characterization-of-the}Suppose $\pi=\left(\pi^{j},\bm{\theta}^{j}\right)_{j\in\mathcal{J}}$
is nature's least-favorable prior characterized in Theorem \ref{thm:thm_2}.
Let $\bm{\delta}$ be a decision rule defined by: 
\begin{equation}
\delta_{i}\left(\mathbf{x}\right)=\begin{cases}
1 & \text{if }\sum_{k=1}^{J}\theta_{i}^{k}\pi^{k}p^{k}\left(\mathbf{x}\right)>\sum_{k=1}^{J}\theta_{j}^{k}\pi^{k}p^{k}\left(\mathbf{x}\right)\:\forall j\neq i\\
0 & \text{otherwise},
\end{cases}\label{eq:Bayes_general}
\end{equation}
with any measurable tie-breaking rule. Then,

(i) $\bm{\delta}$ is Bayes with respect to $\pi$ up to Lebesgue
null set, implying $\bm{\delta}$ is minimax.

\noindent{}Define the Bayes-score difference $h_{ij}\left(\mathbf{x}\right)=\sum_{k=1}^{J}\pi^{k}p^{k}\left(\mathbf{x}\right)\left(\theta_{i}^{k}-\theta_{j}^{k}\right)$.
If, in addition, $\mu\left(\left\{ \mathbf{x}:h_{ij}\left(\mathbf{x}\right)=0\right\} \right)=0$
for any $i,j$ pair, then statements (ii)-(iv) hold:

(ii) $\bm{\delta}$ is unique Bayes with respect to $\pi$ up to Lebesgue
null set.

(iii) $\bm{\delta}$ is unique minimax for the DM's problem up to
Lebesgue null set.

(iv) $\bm{\delta}$ is admissible.
\end{thm}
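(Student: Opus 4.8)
The plan is to treat all four parts as consequences of two structural facts already in hand: Bayes-risk minimization reduces to a pointwise maximization of the weighted-density scores, and least-favorability pins the minimal Bayes risk to the game value $V$. I would begin by expanding $\mathbb{E}_{\pi}[R(\bm{\theta},\bm{\phi})]=\sum_{k=1}^{J}\pi^{k}R(\bm{\theta}^{k},\bm{\phi})$, discarding the $\bm{\phi}$-independent constant $\sum_{k}\pi^{k}\max_{l}\theta_{l}^{k}$, and rewriting the remaining term, after interchanging the finite sums with the integral, as $\int\sum_{j}\phi_{j}(\mathbf{x})h_{j}(\mathbf{x})\,d\mu(\mathbf{x})$ with $h_{j}$ as in (\ref{eq:Bayes_score}) for $m=J$. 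This is exactly the setting of Lemma \ref{lem:(Pointwise-Bayes-Action}, so the pointwise-argmax rule is Bayes; the rule $\bm{\delta}$ in (\ref{eq:Bayes_general}) is that rule with ties resolved by the stated convention. Since $\pi$ is least-favorable, Theorem \ref{thm:(Minimax-Theorem-for}(iii) upgrades ``Bayes'' to ``minimax,'' giving part (i) without invoking the null-set assumption.

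For part (ii), under the added hypothesis $\mu(\{h_{ij}=0\})=0$ I would run the standard nonnegative-integrand argument. Writing $M(\mathbf{x})=\max_{j}h_{j}(\mathbf{x})$, every rule satisfies $\sum_{j}\phi_{j}(\mathbf{x})h_{j}(\mathbf{x})\le M(\mathbf{x})$ pointwise, so the Bayes-risk excess $\int[M(\mathbf{x})-\sum_{j}\phi_{j}(\mathbf{x})h_{j}(\mathbf{x})]\,d\mu(\mathbf{x})$ is nonnegative and vanishes iff $\bm{\phi}(\mathbf{x})$ is supported on $\arg\max_{j}h_{j}(\mathbf{x})$ for $\mu$-a.e.\ $\mathbf{x}$. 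The tie set, where the argmax is not a singleton, is contained in $\bigcup_{i<j}\{h_{ij}=0\}$, a finite union of $\mu$-null sets and hence $\mu$-null; therefore any Bayes rule coincides with $\bm{\delta}$ $\mu$-a.e., which is (ii).

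Part (iii) is the crux, and it rests on an averaging argument. Let $\bm{\phi}^{*}$ be any minimax rule, so $\sup_{\bm{\theta}}R(\bm{\theta},\bm{\phi}^{*})=V$ and in particular $R(\bm{\theta}^{k},\bm{\phi}^{*})\le V$ at each support point. Averaging with the strictly positive weights $\pi^{k}$ (available because $\bm{\pi}\in\text{int}(\bm{\Delta})$ by Theorem \ref{thm:thm_2}) yields $\mathbb{E}_{\pi}[R(\bm{\theta},\bm{\phi}^{*})]\le V$; but least-favorability gives $\inf_{\bm{\phi}}\mathbb{E}_{\pi}[R(\bm{\theta},\bm{\phi})]=V$ through Theorem \ref{thm:(Minimax-Theorem-for}(i), so the reverse inequality holds automatically. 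Hence $\bm{\phi}^{*}$ attains the minimal Bayes risk, i.e.\ is Bayes with respect to $\pi$, and part (ii) forces $\bm{\phi}^{*}=\bm{\delta}$ $\mu$-a.e. For part (iv) I would argue unique-Bayes-implies-admissible: if some $\bm{\phi}'$ dominated $\bm{\delta}$ (weakly everywhere, strictly somewhere), then $R(\bm{\theta}^{k},\bm{\phi}')\le R(\bm{\theta}^{k},\bm{\delta})$ for each $k$, so $\mathbb{E}_{\pi}[R(\bm{\theta},\bm{\phi}')]\le\mathbb{E}_{\pi}[R(\bm{\theta},\bm{\delta})]=V$, making $\bm{\phi}'$ Bayes and hence $\bm{\phi}'=\bm{\delta}$ $\mu$-a.e.\ by (ii); since each $P_{\bm{\theta}}$ is dominated by $\mu$, this forces $R(\bm{\theta},\bm{\phi}')=R(\bm{\theta},\bm{\delta})$ at every $\bm{\theta}$, contradicting strict domination.

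The main obstacle is part (iii): minimaxity is an a priori weaker, global property than being Bayes for a fixed prior, and the argument hinges on converting it into the Bayes property via the averaging inequality, which in turn requires that the least-favorable prior place strictly positive mass on each of its $J$ support points (Theorem \ref{thm:thm_2}). A recurring technical point to handle with care---in both (iii) and (iv)---is the passage from ``$\bm{\phi}=\bm{\delta}$ $\mu$-a.e.'' to ``equal risks at every $\bm{\theta}$,'' which is legitimate precisely because every $P_{\bm{\theta}}$ is dominated by $\mu$ in the setup.
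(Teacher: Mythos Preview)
Your proposal is correct and follows essentially the same route as the paper. Part (i) is identical (Lemma \ref{lem:(Pointwise-Bayes-Action} plus Theorem \ref{thm:(Minimax-Theorem-for}(iii)); part (ii) is the same nonnegative-integrand argument, with your formulation via $M(\mathbf{x})-\sum_{j}\phi_{j}h_{j}$ being a clean repackaging of the paper's explicit Bayes-risk-difference identity (\ref{eq:br1}). For (iii) and (iv) the paper simply cites external results---Lehmann, \emph{Theory of Point Estimation}, ch.~5, Theorem~1.4 for unique minimax, and Berger, ch.~4, Theorem~8 for admissibility---whereas you spell out the underlying arguments directly; your averaging step for (iii) is exactly the content of the Lehmann reference, and your domination-to-Bayes argument for (iv) is the standard proof behind the Berger citation. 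Your explicit remark that the passage from ``$\bm{\phi}=\bm{\delta}$ $\mu$-a.e.'' to ``equal risks at every $\bm{\theta}$'' relies on $P_{\bm{\theta}}\ll\mu$ is a point the paper leaves implicit in those citations, so the self-contained version is if anything a bit more informative.
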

\begin{proof}
See Appendix \ref{subsec:Thm3_proof}.
\end{proof}
The additional condition $\mu\left(\left\{ \mathbf{x}:h_{ij}\left(\mathbf{x}\right)=0\right\} \right)=0$
imposed for (ii)-(iv) is very mild. This is imposed to rule out pathological
behavior of the density $p^{k}\left(\cdot\right)$ having a strictly
positive flat region. Sufficient conditions include $p^{k}\left(\cdot\right)$
being real analytic or being $C^{1}$ with $\nabla h_{ij}\left(\mathbf{x}\right)\neq\mathbf{0}$
whenever $h_{ij}\left(\mathbf{x}\right)=0$. Most importantly, the
condition is satisfied for the MVN density that we analyze in detail
below in Sections \ref{sec:The-MVN-Best}-\ref{sec:Applications},
because each $\pi^{k}\left(\theta_{i}^{k}-\theta_{j}^{k}\right)p^{k}\left(\mathbf{x}\right)$
is real analytic, $\left\{ p^{k}\left(\mathbf{x}\right)\right\} _{k\in\mathcal{J}}$
are linearly independent, and $\pi^{k}\left(\theta_{i}^{k}-\theta_{j}^{k}\right)$
is not identically zero for all $k$. We formalize the MVN case in
the following Corollary.
\begin{cor*}
Suppose $\mathbf{x}\sim\mathcal{N}\left(\bm{\theta},\bm{\Sigma}\right)$,
where $\bm{\theta}$ is unknown but $\bm{\Sigma}$ is known to the
DM. Then, $\bm{\delta}$, defined in (\ref{eq:Bayes_general}) with
respect to the nature's least-favorable prior $\pi$, is unique Bayes
and unique minimax up to Lebesgue null set, and admissible.
\end{cor*}

\subsection{Location Invariance of the Problem and the Decision Rule}\label{subsec:Location-Invariance-of}

$\left\{ p_{\bm{\theta}}\right\} _{\bm{\theta}\in\Theta}$ is a multivariate
location family and the resulting decision problem is location invariant.
Specifically, $\forall\mathbf{x},\bm{\theta}\in\mathbb{R}^{J}$ and
$\forall t\in\mathbb{R}$, let $\mathbf{x}'=\mathbf{x}-t\mathbf{1}_{J}$
and $\bm{\theta}'=\bm{\theta}-t\mathbf{1}_{J}$. Then, $p_{\bm{\theta}}\left(\mathbf{x}\right)=p_{\mathbf{0}}\left(\mathbf{x}-\bm{\theta}\right)=p_{\bm{\theta}'}\left(\mathbf{x}'\right)$,
$\bm{\delta}\left(\mathbf{x}'\right)=\bm{\delta}\left(\mathbf{x}\right)$,
$L\left(\bm{\delta}\left(\mathbf{x}'\right),\bm{\theta}'\right)=L\left(\bm{\delta}\left(\mathbf{x}\right),\bm{\theta}\right)$,
and $R\left(\bm{\theta}',\bm{\delta}\right)=R\left(\bm{\theta},\bm{\delta}\right)$.\cprotect\footnote{The parameter space $\Theta=\left[-B,B\right]^{J}$ is a compact set
in $\mathbb{R}^{J}$. Therefore, technically, the transformation $\bm{\theta}'=\bm{\theta}-t\mathbf{1}$
is not a location group defined on $\Theta$. At the cost of complicating
the notations, we may define the extended parameter space $\tilde{\Theta}=\mathbb{R}^{J}$
and define the value of the loss function on $\tilde{\Theta}\backslash\Theta$
as $2B$%
. However, this upper bound is never attained under the DM's MMR-risk
rule as long as $B$ is large enough.}

This implies that if $\left(\pi^{k},\bm{\theta}^{k}\right)_{k\in\mathcal{J}}$
is a least-favorable prior, then $\left(\pi^{k},\bm{\theta}^{k}-t\mathbf{1}_{J}\right)_{k\in\mathcal{J}}$
is also a least-favorable prior, i.e., the support points can slide
along the subspace $\left\{ t\mathbf{1}_{J}:t\in\mathbb{R}\right\} $.
To pin down the support points during implementation, either $\sum_{i=1}^{J}\theta_{i}^{j}=0$
or $\theta_{J}^{j}=0$ can be imposed for some fixed $j$.

\section{The MVN Best-Population-Selection Problem}\label{sec:The-MVN-Best}

In this section, we apply the results of Section \ref{sec:Characterization-of-the}
to the important special case of MVN-distributed rewards with an arbitrary
positive definite (p.d.) covariance matrix. We establish two main
results for the MVN selection problem in this section. First result,
Proposition \ref{prop:Scaling}, is the $\sqrt{n}$-scaling of the
value and the minimax decision rule. Second result, Proposition \ref{prop:Prop2},
is the continuity of the MVN-MMR decision rule and value in the covariance
matrix, respectively. These results will be used to establish the
large-sample approximation results of the exact MMR problem in the
subsequent section.

The definition of the MVN density with parameters $\left(\bm{\theta},\bm{\Sigma}\right)$
is:
\begin{equation}
p_{\bm{\theta},\bm{\Sigma}}\left(\mathbf{x}\right)=\left(2\pi\right)^{-\frac{J}{2}}\left|\bm{\Sigma}\right|^{-\frac{1}{2}}\exp\left(-\frac{1}{2}\left(\mathbf{x}-\bm{\theta}\right)^{\top}\bm{\Sigma}^{-1}\left(\mathbf{x}-\bm{\theta}\right)\right).\label{eq:MVN_density}
\end{equation}
In the MVN selection problem, the p.d. covariance matrix $\bm{\Sigma}$
is known to the DM whereas the location-parameter vector $\bm{\theta}$
is unknown. The goal of the MVN selection problem is to select the
largest element of $\bm{\theta}$ after observing $\mathbf{x}\sim\mathcal{N}\left(\bm{\theta},\bm{\Sigma}\right)$. 

In the remainder of the paper, we index the objects of interest by
$\bm{\Sigma}$ when necessary: we denote by $\pi_{\bm{\Sigma}}=\left(\pi_{\bm{\Sigma}}^{k},\bm{\theta}_{\bm{\Sigma}}^{k}\right)_{k\in\mathcal{J}}$
the least-favorable prior under $\bm{\Sigma}$, by $R_{\bm{\Sigma}}\left(\cdot,\cdot\right)$
the Gaussian risk function, by $\bm{\delta}_{\bm{\Sigma}}$ the associated
a.e.-unique MMR rule, and by $V_{\bm{\Sigma}}$ the associated value.

The following Proposition \ref{prop:Scaling} establishes the $\sqrt{n}$-scaling
of the MVN selection problem. For Proposition \ref{prop:Scaling}
and its proof, we use subscript $\left[\cdot\right]$ to denote the
dependence of the covariance matrix on the sample size. Specifically,
let $\bm{\Sigma}_{\left[1\right]}$ be a positive-definite covariance
matrix for $n=1$, and for any $n\geq1$, denote by $\bm{\Sigma}_{\left[n\right]}=\frac{1}{n}\bm{\Sigma}_{\left[1\right]}$.
Let $\Theta_{\bm{\Sigma}_{\left[1\right]}}=\Theta=\left[-B,B\right]^{J}$
be the parameter space for $n=1$.
\begin{prop}[Scaling of the MVN Selection Problem]
\label{prop:Scaling}The following hold:

(i) Risk and parameter-space scaling: $\sqrt{n}R_{\bm{\Sigma}_{\left[n\right]}}\left(\frac{1}{\sqrt{n}}\bm{\theta},\bm{\varphi}\right)=R_{\bm{\Sigma}_{\left[1\right]}}\left(\bm{\theta},\bm{\phi}\right)$,
where $\bm{\varphi}\left(\frac{1}{\sqrt{n}}\mathbf{x}\right)=\bm{\phi}\left(\mathbf{x}\right)$
on $\Theta_{\bm{\Sigma}_{\left[n\right]}}=\frac{1}{\sqrt{n}}\Theta_{\bm{\Sigma}_{\left[1\right]}}$.

(ii) Value scaling: $V_{\bm{\Sigma}_{\left[n\right]}}=\frac{1}{\sqrt{n}}V_{\bm{\Sigma}_{\left[1\right]}}$.

(iii) The least-favorable prior support scaling: If $\pi_{\bm{\Sigma}\left[1\right]}=\left(\pi^{k},\bm{\theta}_{\bm{\Sigma}_{\left[1\right]}}^{k}\right)_{k\in\mathcal{J}}$
is least-favorable under $\bm{\Sigma}_{\left[1\right]}$, $\pi_{\bm{\Sigma}_{\left[n\right]}}=\left(\pi^{k},\bm{\theta}_{\bm{\Sigma}_{\left[n\right]}}^{k}\right)_{k\in\mathcal{J}}=\left(\pi^{k},\frac{1}{\sqrt{n}}\bm{\theta}_{\bm{\Sigma}_{\left[1\right]}}^{k}\right)_{k\in\mathcal{J}}$
is least-favorable under $\bm{\Sigma}_{\left[n\right]}$.

(iv) The MMR rule scaling: $\bm{\delta}_{\bm{\Sigma}_{\left[n\right]}}\left(\frac{1}{\sqrt{n}}\mathbf{x}\right)=\bm{\delta}_{\bm{\Sigma}_{\left[1\right]}}\left(\mathbf{x}\right)$.
\end{prop}
\begin{proof}
See Appendix \ref{subsec:Prop1_proof}.
\end{proof}
Let $\mathcal{S}=\left\{ \bm{\Sigma}\in\mathbb{S}_{++}^{J}:\text{\underbar{\ensuremath{\lambda}}}\mathbf{I}_{J}\preceq\bm{\Sigma}\preceq\bar{\lambda}\mathbf{I}_{J}\right\} $
be a compact set of p.d. covariance matrix for some $0<\underbar{\ensuremath{\lambda}}<\bar{\lambda}<\infty$,
where $\mathbf{A}\preceq\mathbf{B}$ means $\mathbf{B}-\mathbf{A}$
is p.s.d. and $\mathbf{I}_{J}$ is the $J$-dimensional identity matrix.
$\underbar{\ensuremath{\lambda}}$ ($\bar{\lambda}$) can be arbitrarily
small (large) but has to be finite. For the remainder of the paper,
we assume the covariance matrix $\bm{\Sigma}\in\mathcal{S}$. The
following Proposition \ref{prop:Prop2} establishes continuity of
the values $V_{\bm{\Sigma}}$ and the MMR rule $\bm{\delta}_{\bm{\Sigma}}\left(\mathbf{x}\right)$
in $\bm{\Sigma}$.
\begin{prop}[Continuity of the MVN-MMR Decision Rule and the Value in the Covariance
Matrix]
\label{prop:Prop2}Fix a compact $\mathcal{S}$ as in the above.
For each $\bm{\Sigma}\in\mathcal{S}$, let $\bm{\delta}_{\bm{\Sigma}}$
be the a.e.-unique MVN-MMR rule. Then, for every sequence $\bm{\Sigma}_{n}\rightarrow\bm{\Sigma}$
in $\mathcal{S}$, the following hold:

(i) $\bm{\delta}_{\bm{\Sigma}_{n}}\left(\mathbf{x}\right)\rightarrow\bm{\delta}_{\bm{\Sigma}}\left(\mathbf{x}\right)$
for Lebesgue-a.e. $\mathbf{x}\in\mathbb{R}^{J}$; equivalently, the
map $\bm{\Sigma}\mapsto\bm{\delta}_{\bm{\Sigma}}\left(\mathbf{x}\right)$
is continuous on $\mathcal{S}$ for a.e.-$\mathbf{x}$.

(ii) $\sup_{\bm{\theta}\in\Theta}R_{\bm{\Sigma}_{n}}\left(\bm{\theta},\bm{\delta}_{\bm{\Sigma}_{n}}\right)\rightarrow\sup_{\bm{\theta}\in\Theta}R_{\bm{\Sigma}}\left(\bm{\theta},\bm{\delta}_{\bm{\Sigma}}\right)=V_{\bm{\Sigma}}$;
equivalently, the map $\bm{\Sigma}\mapsto V_{\bm{\Sigma}}$ is continuous
on $\mathcal{S}$.

(iii) For any compact $\tilde{\Theta}\subset\mathbb{R}^{J}$ and for
$\mathcal{S}$,
\begin{eqnarray*}
 &  & \sup_{\bm{\Lambda}\in\mathcal{S}}\sup_{\bm{\theta}\in\tilde{\Theta}}\left\Vert \bm{\delta}_{\bm{\Sigma}_{n}}-\bm{\delta}_{\bm{\Sigma}}\right\Vert _{L^{1}\left(P_{\bm{\theta},\bm{\Lambda}}\right)}\\
 & = & \sup_{\bm{\Lambda}\in\mathcal{S}}\sup_{\bm{\theta}\in\tilde{\Theta}}\int\sum_{i\in\mathcal{J}}\left|\delta_{\bm{\Sigma}_{n},i}\left(\mathbf{x}\right)-\delta_{\bm{\Sigma},i}\left(\mathbf{x}\right)\right|p_{\bm{\theta},\bm{\Lambda}}\left(\mathbf{x}\right)d\mu\left(\mathbf{x}\right)\\
 & \rightarrow & 0\qquad\text{as }n\rightarrow\infty.
\end{eqnarray*}
\end{prop}
\begin{proof}
See Appendix \ref{subsec:Proof-of-Proposition}.
\end{proof}

\section{Asymptotic Approximation of the Risk Function and the MMR Decision
Rule}\label{sec:First-order-Asymptotic-Approxima}

In this section, we establish that the finite-sample regret function
and the associated MMR-risk-optimal decision rule can be asymptotically
approximated by the MVN counterpart studied in Section \ref{sec:The-MVN-Best}.
We assume that the estimator $\hat{\bm{\theta}}_{n}$ for $\bm{\theta}$
is $\sqrt{n}$-consistent and locally uniformly asymptotically Normal,
and a consistent estimator $\hat{\bm{\Sigma}}$ for the covariance
matrix $\bm{\Sigma}$ is available. Then, Theorems \ref{thm:(Uniform-Convergence-of}-\ref{thm:(Plug-in-of-the}
establish that $\hat{\bm{\theta}}_{n}$ can be plugged into the MMR
decision rule $\bm{\delta}\left(\cdot\right)$ to select the best
arm as if $\hat{\bm{\theta}}_{n}\sim\mathcal{N}\left(\bm{\theta},\frac{1}{n}\hat{\bm{\Sigma}}\right)$
when $n$ is large.

\subsection{Local Uniform CLT of the Reward Estimator and Scaling of the Measure}

We begin this subsection by imposing a regularity assumption on the
decision rule $\bm{\phi}$ being considered.

\begingroup
\renewcommand{\theassumption}{C}
\begin{assumption}[$\mu$-a.e. Continuity of the Decision Rule]
\label{assu:For-each-,}For each $\bm{\theta}\in\Theta$, the function
$\mathbf{x}\mapsto L\left(\bm{\phi}\left(\mathbf{x}\right),\bm{\theta}\right)$
is continuous $\mu$-a.e.
\end{assumption}
\endgroup

Assumption \ref{assu:For-each-,} is a restriction imposed on the
decision rule $\bm{\phi}$ to rule out pathological behavior, such
as $\bm{\phi}$ being discontinuous on sets of positive Lebesgue measure.
This assumption does not rule out randomized decision rules a priori.
Importantly, we note that the DM's minimax decision rule characterized
in Theorem \ref{thm:(Characterization-of-the} satisfies this assumption.

Next, we introduce the local uniform CLT assumption and scaling of
the problem. We assume the true parameter $\bm{\theta}$ lies in the
interior of $\Theta$. For every $\bm{\theta}\in\text{int}\left(\Theta\right)$
and for any local parameter vector $\mathbf{h}$ such that $\left\Vert \mathbf{h}\right\Vert \leq H<\infty$,
we consider a locally shifted parameter sequence $\bm{\theta}+\frac{\mathbf{h}}{\sqrt{n}}$
throughout.

\begingroup
\renewcommand{\theassumption}{LUCLT}
\begin{assumption}[Local Uniform CLT of the Reward Estimator]
\label{assu:For-every-}$\forall\bm{\theta}\in\text{int}\left(\Theta\right)$,
$\forall H\in\left(0,\infty\right)$ and $\forall\mathbf{h}$ such
that $\left\Vert \mathbf{h}\right\Vert \leq H$, $\sqrt{n}\left(\hat{\bm{\theta}}_{n}-\left(\bm{\theta}+\frac{\mathbf{h}}{\sqrt{n}}\right)\right)\stackrel{\bm{\theta}+\frac{\mathbf{h}}{\sqrt{n}}}{\rightsquigarrow}\mathcal{N}\left(\mathbf{0},\bm{\Sigma}\right)$
uniformly in $\mathbf{h}$.
\end{assumption}
\endgroup

Assumption \ref{assu:For-every-} says the shifted and scaled reward
estimator $\sqrt{n}\left(\hat{\bm{\theta}}_{n}-\bm{\theta}\right)$
is asymptotically normal uniformly in $\bm{\theta}$ over local neighborhoods
of radii shrinking at the $\frac{1}{\sqrt{n}}$ rate. Heuristically,
the assumption can be understood as $\sqrt{n}\left(\hat{\bm{\theta}}_{n}-\bm{\theta}\right)\stackrel{\mathbf{h}}{\rightsquigarrow}\mathcal{N}\left(\mathbf{h},\bm{\Sigma}\right)$%
{} uniformly in $\mathbf{h}$ such that $\left\Vert \mathbf{h}\right\Vert \leq H<\infty$
for some $H>0$. When $\mathbf{h}=\mathbf{0}$, it reduces to the
usual pointwise weak convergence in $\bm{\theta}$: $\mathbf{u}_{\bm{\theta},n}\equiv\sqrt{n}\left(\hat{\bm{\theta}}_{n}-\bm{\theta}\right)\stackrel{}{\rightsquigarrow}\mathcal{N}\left(\mathbf{0},\bm{\Sigma}\right)\equiv\mathbf{u}_{\bm{\theta}}$.
This assumption is weaker than imposing the uniform CLT over the original
parameter space $\Theta$ and it is the minimal condition required
for the usual asymptotic power comparison to be valid \citep[e.g.,][ch. 14.1]{vanderVaart1998}.
Regular (i.e., locally asymptotically shift-equivariant) estimators
satisfy this assumption, including but not limited to the general
class of MLE and GMM estimators under suitable regularity conditions
\citep[e.g.,][]{Newey1994,vanderVaart1998}.

In the following, we localize at $\bm{\theta}+\frac{\mathbf{h}}{\sqrt{n}}$
centered at $\bm{\theta}=t\mathbf{1}_{J}$, and set $t=0$ WLOG by
invoking the location invariance discussed in Section \ref{subsec:Location-Invariance-of}.
$\bm{\theta}=t\mathbf{1}_{J}$ is where the competing arms are hardest
to distinguish; if we localize at any other point $\bm{\theta}\neq t\mathbf{1}_{j}$,
the arm that is best at $\bm{\theta}$ will be best for all the local
alternatives $\bm{\theta}+\frac{\mathbf{h}}{\sqrt{n}}$ as the sample
size grows, so the selection problem becomes trivial. The same localization
is standard for the two-arm problem (e.g., \citealp{Hirano2009};
\citealp[sec 14.1]{vanderVaart1998}; \citealp[sec. 8.9]{Liese2008}),
which we are adapting to the multi-arm best-population selection problem.

Next, we scale the estimator $\hat{\bm{\theta}}_{n}$ under the local
parameter sequence $\bm{\theta}_{n}\left(\bm{\vartheta}\right)\equiv\frac{\bm{\vartheta}}{\sqrt{n}}$.
Let 
\[
\mathbf{y}_{n}=\sqrt{n}\hat{\bm{\theta}}_{n}\quad\text{and}\quad\mathbf{u}_{n}\left(\bm{\vartheta}\right)=\mathbf{y}_{n}-\bm{\vartheta}=\sqrt{n}\left(\hat{\bm{\theta}}_{n}-\bm{\theta}_{n}\left(\bm{\vartheta}\right)\right),
\]
and we omit the argument $\left(\bm{\vartheta}\right)$ when it is
obvious from context. Denote by $P_{\bm{\vartheta},n}$ the law of
$\mathbf{y}_{n}$, by $Q_{\bm{\vartheta},n}$ the law of $\mathbf{u}_{n}$.
The decision rule $\bm{\phi}:\mathbb{R}^{J}\rightarrow\bm{\Delta}$
now takes $\mathbf{y}_{n}$ as its argument, matching the $\sqrt{n}$-scaling
in Proposition \ref{prop:Scaling}.

The following Proposition \ref{prop:(Consequences-of-Local} establishes
that the CLT holds uniformly over the original parameter space $\Theta$
when the estimator $\hat{\bm{\theta}}_{n}$ is properly centered and
``inflated'' by $\sqrt{n}$, a direct consequence of Assumption
\ref{assu:For-every-}.
\begin{prop}[Uniform CLT on the $\sqrt{n}$-Scaled Experiment]
\label{prop:(Consequences-of-Local} Denote by $d_{LP}\left(\cdot,\cdot\right)$
the Levy-Prokhorov metric. Then,
\begin{equation}
\sup_{\bm{\vartheta}\in\Theta}d_{LP}\left(Q_{\bm{\vartheta},n},\mathcal{N}\left(\mathbf{0},\bm{\Sigma}\right)\right)\rightarrow0\quad\text{as }n\rightarrow\infty,\label{eq:Q_central}
\end{equation}
or equivalently, by translation invariance of $d_{LP}$ in $\mathbb{R}^{J}$,
\begin{equation}
\sup_{\bm{\vartheta}\in\Theta}d_{LP}\left(P_{\bm{\vartheta},n},\mathcal{N}\left(\bm{\vartheta},\bm{\Sigma}\right)\right)\rightarrow0\quad\text{as }n\rightarrow\infty.\label{eq:P_central}
\end{equation}
\end{prop}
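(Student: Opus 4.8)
The plan is to reduce the proposition to a direct reinterpretation of Assumption \ref{assu:For-every-} under the substitution $\bm{\theta}=\mathbf{0}$ and $\mathbf{h}=\bm{\vartheta}$. First I would invoke the location invariance of Section \ref{subsec:Location-Invariance-of} to center the localization at $\bm{\theta}=t\mathbf{1}_J$ with $t=0$, so that the centering point $\mathbf{0}$ lies in $\text{int}(\Theta)$. Because $\Theta=[-B,B]^J$ is compact, every $\bm{\vartheta}\in\Theta$ obeys $\left\Vert\bm{\vartheta}\right\Vert\leq B\sqrt{J}=:H<\infty$, so the bounded-local-parameter requirement $\left\Vert\mathbf{h}\right\Vert\leq H$ of Assumption \ref{assu:For-every-} is met uniformly over all of $\Theta$ with a single finite $H$. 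This is what makes the supremum over $\bm{\vartheta}\in\Theta$ in the statement fall entirely within the uniform regime of the assumption.

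Next I would verify the algebraic identity that makes the substitution exact. Under the local parameter sequence $\bm{\theta}_n(\bm{\vartheta})=\frac{\bm{\vartheta}}{\sqrt{n}}$, the data-generating parameter $\bm{\theta}+\frac{\mathbf{h}}{\sqrt{n}}$ of Assumption \ref{assu:For-every-} becomes exactly $\frac{\bm{\vartheta}}{\sqrt{n}}=\bm{\theta}_n(\bm{\vartheta})$, and the centered-and-scaled statistic becomes
\[
\sqrt{n}\Bigl(\hat{\bm{\theta}}_n-\bigl(\mathbf{0}+\tfrac{\bm{\vartheta}}{\sqrt{n}}\bigr)\Bigr)=\sqrt{n}\hat{\bm{\theta}}_n-\bm{\vartheta}=\mathbf{y}_n-\bm{\vartheta}=\mathbf{u}_n(\bm{\vartheta}),
\]
whose law is $Q_{\bm{\vartheta},n}$ by definition. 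Thus Assumption \ref{assu:For-every-}, read with $\mathbf{h}=\bm{\vartheta}$, states precisely that $Q_{\bm{\vartheta},n}\rightsquigarrow\mathcal{N}(\mathbf{0},\bm{\Sigma})$ uniformly in $\bm{\vartheta}$ over $\left\Vert\bm{\vartheta}\right\Vert\leq H$, a set that contains $\Theta$.

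The substantive step is then to convert this uniform weak convergence into uniform convergence in the Levy--Prokhorov metric \eqref{eq:Q_central}. I would make ``uniformly in $\mathbf{h}$'' precise through the dual bounded-Lipschitz metric $d_{BL}$, i.e.\ $\sup_{\bm{\vartheta}}\sup_{f\in\mathrm{BL}_1}\bigl|\mathbb{E}_{Q_{\bm{\vartheta},n}}f-\mathbb{E}_{\mathcal{N}(\mathbf{0},\bm{\Sigma})}f\bigr|\to0$, and then invoke Dudley's two-sided comparison inequalities between $d_{BL}$ and $d_{LP}$, which hold with universal constants on the separable space $\mathbb{R}^J$. Since those constants do not depend on $\bm{\vartheta}$, uniform convergence in $d_{BL}$ transfers to uniform convergence in $d_{LP}$, yielding \eqref{eq:Q_central}. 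I expect this metrization step to be the only real obstacle: it is where one must ensure that whatever metric is used to encode uniform weak convergence in Assumption \ref{assu:For-every-} is quantitatively compatible with $d_{LP}$, and separability of $\mathbb{R}^J$ is exactly what guarantees both metrize the same weak topology and are comparable.

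Finally, for the equivalence of \eqref{eq:Q_central} and \eqref{eq:P_central}, I would observe that $P_{\bm{\vartheta},n}$ is the pushforward of $Q_{\bm{\vartheta},n}$ under the translation $\mathbf{z}\mapsto\mathbf{z}+\bm{\vartheta}$, and $\mathcal{N}(\bm{\vartheta},\bm{\Sigma})$ is the pushforward of $\mathcal{N}(\mathbf{0},\bm{\Sigma})$ under the same map. Translations are isometries of $\mathbb{R}^J$, and $d_{LP}$ is invariant under a common isometric shift of both arguments, so $d_{LP}(P_{\bm{\vartheta},n},\mathcal{N}(\bm{\vartheta},\bm{\Sigma}))=d_{LP}(Q_{\bm{\vartheta},n},\mathcal{N}(\mathbf{0},\bm{\Sigma}))$ for every $\bm{\vartheta}$. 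Taking the supremum over $\bm{\vartheta}\in\Theta$ shows the two displays agree term by term, hence are equivalent, completing the argument.
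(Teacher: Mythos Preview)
Your proposal is correct and follows essentially the same route as the paper: center at $\bm{\theta}=\mathbf{0}\in\text{int}(\Theta)$, use compactness of $\Theta$ to obtain a single finite $H$ covering all $\bm{\vartheta}$, identify $\mathbf{u}_n(\bm{\vartheta})$ with the statistic in Assumption~\ref{assu:For-every-} under $\mathbf{h}=\bm{\vartheta}$, and deduce \eqref{eq:P_central} from \eqref{eq:Q_central} via translation invariance of $d_{LP}$. The only difference is that you insert an explicit metrization step through $d_{BL}$ and Dudley's comparison inequalities, whereas the paper simply reads ``uniformly in $\mathbf{h}$'' in Assumption~\ref{assu:For-every-} as $\sup_{\mathbf{h}}d_{LP}\to 0$ directly; your extra care is harmless and arguably clarifies what the paper leaves implicit.
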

\begin{proof}
See Appendix \ref{subsec:Prop2_proof}.
\end{proof}

\subsection{Local Uniform Convergence of the Regret to the Gaussian Benchmark
and Asymptotic Minimaxity of the Plug-in MVN Decision Rule}

We establish our main Gaussian approximation results in this subsection.
As before, we index the MVN-MMR rule $\bm{\delta}$, the Gaussian
regret risk function $R\left(\bm{\vartheta},\bm{\delta}\right)$,
and the minimax value $V$ by the covariance matrix $\bm{\Sigma}$
or its consistent estimate $\hat{\bm{\Sigma}}$, respectively. Denote
by $P_{\bm{\vartheta},\bm{\Sigma}}$ the Gaussian measure $\mathcal{N}\left(\bm{\vartheta},\bm{\Sigma}\right)$,
and for any decision rule $\bm{\phi}$ and ($\sqrt{n}$-inflated local)
parameter $\bm{\vartheta}$, define:
\begin{align}
R_{\left[n\right]}\left(\bm{\vartheta},\bm{\phi}\right) & =\int L\left(\bm{\phi}\left(\mathbf{y}\right),\bm{\vartheta}\right)dP_{\bm{\vartheta},n}\left(\mathbf{y}\right)\label{eq:finite_sample_regret}\\
R_{\bm{\Sigma}}\left(\bm{\vartheta},\bm{\phi}\right) & =\int L\left(\bm{\phi}\left(\mathbf{y}\right),\bm{\vartheta}\right)dP_{\bm{\vartheta},\bm{\Sigma}}\left(\mathbf{y}\right).\label{eq:gaussian_regret}
\end{align}

Theorem \ref{thm:(Uniform-Convergence-of} below establishes, for
any fixed rule $\bm{\phi}$, $R_{\left[n\right]}\left(\bm{\vartheta},\bm{\phi}\right)$
converges to $R_{\bm{\Sigma}}\left(\bm{\vartheta},\bm{\phi}\right)$
uniformly over $\bm{\vartheta}\in\Theta$. It immediately follows
the MVN-MMR rule $\bm{\delta}_{\bm{\Sigma}}$ plugged into the Gaussian
risk function achieves the minimax value $V_{\bm{\Sigma}}$. However,
in practice, the covariance matrix $\bm{\Sigma}$ is unknown so the
rule $\bm{\delta}_{\bm{\Sigma}}$ is infeasible. A feasible version
of the decision rule is $\bm{\delta}_{\hat{\bm{\Sigma}}}$, which
is the MVN-MMR decision rule obtained by plugging-in the consistent
estimate $\hat{\bm{\Sigma}}$. Therefore, the subsequent Theorem \ref{thm:(Plug-in-of-the}
establishes that the feasible version $\bm{\delta}_{\hat{\bm{\Sigma}}}$%
{} also achieves the MVN minimax-risk value asymptotically.

Taken together, these two theorems justify plugging $\sqrt{n}\hat{\bm{\theta}}_{n}$
into the MVN decision boundaries derived using the consistent estimate
$\hat{\bm{\Sigma}}$ (equivalently, plugging $\hat{\bm{\theta}}_{n}$
into the decision boundaries for $\frac{1}{n}\hat{\bm{\Sigma}}$)
when $\hat{\bm{\theta}}_{n}$ is $\sqrt{n}$-consistent and (locally
uniformly) asymptotically MVN.
\begin{thm}[Local Uniform Convergence of Regret to the MVN Limit]
\label{thm:(Uniform-Convergence-of}Let $\bm{\phi}:\mathbb{R}^{J}\rightarrow\bm{\Delta}$
be any decision rule that satisfies Assumption \ref{assu:For-each-,}.
Under Assumption \ref{assu:For-every-}, 
\begin{equation}
\sup_{\bm{\vartheta}\in\Theta}\left|R_{\left[n\right]}\left(\bm{\vartheta},\bm{\phi}\right)-R_{\bm{\Sigma}}\left(\bm{\vartheta},\bm{\phi}\right)\right|\rightarrow0\qquad\text{as }n\rightarrow\infty.\label{eq:thm4_conclusion}
\end{equation}
\end{thm}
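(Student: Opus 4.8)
The plan is to reduce the claim to the uniform weak convergence already in hand (Proposition \ref{prop:(Consequences-of-Local}) together with the $\mu$-a.e.\ continuity of the integrand (Assumption \ref{assu:For-each-,}), and to accommodate the fact that both the integrating measure and the integrand depend on $\bm{\vartheta}$ simultaneously by running an extended continuous-mapping argument along subsequences. First I would change variables to center the problem: writing $\mathbf{u}=\mathbf{y}-\bm{\vartheta}$, so that $\mathbf{u}\sim Q_{\bm{\vartheta},n}$ under $P_{\bm{\vartheta},n}$ and $\mathbf{u}\sim\Phi_{\bm{\Sigma}}\equiv\mathcal{N}(\mathbf{0},\bm{\Sigma})$ under $P_{\bm{\vartheta}}$, I express both regrets as integrals of the common integrand $g_{\bm{\vartheta}}(\mathbf{u})\equiv L(\bm{\phi}(\bm{\vartheta}+\mathbf{u}),\bm{\vartheta})$ against $Q_{\bm{\vartheta},n}$ and against $\Phi_{\bm{\Sigma}}$ respectively. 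By \eqref{eq:regret_loss} and $\bm{\vartheta}\in\Theta=[-B,B]^{J}$, every $g_{\bm{\vartheta}}$ is bounded by $2B$, and by Proposition \ref{prop:(Consequences-of-Local} one has $\sup_{\bm{\vartheta}\in\Theta}d_{LP}(Q_{\bm{\vartheta},n},\Phi_{\bm{\Sigma}})\to0$. Using the $Q$-version is what keeps the limit law fixed and independent of $\bm{\vartheta}$.

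Second, I would record a joint-continuity fact for the integrand. Writing $L(\bm{\phi}(\mathbf{x}),\bm{\vartheta})=\sum_{j}w_{j}(\bm{\vartheta})\phi_{j}(\mathbf{x})$ with $w_{j}(\bm{\vartheta})=\max_{k}\vartheta_{k}-\vartheta_{j}$, the weights $w_{j}$ are Lipschitz in $\bm{\vartheta}$, so $|L(\bm{\phi}(\mathbf{x}),\bm{\vartheta})-L(\bm{\phi}(\mathbf{x}),\bm{\vartheta}')|\le\max_{j}|w_{j}(\bm{\vartheta})-w_{j}(\bm{\vartheta}')|$ uniformly in $\mathbf{x}$, since $\bm{\phi}\in\bm{\Delta}$. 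Combining this with Assumption \ref{assu:For-each-,}, which makes $\mathbf{x}\mapsto L(\bm{\phi}(\mathbf{x}),\bm{\vartheta}_{*})$ continuous off a Lebesgue-null set $N_{\bm{\vartheta}_{*}}$, I obtain the key sequential statement: whenever $\bm{\vartheta}_{n}\to\bm{\vartheta}_{*}$, $\mathbf{u}_{n}\to\mathbf{u}$, and $\bm{\vartheta}_{*}+\mathbf{u}\notin N_{\bm{\vartheta}_{*}}$, then $g_{\bm{\vartheta}_{n}}(\mathbf{u}_{n})\to g_{\bm{\vartheta}_{*}}(\mathbf{u})$. Because $\Phi_{\bm{\Sigma}}$ is absolutely continuous, the set $C=\{\mathbf{u}:\bm{\vartheta}_{*}+\mathbf{u}\notin N_{\bm{\vartheta}_{*}}\}$ has $\Phi_{\bm{\Sigma}}(C)=1$.

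Third, I would establish the uniform conclusion by compactness and contradiction. If the supremum in \eqref{eq:thm4_conclusion} does not vanish, then along a subsequence there exist $\bm{\vartheta}_{n_{k}}$ with $|R_{[n_{k}]}(\bm{\vartheta}_{n_{k}},\bm{\phi})-R(\bm{\vartheta}_{n_{k}},\bm{\phi})|\ge\epsilon>0$, and by compactness of $\Theta$ I may assume $\bm{\vartheta}_{n_{k}}\to\bm{\vartheta}_{*}\in\Theta$. For the finite-sample term, $Q_{\bm{\vartheta}_{n_{k}},n_{k}}\rightsquigarrow\Phi_{\bm{\Sigma}}$ follows from the uniform bound, so the extended continuous-mapping theorem \citep[e.g.,][]{vanderVaart1998} applied with the integrand family $g_{\bm{\vartheta}_{n_{k}}}$ and the convergence established in the previous step gives $g_{\bm{\vartheta}_{n_{k}}}(\mathbf{u}_{n_{k}})\rightsquigarrow g_{\bm{\vartheta}_{*}}(\mathbf{u})$ for $\mathbf{u}_{n_{k}}\sim Q_{\bm{\vartheta}_{n_{k}},n_{k}}$ and $\mathbf{u}\sim\Phi_{\bm{\Sigma}}$; uniform boundedness by $2B$ (hence uniform integrability) upgrades this to convergence of expectations, so $R_{[n_{k}]}(\bm{\vartheta}_{n_{k}},\bm{\phi})\to R(\bm{\vartheta}_{*},\bm{\phi})$. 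For the Gaussian term, the same joint-continuity fact with the fixed measure $\Phi_{\bm{\Sigma}}$ and bounded convergence yields $R(\bm{\vartheta}_{n_{k}},\bm{\phi})\to R(\bm{\vartheta}_{*},\bm{\phi})$, i.e.\ continuity of $\bm{\vartheta}\mapsto R(\bm{\vartheta},\bm{\phi})$. Subtracting contradicts the lower bound $\epsilon$, which proves the claim.

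I expect the main obstacle to be precisely the simultaneous dependence of the measure and the integrand on $\bm{\vartheta}$, compounded by the fact that $\bm{\phi}$ — and hence $g_{\bm{\vartheta}}$ — may be discontinuous on a Lebesgue-null set, so an ordinary portmanteau or continuous-mapping argument does not apply directly. The resolution rests on two features: the limit law is Gaussian and therefore charges no Lebesgue-null (in particular no discontinuity) set, and the regret loss is uniformly bounded so that weak convergence of the real-valued images $g_{\bm{\vartheta}_{n_k}}(\mathbf{u}_{n_k})$ automatically promotes to convergence of their means. The subsequence-plus-compactness device is what converts the sequential extended-mapping conclusion into the uniform statement over $\Theta$.
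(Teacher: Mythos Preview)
Your argument is correct and takes a genuinely different route from the paper. The paper's proof is an explicit coupling construction: it invokes Strassen's theorem to couple $(\mathbf{y}_{\bm{\vartheta},n},\mathbf{z}_{\bm{\vartheta}})$ with a small tail probability, splits $|g_{\bm{\vartheta}}(\mathbf{y}_{\bm{\vartheta},n})-g_{\bm{\vartheta}}(\mathbf{z}_{\bm{\vartheta}})|$ into an oscillation term and a tail term, and then controls the oscillation uniformly in $\bm{\vartheta}$ via a finite $\delta$-net on $\Theta$ combined with Dini's theorem. You instead center the problem so that the limit law is the fixed measure $\Phi_{\bm{\Sigma}}$, establish the sequential joint-continuity property $g_{\bm{\vartheta}_{n}}(\mathbf{u}_{n})\to g_{\bm{\vartheta}_{*}}(\mathbf{u})$ off a Gaussian-null set, and then apply the extended continuous-mapping theorem along a compactness-extracted subsequence, upgrading weak convergence to convergence of means by the uniform $2B$ bound. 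Your approach is shorter and packages the hard step into a single citation of the extended CMT; the paper's approach is more self-contained and, because it produces explicit bounds in terms of $d_{LP}$ and the oscillation, is closer to yielding a quantitative rate if one were desired. Both proofs ultimately hinge on the same three ingredients---uniform weak convergence from Proposition~\ref{prop:(Consequences-of-Local}, the Lipschitz/$\mu$-a.e.\ continuity structure of the loss, and compactness of $\Theta$---but exploit them through different machinery.
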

\begin{proof}
See Appendix \ref{subsec:Thm4_proof}.
\end{proof}
\begin{cor*}
$\sup_{\bm{\vartheta}\in\Theta}R_{\left[n\right]}\left(\bm{\vartheta},\bm{\delta}_{\bm{\Sigma}}\right)\rightarrow\sup_{\bm{\vartheta}\in\Theta}R_{\bm{\Sigma}}\left(\bm{\vartheta},\bm{\delta}_{\bm{\Sigma}}\right)=V_{\bm{\Sigma}}$
as $n\rightarrow\infty$.
\end{cor*}
We next prove a lemma that upgrades the pointwise (in $\bm{\phi}$)
convergence of the risk function to the uniform convergence when we
restrict the class of decision rules to MVN-MMR rules indexed by the
covariance matrices $\bm{\Gamma}\in\mathcal{S}$. Note, this class
of rules satisfy Assumption \ref{assu:For-each-,} by Theorem \ref{thm:(Characterization-of-the}
and its Corollary.
\begin{lem}[Uniform Convergence of the Risk Over the MVN Rules]
\label{lem:41}Let $\bm{\delta}_{\bm{\Gamma}}$ be the MVN-MMR decision
rule indexed by p.d. covariance matrix $\bm{\Gamma}\in\mathcal{S}$.
Then, under Assumption \ref{assu:For-every-},
\[
\sup_{\bm{\Gamma}\in\mathcal{S}}\sup_{\bm{\vartheta}\in\Theta}\left|R_{\left[n\right]}\left(\bm{\vartheta},\bm{\delta}_{\bm{\Gamma}}\right)-R_{\bm{\Sigma}}\left(\bm{\vartheta},\bm{\delta}_{\bm{\Gamma}}\right)\right|\rightarrow0\qquad\text{as }n\rightarrow\infty.
\]
\end{lem}
\begin{proof}
See Appendix \ref{subsec:Proof-of-Lemma41}.
\end{proof}
Finally, the following Theorem \ref{thm:(Plug-in-of-the} establishes
the plug-in MVN rule $\bm{\delta}_{\hat{\bm{\Sigma}}}$ also achieves
$V_{\bm{\Sigma}}$ asymptotically.
\begin{thm}[Plug-in of the Consistent Covariance Estimator]
\label{thm:(Plug-in-of-the}Let $\hat{\bm{\theta}}_{n}$ be a mean-reward
estimator satisfying Assumption \ref{assu:For-every-} with $\hat{\bm{\Sigma}}\rightarrow_{p}\bm{\Sigma}\in\mathcal{S}$,
where $\hat{\bm{\Sigma}}\in\mathcal{S}$ with probability approaching
one as $n\rightarrow\infty$. Then, 
\[
\sup_{\bm{\vartheta}\in\Theta}R_{\left[n\right]}\left(\bm{\vartheta},\bm{\delta}_{\hat{\bm{\Sigma}}}\right)=\sup_{\bm{\vartheta}\in\Theta}R_{\bm{\Sigma}}\left(\bm{\vartheta},\bm{\delta}_{\bm{\Sigma}}\right)+o_{p}\left(1\right)=V_{\bm{\Sigma}}+o_{p}\left(1\right)\qquad\text{as }n\rightarrow\infty.
\]
\end{thm}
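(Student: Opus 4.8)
The plan is to read $R^{[n]}\!\left(\bm{\vartheta},\bm{\delta}_{\hat{\bm{\Sigma}}}\right)$ as the deterministic finite-sample regret \eqref{eq:finite_sample_regret} of the frozen MVN-MMR rule $\bm{\delta}_{\bm{S}}$ of Theorem \ref{thm:(Characterization-of-the} evaluated at the random argument $\bm{S}=\hat{\bm{\Sigma}}$, so that all randomness enters through $\hat{\bm{\Sigma}}$ and the $o_{p}(1)$ statements are well defined. Writing $R_{\bm{\Sigma}}$ for the Gaussian regret \eqref{eq:gaussian_regret} under the \emph{true} covariance, I decompose
\[
\sup_{\bm{\vartheta}} R^{[n]}\!\left(\bm{\vartheta},\bm{\delta}_{\hat{\bm{\Sigma}}}\right)-V_{\bm{\Sigma}}=\underbrace{\left[\sup_{\bm{\vartheta}} R^{[n]}\!\left(\bm{\vartheta},\bm{\delta}_{\hat{\bm{\Sigma}}}\right)-\sup_{\bm{\vartheta}} R_{\bm{\Sigma}}\!\left(\bm{\vartheta},\bm{\delta}_{\hat{\bm{\Sigma}}}\right)\right]}_{(A)}+\underbrace{\left[\sup_{\bm{\vartheta}} R_{\bm{\Sigma}}\!\left(\bm{\vartheta},\bm{\delta}_{\hat{\bm{\Sigma}}}\right)-V_{\bm{\Sigma}}\right]}_{(B)}.
\]
Because $\hat{\bm{\Sigma}}\rightarrow_{p}\bm{\Sigma}$ and is positive definite with probability approaching one, I fix a small compact neighborhood $\mathcal{K}$ of $\bm{\Sigma}$ inside the positive-definite cone with $\mathbb{P}(\hat{\bm{\Sigma}}\in\mathcal{K})\to1$, and it suffices to bound both brackets on the event $\{\hat{\bm{\Sigma}}\in\mathcal{K}\}$, using that the loss is bounded by $2B$ on $\Theta$ throughout.

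For the purely Gaussian term $(B)$, the key observation is that $\bm{\delta}_{\bm{S}}$ is minimax for its \emph{own} design covariance, so $\sup_{\bm{\vartheta}} R_{\bm{S}}\!\left(\bm{\vartheta},\bm{\delta}_{\bm{S}}\right)=V_{\bm{S}}$ by Theorem \ref{thm:(Minimax-Theorem-for}. Changing only the data covariance for a fixed rule is controlled in the total variation distance $d_{TV}$,
\[
\sup_{\bm{\vartheta}}\left|R_{\bm{\Sigma}}\!\left(\bm{\vartheta},\bm{\delta}_{\bm{S}}\right)-R_{\bm{S}}\!\left(\bm{\vartheta},\bm{\delta}_{\bm{S}}\right)\right|\le 2B\,d_{TV}\!\left(\mathcal{N}(\mathbf{0},\bm{\Sigma}),\mathcal{N}(\mathbf{0},\bm{S})\right),
\]
which is free of $\bm{\vartheta}$ and of the rule (Gaussians with equal means) and vanishes as $\bm{S}\to\bm{\Sigma}$. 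Hence $\sup_{\bm{\vartheta}} R_{\bm{\Sigma}}\!\left(\bm{\vartheta},\bm{\delta}_{\hat{\bm{\Sigma}}}\right)\le V_{\hat{\bm{\Sigma}}}+o_{p}(1)$, while $\sup_{\bm{\vartheta}} R_{\bm{\Sigma}}\!\left(\bm{\vartheta},\bm{\delta}_{\hat{\bm{\Sigma}}}\right)\ge V_{\bm{\Sigma}}$ since $\bm{\delta}_{\bm{\Sigma}}$ is minimax for $\bm{\Sigma}$. It then remains only to show that $\bm{S}\mapsto V_{\bm{S}}$ is continuous at $\bm{\Sigma}$, which I obtain from the sandwich
\[
r_{\bm{S}}(\pi_{\bm{\Sigma}})\ \le\ V_{\bm{S}}\ \le\ \sup_{\bm{\vartheta}} R_{\bm{S}}\!\left(\bm{\vartheta},\bm{\delta}_{\bm{\Sigma}}\right),\qquad r_{\bm{S}}(\pi)=\inf_{\bm{\phi}}\mathbb{E}_{\pi}[R_{\bm{S}}(\bm{\vartheta},\bm{\phi})],
\]
in which the lower bound is the Bayes risk of the \emph{fixed} least-favorable prior $\pi_{\bm{\Sigma}}$ (an integral of a pointwise minimum of weighted Gaussian densities, hence continuous in $\bm{S}$ by dominated convergence) and the upper bound again uses the total-variation estimate; both ends converge to $V_{\bm{\Sigma}}$. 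Thus $V_{\hat{\bm{\Sigma}}}\rightarrow_{p}V_{\bm{\Sigma}}$ and $(B)=o_{p}(1)$.

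For the finite-sample term $(A)$, on $\{\hat{\bm{\Sigma}}\in\mathcal{K}\}$ I bound $|(A)|\le\Delta_{n}$ with the deterministic quantity
\[
\Delta_{n}:=\sup_{\bm{S}\in\mathcal{K}}\ \sup_{\bm{\vartheta}\in\Theta}\left|R^{[n]}\!\left(\bm{\vartheta},\bm{\delta}_{\bm{S}}\right)-R_{\bm{\Sigma}}\!\left(\bm{\vartheta},\bm{\delta}_{\bm{S}}\right)\right|,
\]
a uniform-over-$\mathcal{K}$ strengthening of Theorem \ref{thm:(Uniform-Convergence-of}; showing $\Delta_{n}\to0$ yields $(A)=o_{p}(1)$. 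I would argue by contradiction and compactness: if $\Delta_{n}\not\to0$, extract $n_{m}\to\infty$, $\bm{S}_{m}\to\bm{S}_{\ast}\in\mathcal{K}$ and $\bm{\vartheta}_{m}\to\bm{\vartheta}_{\ast}\in\Theta$ along which the gap stays bounded below. By Proposition \ref{prop:(Consequences-of-Local} and the triangle inequality for $d_{LP}$, both $P_{\bm{\vartheta}_{m},n_{m}}$ and $\mathcal{N}(\bm{\vartheta}_{m},\bm{\Sigma})$ converge weakly to $\mathcal{N}(\bm{\vartheta}_{\ast},\bm{\Sigma})$, and the integrands $L\!\left(\bm{\delta}_{\bm{S}_{m}}(\cdot),\bm{\vartheta}_{m}\right)$ are uniformly bounded by $2B$. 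Granting the $\mu$-a.e. convergence $\bm{\delta}_{\bm{S}_{m}}\to\bm{\delta}_{\bm{S}_{\ast}}$ discussed next, a converging-integrands/converging-measures argument---bounding the discrepancy by $2B$ times the weak-limit mass of a shrinking closed neighborhood of the $\mu$-null tie set, which vanishes by the portmanteau theorem after a standard truncation of the Gaussian tails---forces both regrets to the common limit $\int L\!\left(\bm{\delta}_{\bm{S}_{\ast}}(\mathbf{y}),\bm{\vartheta}_{\ast}\right)d\mathcal{N}(\bm{\vartheta}_{\ast},\bm{\Sigma})(\mathbf{y})$, contradicting the gap.

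The crux, and the step I expect to be the main obstacle, is the $\mu$-a.e. convergence of the plug-in rule, which amounts to continuity of nature's least-favorable prior along $\bm{S}_{m}\to\bm{S}_{\ast}$. Since the least-favorable prior need not be unique, I would not claim a continuous selection but instead invoke Berge's maximum theorem: $(\pi,\bm{S})\mapsto r_{\bm{S}}(\pi)$ is jointly continuous on the compact set of priors supported on at most $J$ points of $\Theta$ times $\mathcal{K}$, so the argmax (least-favorable-prior) correspondence is upper hemicontinuous; passing to a further subsequence, $\pi_{\bm{S}_{m}}\to\pi_{\ast}$ for \emph{some} least-favorable prior $\pi_{\ast}$ at $\bm{S}_{\ast}$. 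The Bayes scores $h^{\bm{S}_{m}}_{j}(\mathbf{y})=\sum_{k=1}^{J}\pi^{k}_{\bm{S}_{m}}\,p_{\bm{\vartheta}^{k}_{\bm{S}_{m}},\bm{S}_{m}}(\mathbf{y})\,\vartheta^{k}_{j,\bm{S}_{m}}$ then converge pointwise to the score $h^{\ast}_{j}$ built from $(\pi_{\ast},\bm{S}_{\ast})$, and since the MVN non-degeneracy verified in Theorem \ref{thm:(Characterization-of-the} makes each tie set $\{\mathbf{y}:h^{\ast}_{i}(\mathbf{y})=h^{\ast}_{j}(\mathbf{y})\}$ $\mu$-null, the $\arg\max$ is locally constant off that set; this delivers $\bm{\delta}_{\bm{S}_{m}}(\mathbf{y})\to\bm{\delta}_{\pi_{\ast}}(\mathbf{y})$ for $\mu$-almost every $\mathbf{y}$, with $\bm{\delta}_{\pi_{\ast}}$ a minimax rule at $\bm{S}_{\ast}$ playing the role of $\bm{\delta}_{\bm{S}_{\ast}}$. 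Combining $(A)=o_{p}(1)$ and $(B)=o_{p}(1)$ gives $\sup_{\bm{\vartheta}} R^{[n]}(\bm{\vartheta},\bm{\delta}_{\hat{\bm{\Sigma}}})=V_{\bm{\Sigma}}+o_{p}(1)$, and the remaining equality $\sup_{\bm{\vartheta}} R_{\bm{\Sigma}}(\bm{\vartheta},\bm{\delta}_{\bm{\Sigma}})=V_{\bm{\Sigma}}$ of the statement is just the minimax identity for the benchmark covariance from Theorem \ref{thm:(Minimax-Theorem-for}.
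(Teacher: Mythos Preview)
Your decomposition into $(A)+(B)$ and your treatment of $(B)$ are essentially the paper's own proof: the paper uses the same two-sided triangle-inequality split, the same total-variation bound $\sup_{\bm{\vartheta}}|R_{\bm{\Sigma}}-R_{\bm{S}}|\le 2B\,\|\mathcal{N}(\mathbf{0},\bm{\Sigma})-\mathcal{N}(\mathbf{0},\bm{S})\|_{TV}$, the same upper bound $\sup_{\bm{\vartheta}} R_{\bm{\Sigma}}(\bm{\vartheta},\bm{\delta}_{\hat{\bm{\Sigma}}})\le V_{\hat{\bm{\Sigma}}}+o_p(1)$, and the same lower bound from minimaxity of $\bm{\delta}_{\bm{\Sigma}}$. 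For the continuity of $\bm{S}\mapsto V_{\bm{S}}$, the paper is slightly cleaner: it uses the one-line estimate $|V_{\bm{\Sigma}}-V_{\bm{\Sigma}'}|\le \sup_{\bm{\phi}}\sup_{\bm{\vartheta}}|R_{\bm{\Sigma}}-R_{\bm{\Sigma}'}|\le 2B\,\|\cdot\|_{TV}$ rather than your sandwich via a fixed least-favorable prior, so your Bayes-risk lower bound and dominated-convergence step are unnecessary.

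Where you genuinely diverge is in $(A)$. The paper simply invokes Theorem~\ref{thm:(Uniform-Convergence-of} ``conditional on the realization $\hat{\bm{\Sigma}}=\bm{S}$'' and asserts that because the bound there is deterministic and uniform in $\bm{\vartheta}$, the conditional $o(1)$ upgrades to an unconditional $o_p(1)$. You instead prove the strictly stronger statement $\Delta_n=\sup_{\bm{S}\in\mathcal{K}}\sup_{\bm{\vartheta}}|R^{[n]}(\bm{\vartheta},\bm{\delta}_{\bm{S}})-R_{\bm{\Sigma}}(\bm{\vartheta},\bm{\delta}_{\bm{S}})|\to 0$ via a compactness/subsequence argument that requires Berge's maximum theorem for the least-favorable-prior correspondence and $\mu$-a.e.\ convergence of the Bayes scores. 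Your route is more careful---Theorem~\ref{thm:(Uniform-Convergence-of} is stated for a \emph{fixed} rule, and its rate (through the Dini step on the oscillation) depends on that rule, so the paper's direct ``conditional'' application to the $n$-dependent rule $\bm{\delta}_{\hat{\bm{\Sigma}}_n}$ is somewhat informal---but it buys that rigor at the cost of the whole upper-hemicontinuity apparatus. The paper's intended shortcut is simply to treat Theorem~\ref{thm:(Uniform-Convergence-of} as applying pointwise in the realized covariance; if you accept that, the Berge/least-favorable-prior machinery in your argument for $(A)$ is not needed.
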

\begin{proof}
See Appendix \ref{subsec:Proof-of-Theorem42}.
\end{proof}

\section{Numerical Optimization and Examples}\label{sec:Applications}

In this section, we describe our numerical optimization approach for
finding the MVN-MMR decision rule and report decision boundaries for
several examples. In Section \ref{subsec:Numerical-Optimization},
we briefly describe how we carry out the numerical optimization to
find the least-favorable prior, which hinges on the theoretical results
from Sections \ref{sec:Characterization-of-the}-\ref{sec:The-MVN-Best}.
In Section \ref{subsec:Decision-Boundaries-for-2}, we illustrate
the decision boundaries for the two-arm and three-arm MVN-MMR problems
under homoskedasticity and an arbitrary covariance matrix, respectively.

Through the numerical analysis, we reconfirm that the decision boundaries
are linear and that the “pick-the-winner” empirical success rule is
optimal either (i) in two-arm experiments or (ii) when the covariance
matrix is homoskedastic in multi-arm experiments. However, when the
covariance matrix is not homoskedastic in multi-arm experiments, the
problem is no longer permutation invariant and the decision boundaries
become nonlinear; in this case the empirical success rule is no longer
optimal. The optimal decision rule penalizes higher-variance arms,
requiring stronger evidence to select a high-variance arm.

\subsection{Numerical Optimization to Find the Least-Favorable Prior}\label{subsec:Numerical-Optimization}

We characterized the least-favorable prior $\left(\pi^{k},\bm{\theta}^{k}\right)_{k\in\mathcal{J}}$
in the dual maximin problem, which is supported by $J$ distinct points
with exactly one point in $\Theta^{j}$. The Bayes rule $\bm{\delta}$
under this least-favorable prior takes the deterministic form (Theorem
\ref{thm:(Characterization-of-the}). The form of $\bm{\delta}$ is
tractable enough that it can be directly plugged into the Bayes risk.
Therefore, we numerically solve nature's problem of maximizing the
Bayes risk under $\bm{\delta}$:
\begin{equation}
\max_{\left(\bm{\theta}^{j},\pi^{j}\right)_{j\in\mathcal{J}}}\sum_{j=1}^{J}\pi^{j}\left[\theta_{j}^{j}-\sum_{i=1}^{J}\theta_{i}^{j}\int1\left(\sum_{k=1}^{J}\theta_{i}^{k}\pi^{k}p_{\bm{\theta}^{k}}\left(\mathbf{x}\right)>\sum_{k=1}^{J}\theta_{j}^{k}\pi^{k}p_{\bm{\theta}^{k}}\left(\mathbf{x}\right)\:\forall j\neq i\right)p_{\bm{\theta}^{j}}\left(\mathbf{x}\right)d\mu\left(\mathbf{x}\right)\right].\label{eq:gaussian_integral}
\end{equation}

In the operationalization, we use $2^{19}$ Sobol quasi Monte Carlo
draws to approximate the Gaussian integral in (\ref{eq:gaussian_integral}).
We then use two algorithms to solve the maximization problem: (i)
direct maximization using the derivative-free subplex algorithm; and
(ii) a softmax smoothing of the indicator function inside the Gaussian
integral combined with a derivative-based optimizer, supplying closed-form
derivatives. The first approach works better with a small problem
dimension $\left(J\leq4\right)$, whereas the second approach works
better with modest parameter dimensions $\left(J\geq5\right)$. We
confirm that both algorithms converge to the same least-favorable
prior when they converge.

Because the objective function has many near-optimal points, we use
many randomized starting points to increase the chance that the algorithm
reaches the global maximum. The theoretical prediction from Theorem
\ref{thm:(Minimax-Theorem-for} (iv) that $R\left(\bm{\theta}^{k},\bm{\delta}\right)=V$
for all $k$ serves as a diagnostic. We relegate the details of the
numerical optimization to Appendix \ref{sec:Details-of-Numerical}. 

\subsection{Decision Boundaries for Two-Arm and Three-Arm MVN Experiments}\label{subsec:Decision-Boundaries-for-2}

\subsubsection{Decision Boundaries for Two-Arm MVN Experiments}\label{subsec:Decision-Boundaries-for}

It is known that the problem with $J=2$ can be reduced to a one-dimensional
``testing'' problem with $x_{2}-x_{1}$, and the MMR-optimal threshold
for the difference $x_{2}-x_{1}$ is $0$ \citep{Tetenov2012,JooForthcoming}.
In \citeauthor{Tetenov2012,JooForthcoming}, the optimal decision
threshold in the two-arm case is derived by directly maximizing the
Gaussian risk over the parameter-coordinate difference $\theta_{2}-\theta_{1}$.
In the following, we find that our decision boundaries, the minimax
value, and the locations of the least-favorable prior’s support points
calculated from the dual maximin problem exactly coincide with those
reported in \citeauthor{Tetenov2012,JooForthcoming}. 

For the two-arm experiments, we let 
\[
\bm{\Sigma}_{\text{sym}}=\begin{pmatrix}0.5 & 0\\
0 & 0.5
\end{pmatrix},\quad\bm{\Sigma}_{\text{asym}}=\begin{pmatrix}1 & -0.5\\
-0.5 & 5
\end{pmatrix}.
\]
For $\bm{\Sigma}_{\text{sym}}$, we have $\Sigma_{11}+\Sigma_{22}=1$
and the minimax value is known to be $0.170$ (\citeauthor{Tetenov2012,JooForthcoming}),
which we confirm in the Bayes risk column of Table \ref{tab:2-sym}.
In both Tables \ref{tab:2-sym} and \ref{tab:2-asym}, the risk values
attained at the support points of the least-favorable prior are identical
up to rounding error, reconfirming the theoretical prediction of Theorem
\ref{thm:(Minimax-Theorem-for} (iv).

In Figure \ref{fig:Two-arm-decision-boundaries}, the decision boundaries
align exactly with the $45^{\circ}$ line, corresponding to the empirical
success rule. Therefore, we also reconfirm that our dual formulation
of the problem yields the same decision boundary for both the symmetric
and asymmetric cases. The asymmetric arm-specific variance does not
affect the decision boundaries nor the prior probabilities in this
two-arm case---the location of the least-favorable prior’s support
points adjusts to equalize the risk values attained at those points. 

\begin{table}[H]
\caption{Least-favorable prior and minimax/Bayes risk values for $J=2$}\label{tab:Least-favorable-prior-and}

\begin{centering}
{\small\subfloat[\label{tab:2-sym}$\bm{\Sigma}_{\text{sym}}$]{{\small}{\small\par}
\centering{}{\small{}%
\begin{tabular}{cccccc}
\toprule 
\multirow{1}{*}{{\small$k$}} & \multirow{1}{*}{{\small$\pi^{k}$}} & \multirow{1}{*}{{\small$R\left(\bm{\theta}^{k},\bm{\delta}\right)$}} & \multirow{1}{*}{{\small Bayes Risk}} & {\small LFP $\bm{\theta}^{k}$ Coordinate} & {\small$d\left(\bm{\theta}^{k},t\mathbf{1}\right)$}\tabularnewline
\midrule 
{\small 1} & {\small 0.499} & {\small 0.171} & \multirow{2}{*}{{\small 0.170}} & {\small$\left(\text{–}0.001,\text{–}0.751\right)$} & {\small 0.530}\tabularnewline
{\small 2} & {\small 0.501} & {\small 0.169} &  & {\small$\left(\text{–}0.755,0.000\right)$} & {\small 0.534}\tabularnewline
\bottomrule
\end{tabular}}{\small\par}}}{\small\par}
\par\end{centering}
\begin{centering}
{\small\subfloat[\label{tab:2-asym}$\bm{\Sigma}_{\text{asym}}$]{{\small}{\small\par}
\centering{}{\small{}%
\begin{tabular}{cccccc}
\toprule 
\multirow{1}{*}{{\small$k$}} & \multirow{1}{*}{{\small$\pi^{k}$}} & \multirow{1}{*}{{\small$R\left(\bm{\theta}^{k},\bm{\delta}\right)$}} & \multirow{1}{*}{{\small Bayes Risk}} & {\small LFP $\bm{\theta}^{k}$ Coordinate} & {\small$d\left(\bm{\theta}^{k},t\mathbf{1}\right)$}\tabularnewline
\midrule 
{\small 1} & {\small 0.500} & {\small 0.449} & \multirow{2}{*}{{\small 0.450}} & {\small$\left(-1.138,-3.133\right)$} & {\small 1.411}\tabularnewline
{\small 2} & {\small 0.500} & {\small 0.450} &  & {\small$\left(-1.994,0.000\right)$} & {\small 1.410}\tabularnewline
\bottomrule
\end{tabular}}{\small\par}}}{\small\par}
\par\end{centering}
{\small Note. The tables report the least-favorable prior $\left(\bm{\theta}^{k},\pi^{k}\right)_{k\in\mathcal{J}}$,
pointwise-risk at the support point $\bm{\theta}^{k}$s, and the Bayes
risk.LFP stands for least-favorable prior. $d\left(\bm{\theta}^{k},t\mathbf{1}\right)$
is the Euclidean distance between the LFP support point and the line
$t\mathbf{1}$, calculated as $\left\Vert \bm{\theta}^{k}-\bar{\theta}^{k}\mathbf{1}\right\Vert _{2}$
where $\bar{\theta}^{k}=\frac{1}{J}\sum_{j=1}^{J}\theta_{j}^{k}$
is the average of the coordinates. $\theta_{2}^{2}=0$ is imposed
during the numerical optimization by invoking location invariance. }{\small\par}
\end{table}

\begin{figure}[H]
\caption{Two-arm decision boundaries and the least-favorable prior}\label{fig:Two-arm-decision-boundaries}

\begin{centering}
\subfloat[Decision boundaries for $\bm{\Sigma}_{\text{sym}}$]{
\begin{centering}
\includegraphics[width=0.4\textwidth]{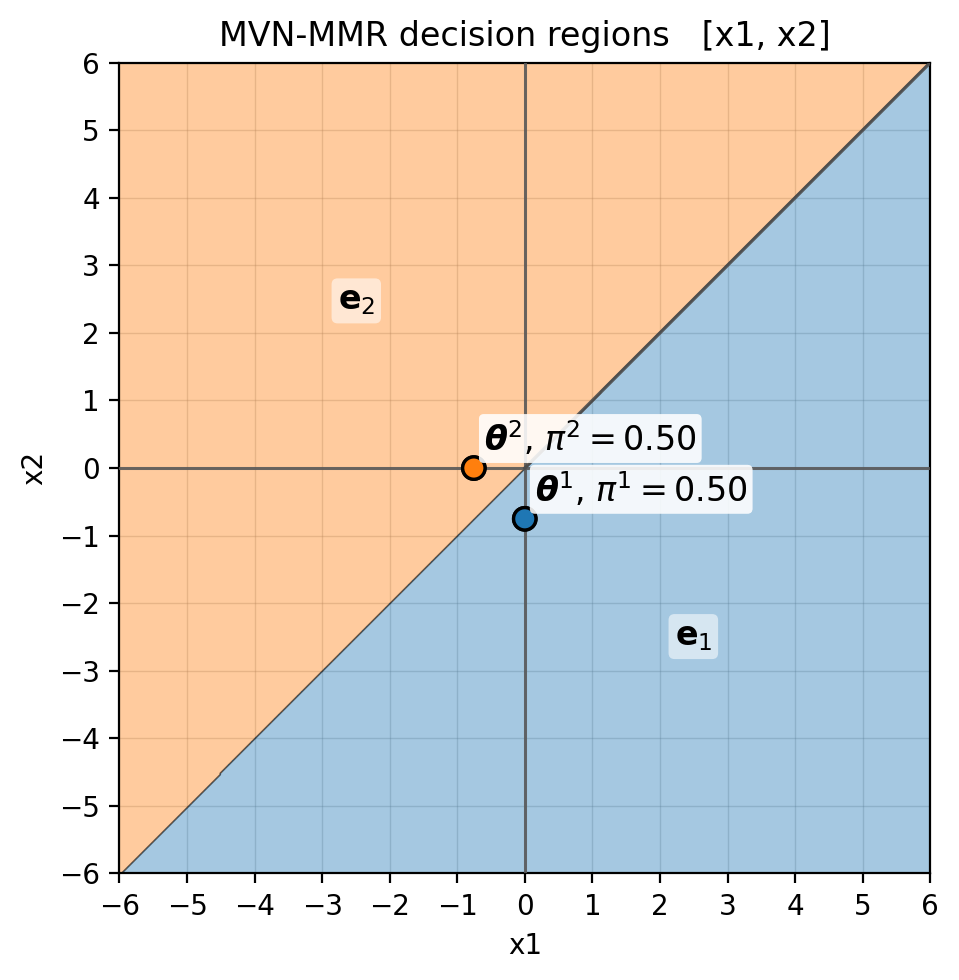}
\par\end{centering}
}\subfloat[Decision boundaries for $\bm{\Sigma}_{\text{asym}}$]{
\centering{}\includegraphics[width=0.4\textwidth]{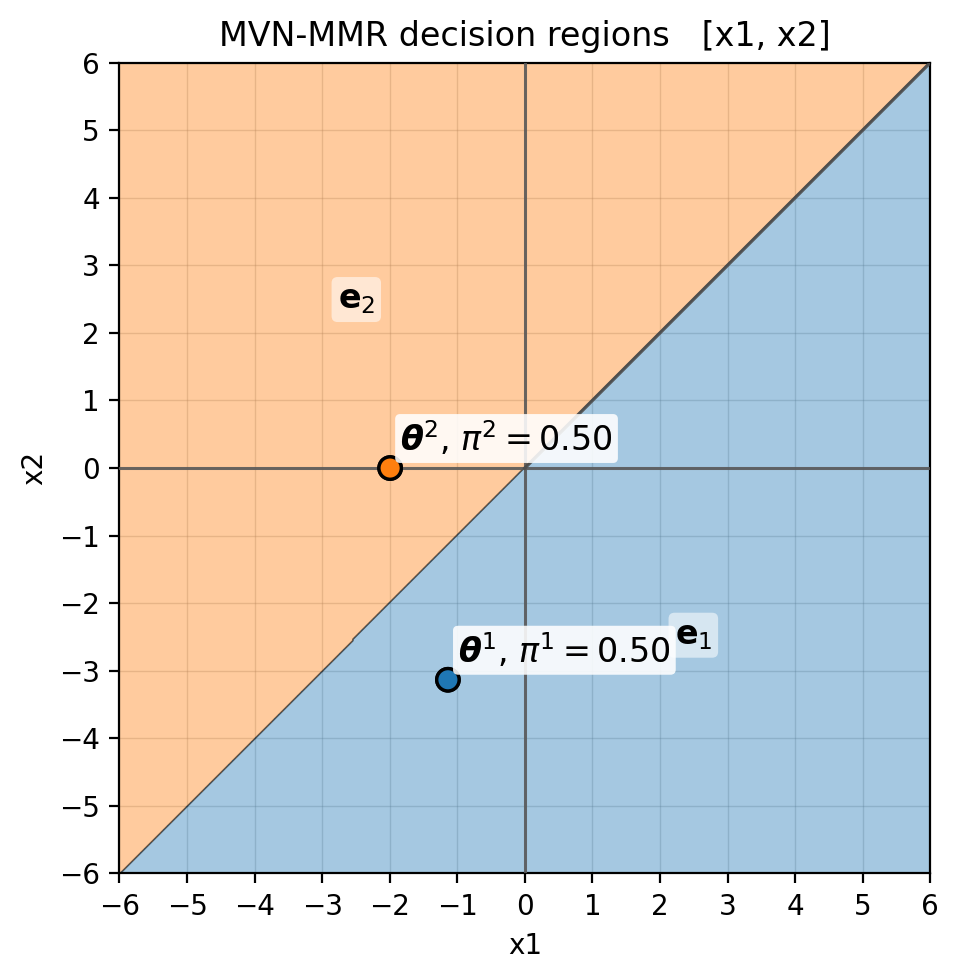}}
\par\end{centering}
{\small Note. The decision regions are calculated by numerically maximizing
the Bayes risk under the 2-point supported least-favorable prior with
$\bm{\Sigma}_{\text{sym}}$ and $\bm{\Sigma}_{\text{asym}}$, respectively.
$\mathbf{e}_{j}$ denotes the MMR decision region of the $j$-th arm.
The support points and the prior probabilities are marked with dots,
respectively. }{\small\par}
\end{figure}

\subsubsection{Decision Boundaries for Three-Arm MVN Experiments}\label{subsec:Decision-Boundaries-for-1}

A more interesting case is $J\geq3$, where homoskedasticity no longer
holds. We illustrate this case for two three-arm experiments. Let
the covariance matrices be as follows:
\[
\bm{\Sigma}_{\text{sym}}=\begin{pmatrix}0.33 & 0 & 0\\
0 & 0.33 & 0\\
0 & 0 & 0.33
\end{pmatrix},\quad\bm{\Sigma}_{\text{asym}}=\begin{pmatrix}1 & 0.5 & 0\\
0.5 & 1 & -0.5\\
0 & -0.5 & 5
\end{pmatrix}.
\]
Arm 3 of $\bm{\Sigma}_{\text{asym}}$ exhibits higher variance than
the other two arms.

In Table \ref{tab:Least-favorable-prior-J3}, we report the least-favorable
prior’s support points with their prior probabilities and the associated
risk values evaluated at those points. We reconfirm the conclusion
of Theorem \ref{thm:(Minimax-Theorem-for} (iv) that the risk values
attained are the same at all three support points of the least-favorable
prior up to a rounding error. Unlike the two-arm experiment, however,
both the prior probabilities and the support points of the least-favorable
prior are no longer symmetric across the arms.

The decision boundaries are illustrated in Figure \ref{fig:Three-arm-sym_decision-boundaries}
for the symmetric covariance matrix $\bm{\Sigma}_{\text{sym}}$ and
in Figure \ref{fig:Three-arm-asym-decision-boundaries} for the asymmetric
covariance matrix $\bm{\Sigma}_{\text{asym}}$. Each subfigure illustrates
the decision boundaries for two coordinates, fixing the remaining
coordinate at $\left(-2,0,2\right)$. We mark the $45^{\circ}$ line
in the first quadrants as a benchmark representing the empirical success
rule's boundary.

For the symmetric covariance matrix $\bm{\Sigma}_{\text{sym}}$, we
find that the boundaries are linear and the empirical success rule
is optimal, as seen in the middle panels of Figures \ref{fig:sym_J3_x3_fixed}-\ref{fig:sym_J3_fixed_x1}.
As reported in Table \ref{tab:sym_lfp_J3}, the least-favorable prior’s
support points are symmetric with respect to the origin, with prior
probabilities of 1/3, up to rounding error.

For the asymmetric covariance matrix $\bm{\Sigma}_{\text{asym}}$,
we find that the empirical success rule is no longer optimal, as seen
in the first quadrants of the middle panels of Figures \ref{fig:,-fixed}-\ref{fig:,-fixed-1}.
The decision boundaries deviate from the $45^{\circ}$ line. The areas
of $\bm{\delta}\left(\mathbf{x}\right)=\mathbf{e}_{3}$ shrink, implying
stronger evidence is required to select the higher-variance arm. Interestingly,
both the support points and the prior probabilities of the least-favorable
prior are no longer symmetric, as reported in Table \ref{tab:asym_lfp_J3}.

\begin{table}[H]
\caption{Least-favorable prior and minimax/Bayes risk values for $J=3$}\label{tab:Least-favorable-prior-J3}

\begin{centering}
{\small\subfloat[\label{tab:sym_lfp_J3}$\bm{\Sigma}_{\text{sym}}$]{{\small}{\small\par}
\centering{}{\small{}%
\begin{tabular}{cccccc}
\toprule 
\multirow{1}{*}{{\small$k$}} & \multirow{1}{*}{{\small$\pi^{k}$}} & \multirow{1}{*}{{\small$R\left(\bm{\theta}^{k},\bm{\delta}\right)$}} & \multirow{1}{*}{{\small Bayes Risk}} & {\small LFP $\bm{\theta}^{k}$ Coordinate} & {\small$d\left(\bm{\theta}^{k},t\mathbf{1}\right)$}\tabularnewline
\midrule
{\small 1} & {\small 0.333} & {\small 0.215} & \multirow{3}{*}{{\small 0.215}} & {\small$\left(0.010,-0.652,-0.663\right)$} & {\small 0.545}\tabularnewline
{\small 2} & {\small 0.336} & {\small 0.213} &  & {\small$\left(-0.657,0.010,-0.657\right)$} & {\small 0.545}\tabularnewline
{\small 3} & {\small 0.331} & {\small 0.217} &  & {\small$\left(-0.663,-0.667,0.000\right)$} & {\small 0.543}\tabularnewline
\bottomrule
\end{tabular}}{\small\par}}}{\small\par}
\par\end{centering}
\begin{centering}
{\small\subfloat[\label{tab:asym_lfp_J3}$\bm{\Sigma}_{\text{asym}}$]{{\small}{\small\par}
\centering{}{\small{}%
\begin{tabular}{cccccc}
\toprule 
\multirow{1}{*}{{\small$k$}} & \multirow{1}{*}{{\small$\pi^{k}$}} & \multirow{1}{*}{{\small$R\left(\bm{\theta}^{k},\bm{\delta}\right)$}} & \multirow{1}{*}{{\small Bayes Risk}} & {\small LFP $\bm{\theta}^{k}$ Coordinate} & {\small$d\left(\bm{\theta}^{k},t\mathbf{1}\right)$}\tabularnewline
\midrule
{\small 1} & {\small 0.249} & {\small 0.524} & \multirow{3}{*}{{\small 0.524}} & {\small$\left(-1.027,-1.762,-2.939\right)$} & {\small 1.364}\tabularnewline
{\small 2} & {\small 0.318} & {\small 0.522} &  & {\small$\left(-1.873,-1.125,-3.259\right)$} & {\small 1.531}\tabularnewline
{\small 3} & {\small 0.433} & {\small 0.526} &  & {\small$\left(-1.965,-2.136,0.000\right)$} & {\small 1.679}\tabularnewline
\bottomrule
\end{tabular}}{\small\par}}}{\small\par}
\par\end{centering}
{\small Note. The tables report the least-favorable prior $\left(\bm{\theta}^{k},\pi^{k}\right)_{k\in\mathcal{J}}$,
pointwise-risk at the support point $\bm{\theta}^{k}$s, and the Bayes
risk. LFP stands for least-favorable prior. $d\left(\bm{\theta}^{k},t\mathbf{1}\right)$
is the Euclidean distance between the LFP support point and the line
$t\mathbf{1}$, calculated as $\left\Vert \bm{\theta}^{k}-\bar{\theta}^{k}\mathbf{1}\right\Vert _{2}$
where $\bar{\theta}^{k}=\frac{1}{J}\sum_{j=1}^{J}\theta_{j}^{k}$
is the average of the coordinates. $\theta_{3}^{3}=0$ is imposed
during the numerical optimization by invoking location invariance.}{\small\par}
\end{table}

\begin{figure}[H]
\caption{Three-arm decision boundaries for $\bm{\Sigma}_{\text{sym}}$}\label{fig:Three-arm-sym_decision-boundaries}

\begin{centering}
\subfloat[\label{fig:sym_J3_x3_fixed}$\left(x_{1},x_{2}\right)$, fixed $x_{3}=-2,0,2$]{
\centering{}\includegraphics[width=0.3\textwidth]{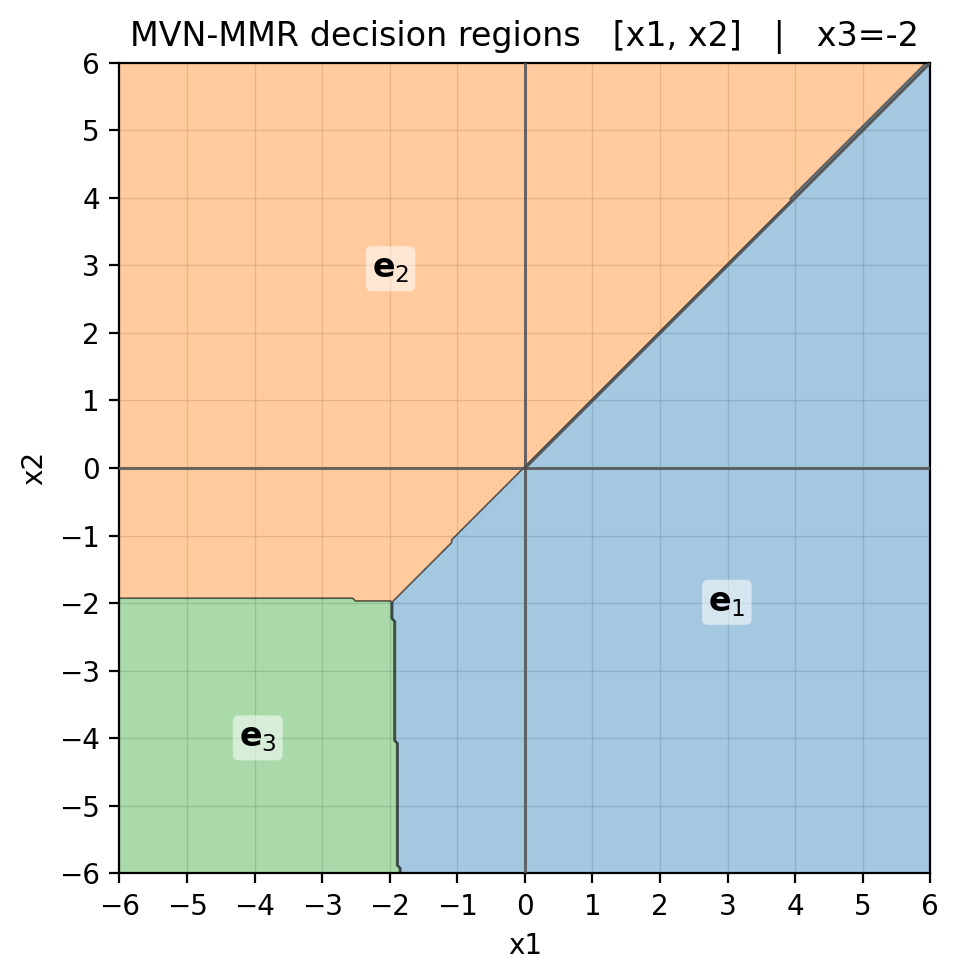}\includegraphics[width=0.3\textwidth]{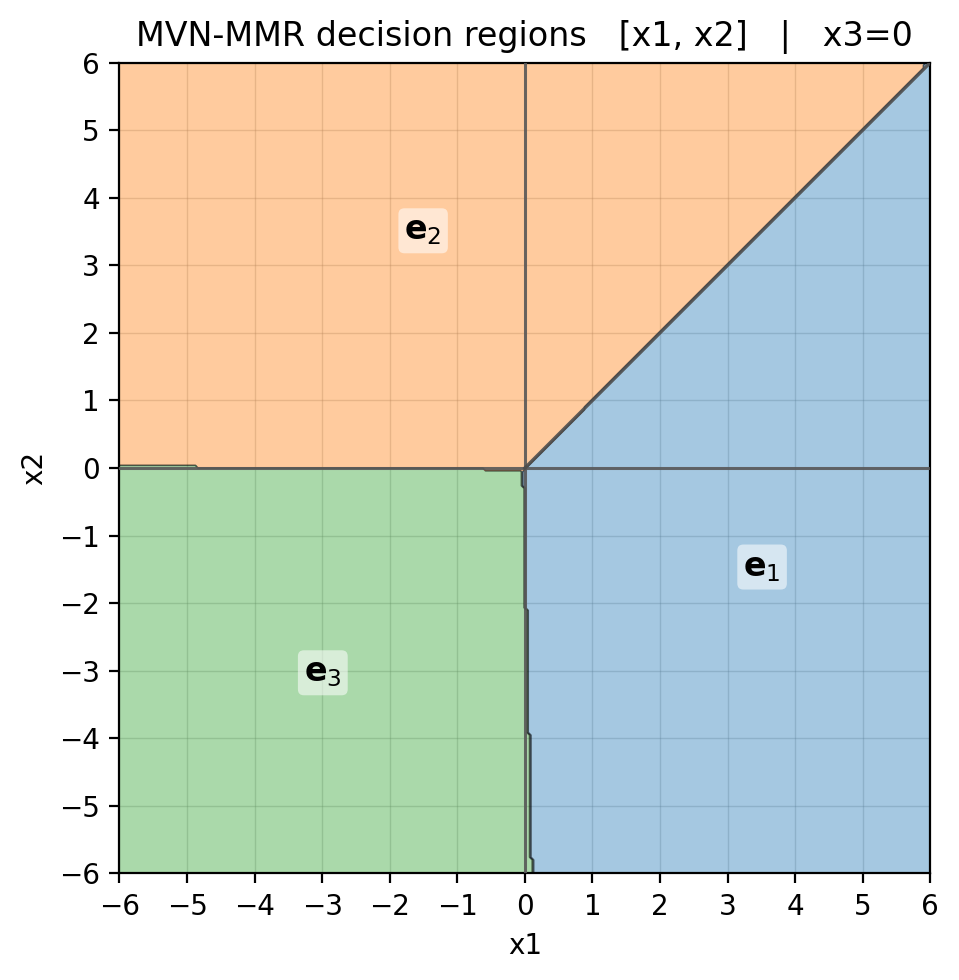}\includegraphics[width=0.3\textwidth]{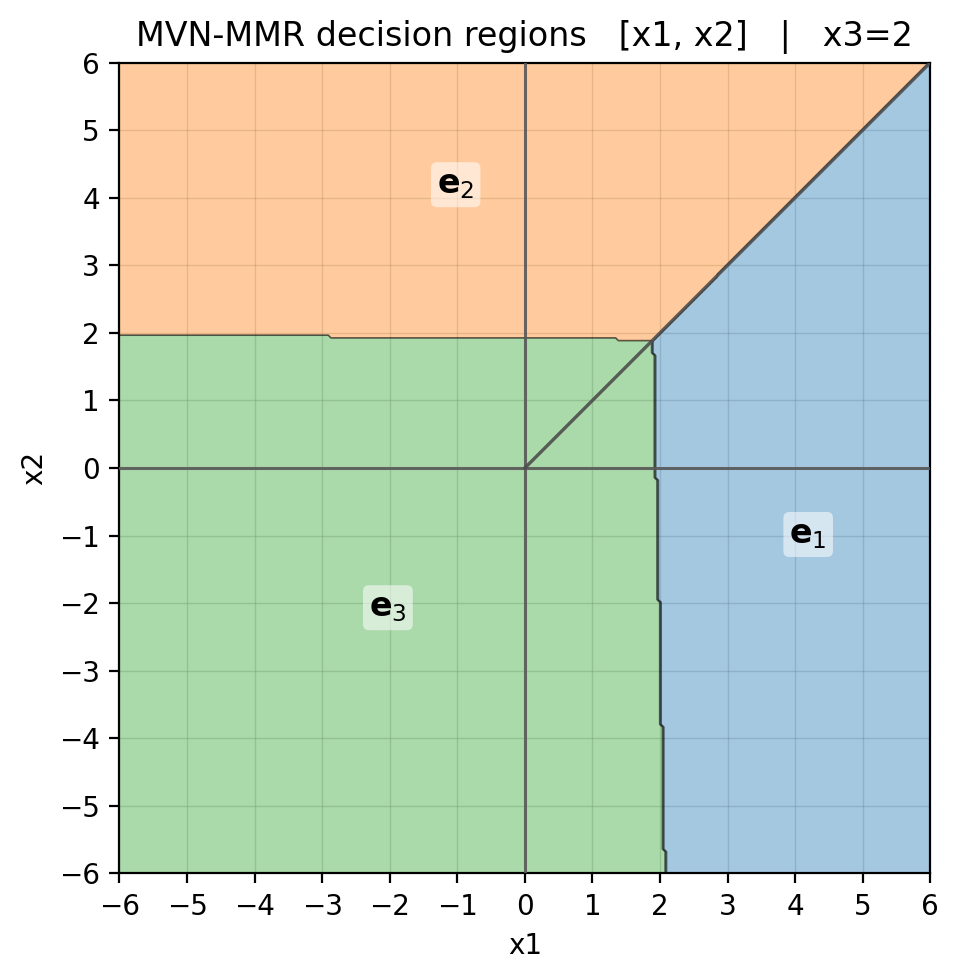}}
\par\end{centering}
\begin{centering}
\subfloat[\label{fig:,-fixed-2}$\left(x_{1},x_{3}\right)$, fixed $x_{2}=-2,0,2$]{
\centering{}\includegraphics[width=0.3\textwidth]{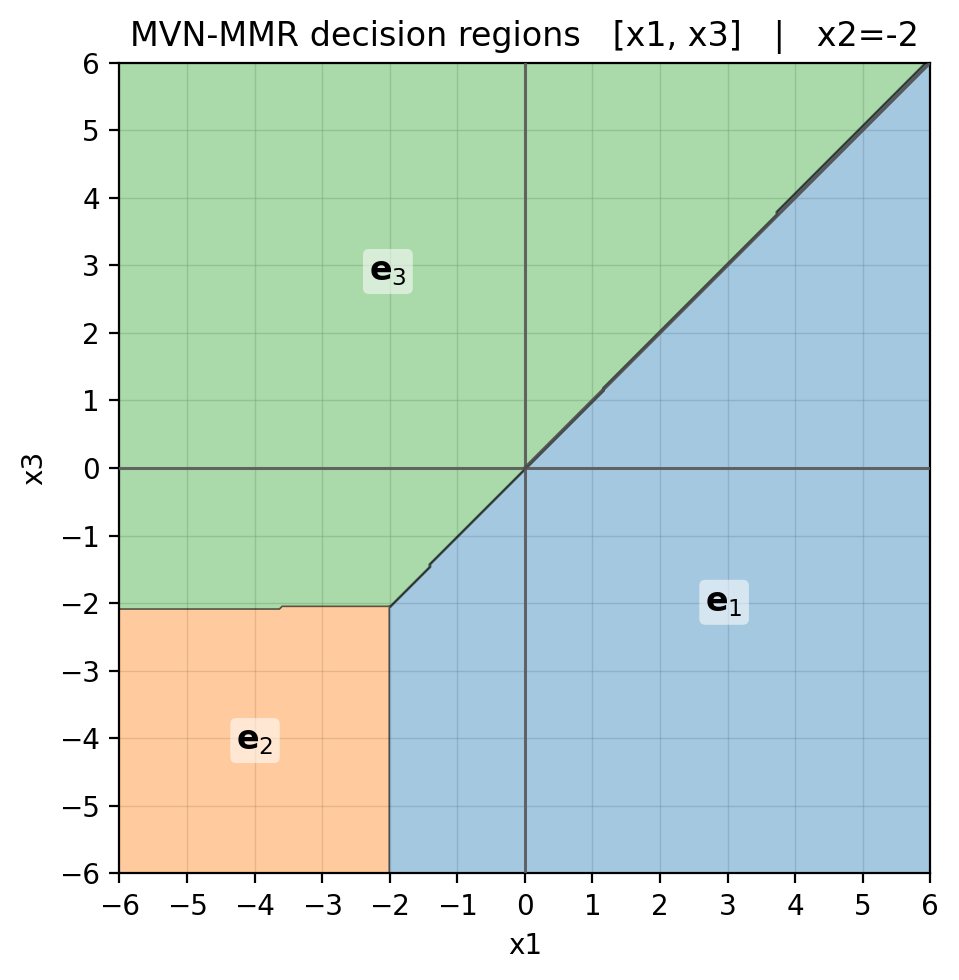}\includegraphics[width=0.3\textwidth]{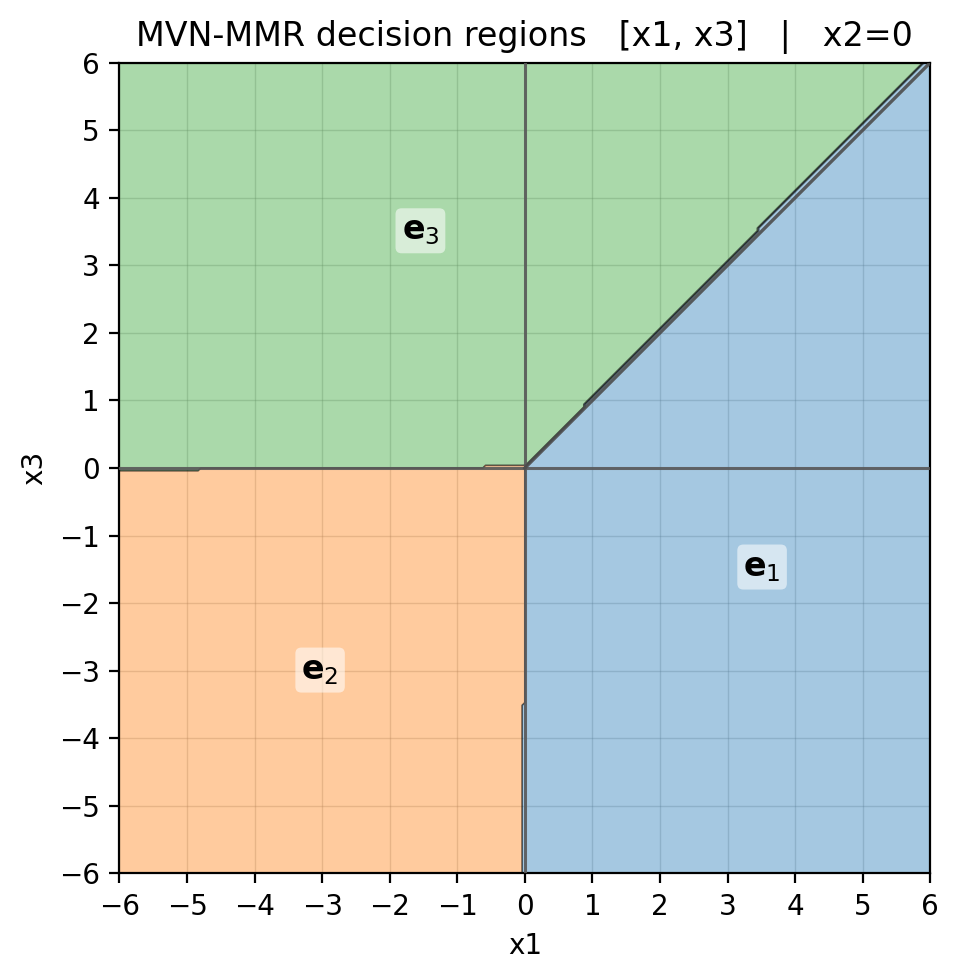}\includegraphics[width=0.3\textwidth]{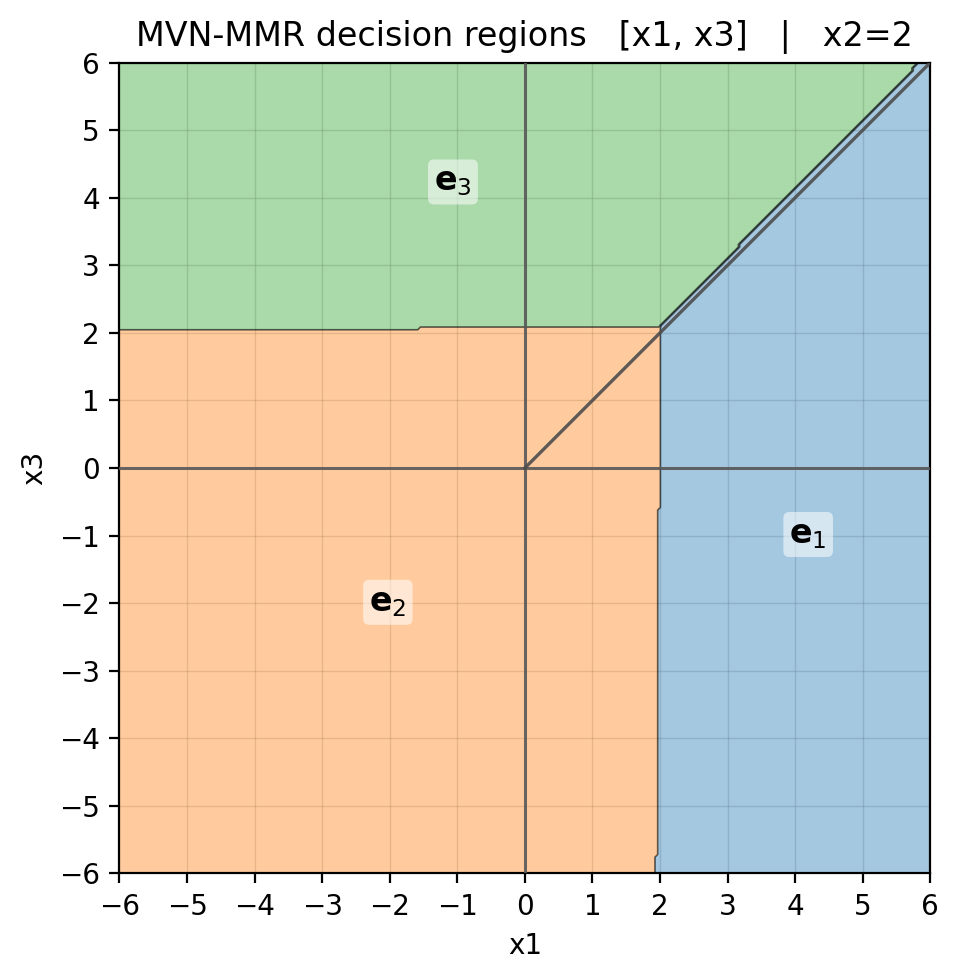}}
\par\end{centering}
\begin{centering}
\subfloat[\label{fig:sym_J3_fixed_x1}$\left(x_{2},x_{3}\right)$, fixed $x_{1}=-2,0,2$]{
\centering{}\includegraphics[width=0.3\textwidth]{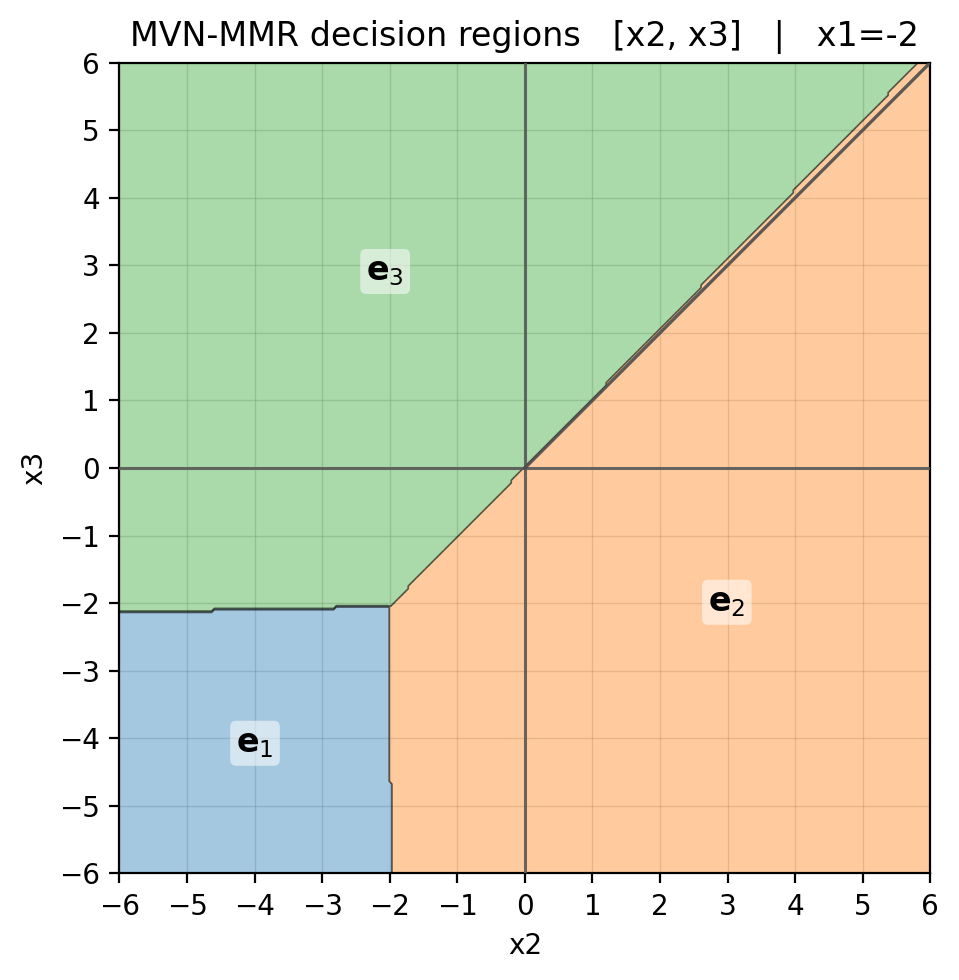}\includegraphics[width=0.3\textwidth]{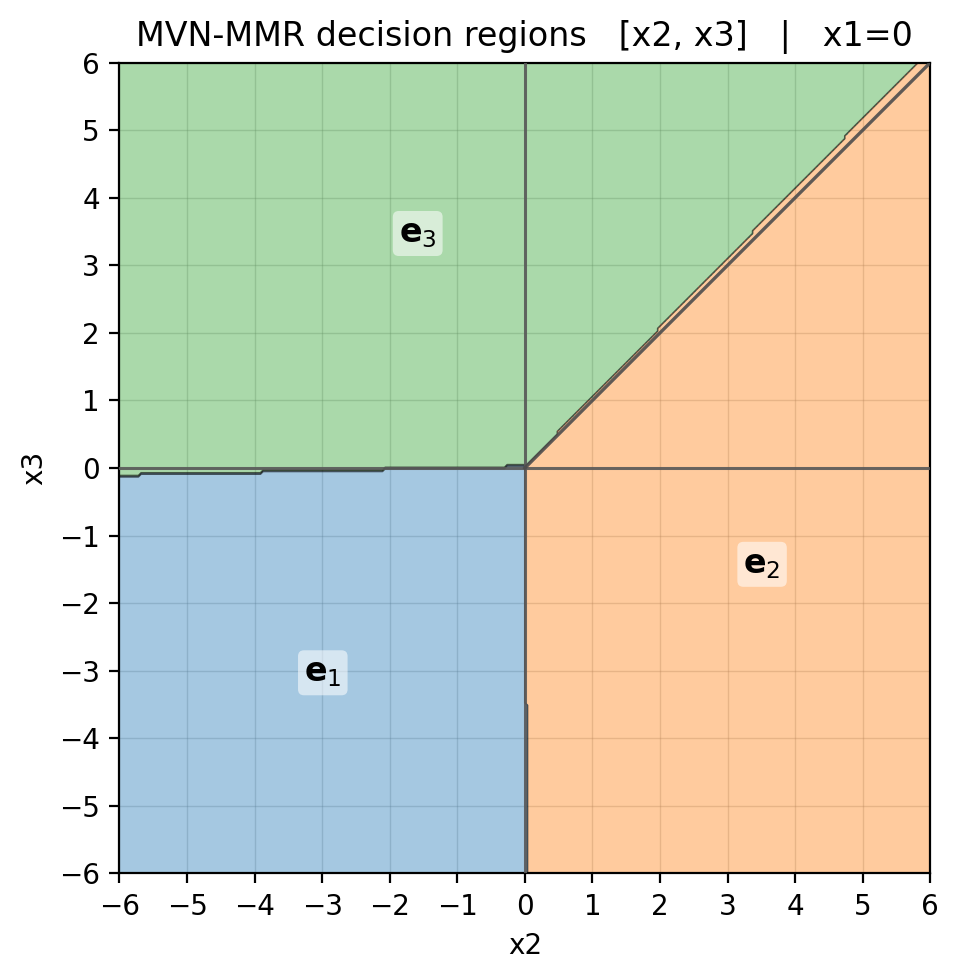}\includegraphics[width=0.3\textwidth]{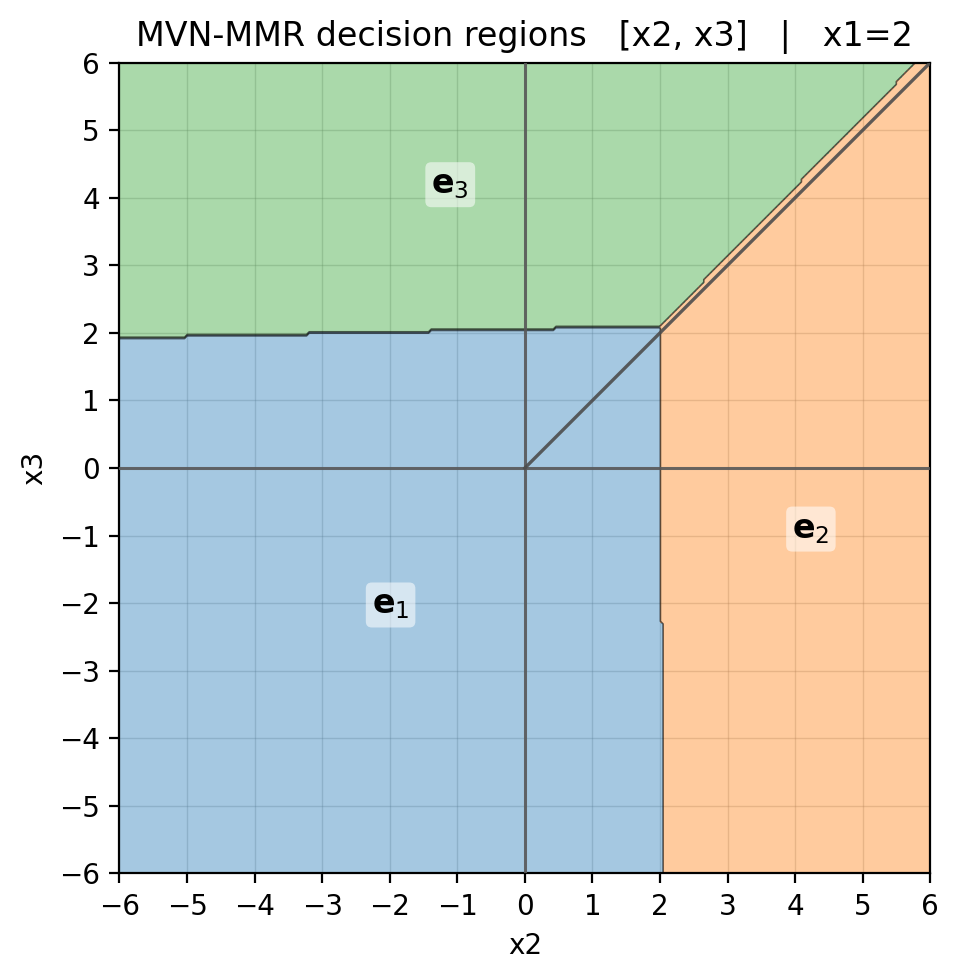}}
\par\end{centering}
{\small Note. The decision regions are calculated by numerically maximizing
the Bayes risk under the 3-point supported least-favorable prior with
the symmetric covariance matrix $\bm{\Sigma}_{\text{sym}}$. $\mathbf{e}_{j}$
denotes the MMR decision region of the $j$-th arm.}{\small\par}
\end{figure}

\begin{figure}[H]
\caption{Three-arm decision boundaries for $\bm{\Sigma}_{\text{asym}}$}\label{fig:Three-arm-asym-decision-boundaries}

\begin{centering}
\subfloat[$\left(x_{1},x_{2}\right)$, fixed $x_{3}=-2,0,2$]{
\centering{}\includegraphics[width=0.3\textwidth]{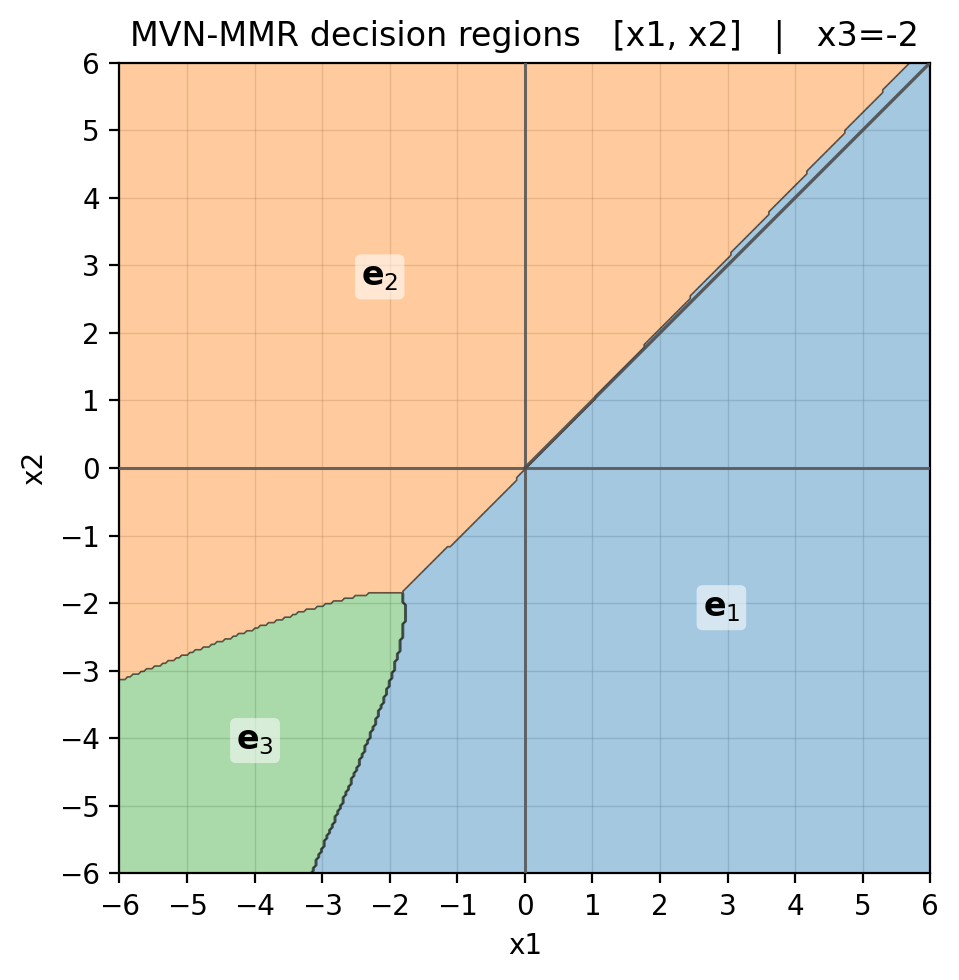}\includegraphics[width=0.3\textwidth]{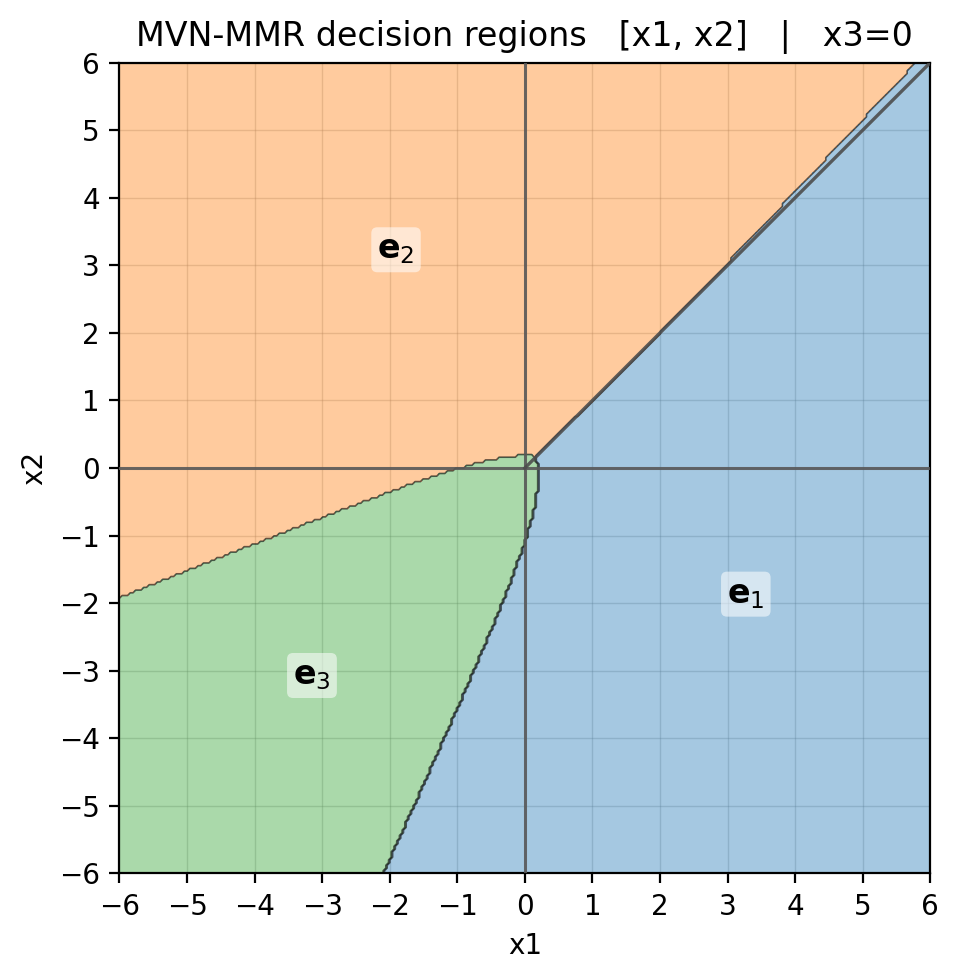}\includegraphics[width=0.3\textwidth]{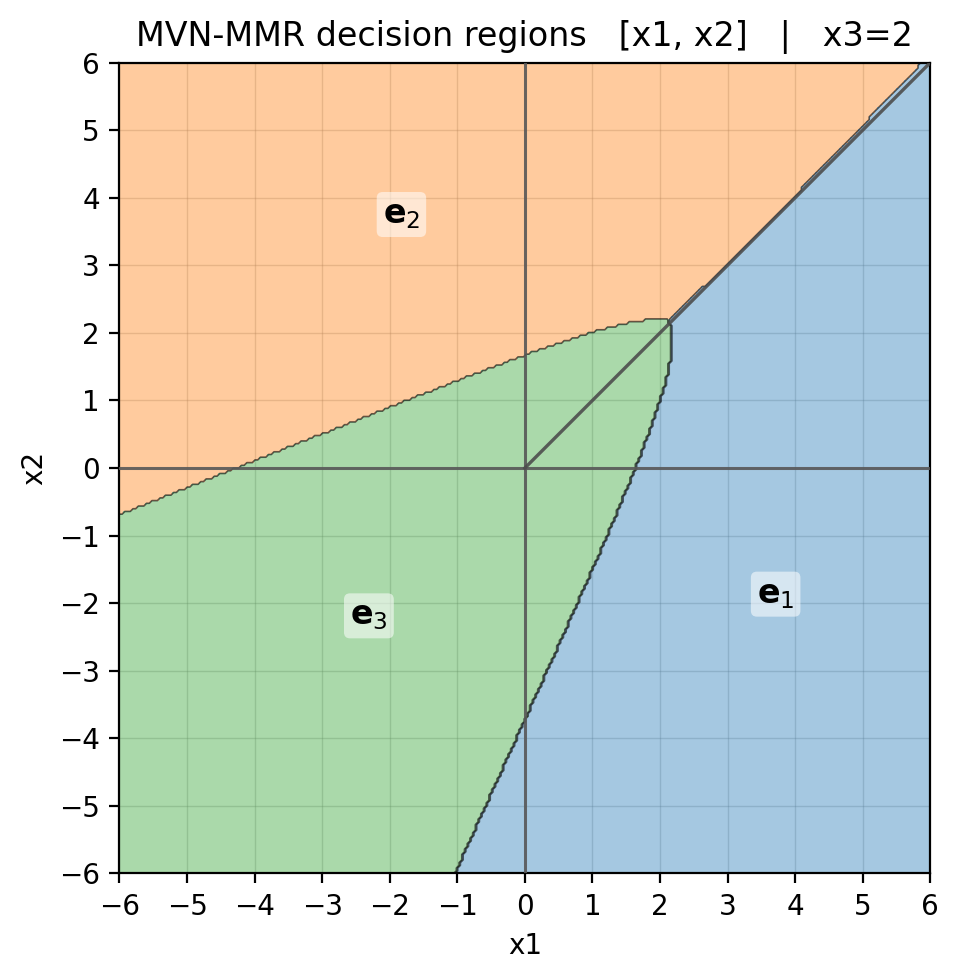}}
\par\end{centering}
\begin{centering}
\subfloat[\label{fig:,-fixed}$\left(x_{1},x_{3}\right)$, fixed $x_{2}=-2,0,2$]{
\centering{}\includegraphics[width=0.3\textwidth]{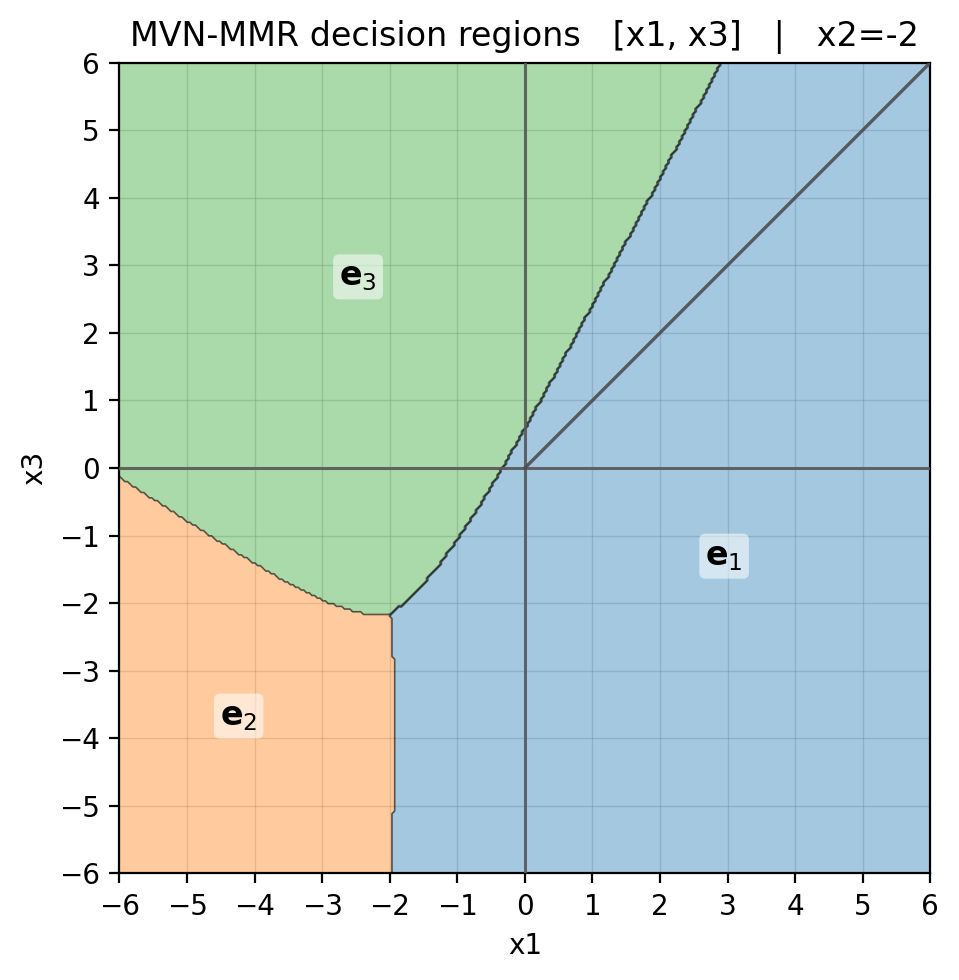}\includegraphics[width=0.3\textwidth]{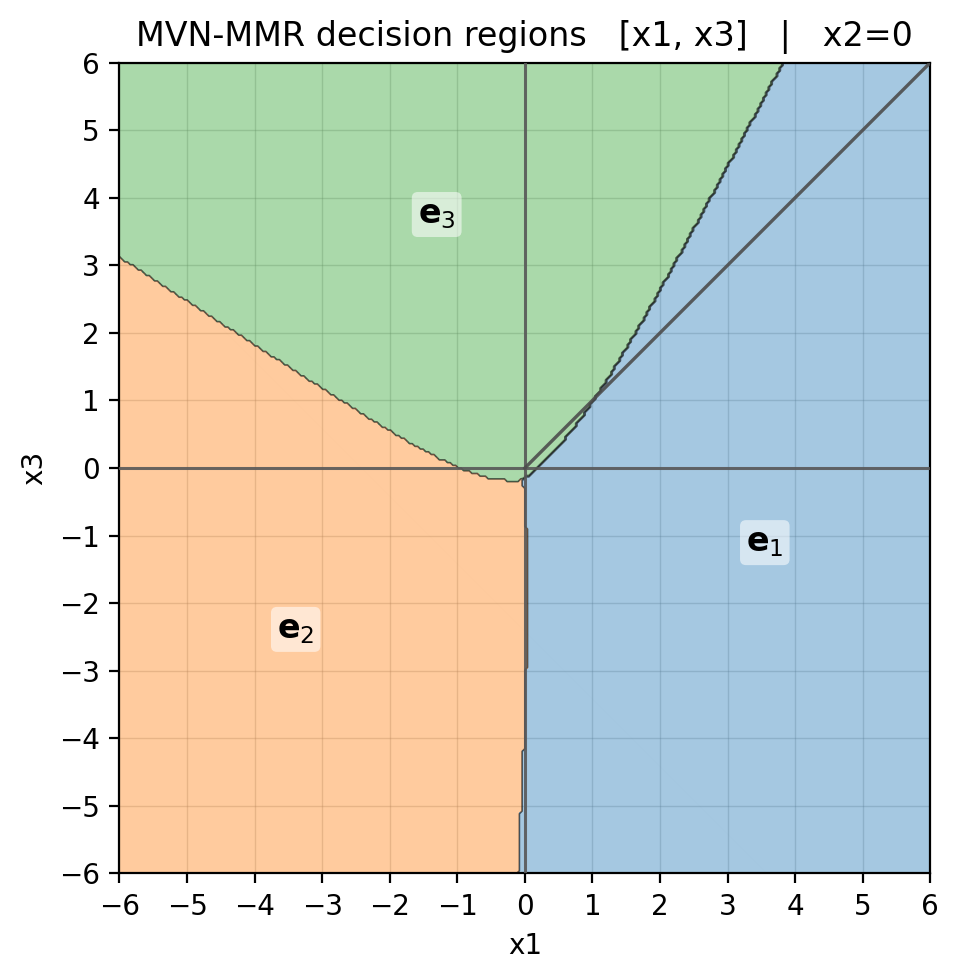}\includegraphics[width=0.3\textwidth]{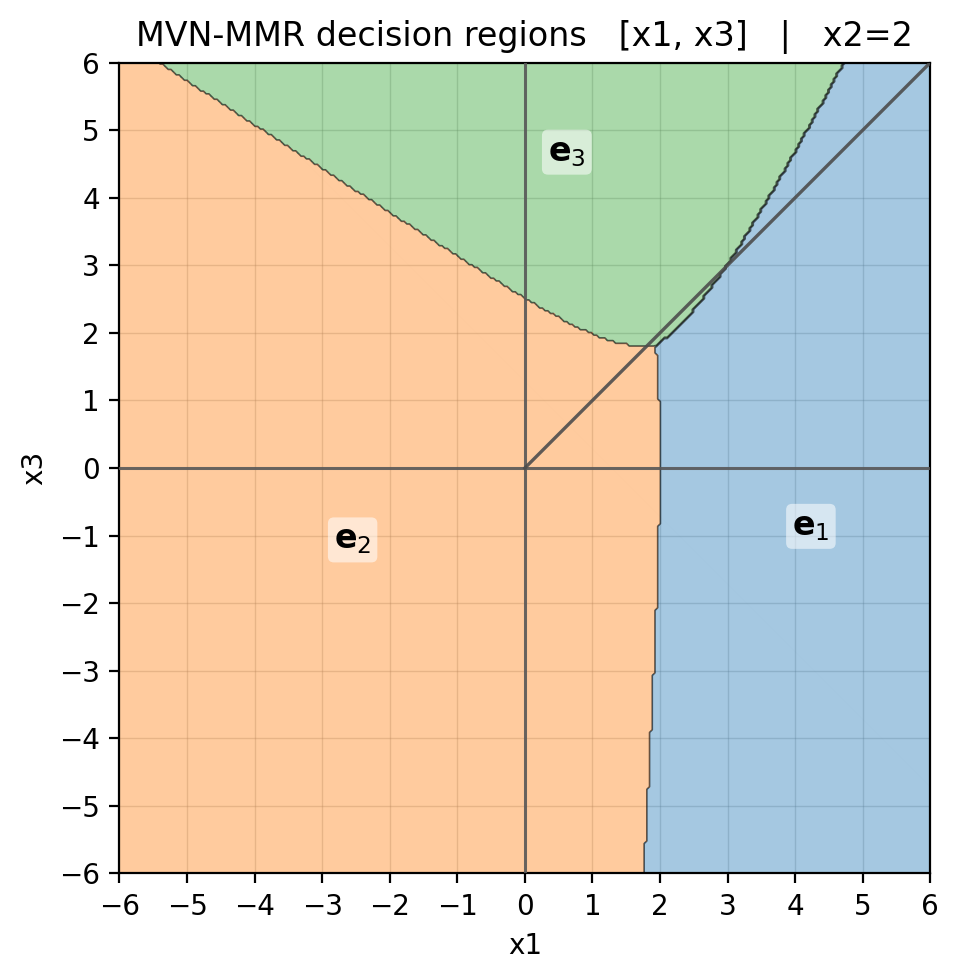}}
\par\end{centering}
\begin{centering}
\subfloat[\label{fig:,-fixed-1}$\left(x_{2},x_{3}\right)$, fixed $x_{1}=-2,0,2$]{
\centering{}\includegraphics[width=0.3\textwidth]{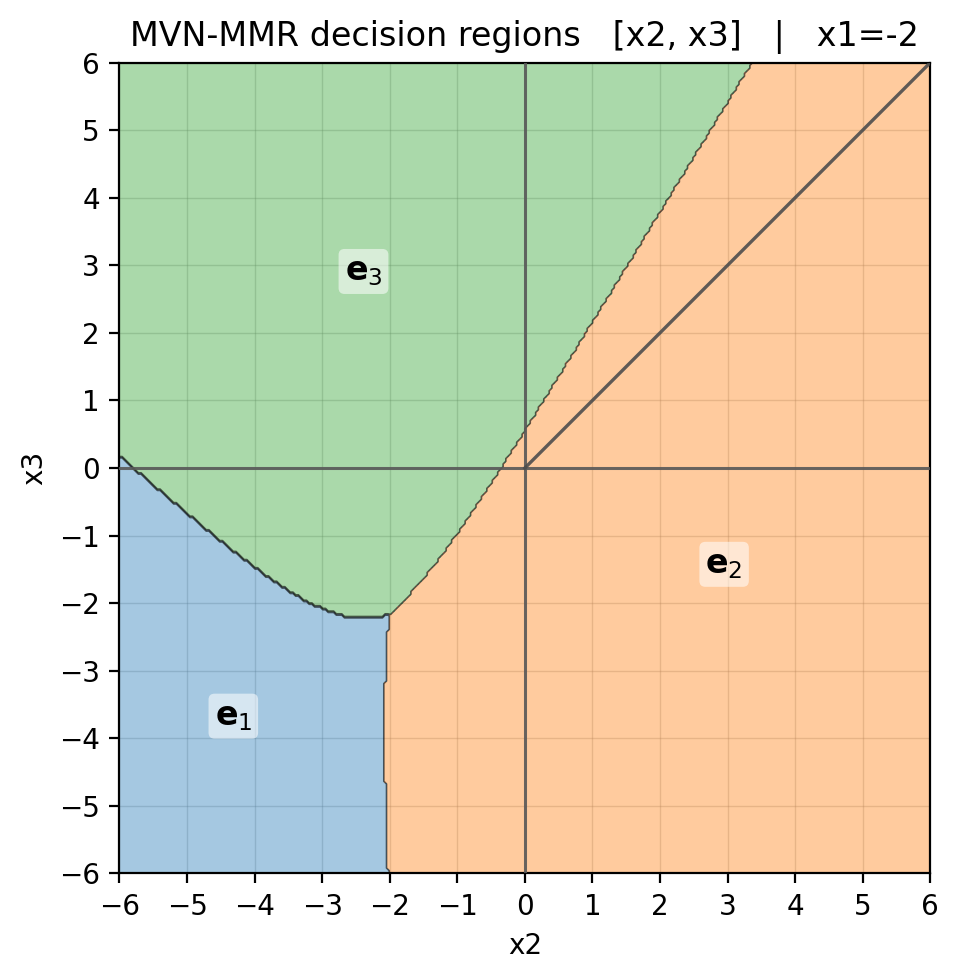}\includegraphics[width=0.3\textwidth]{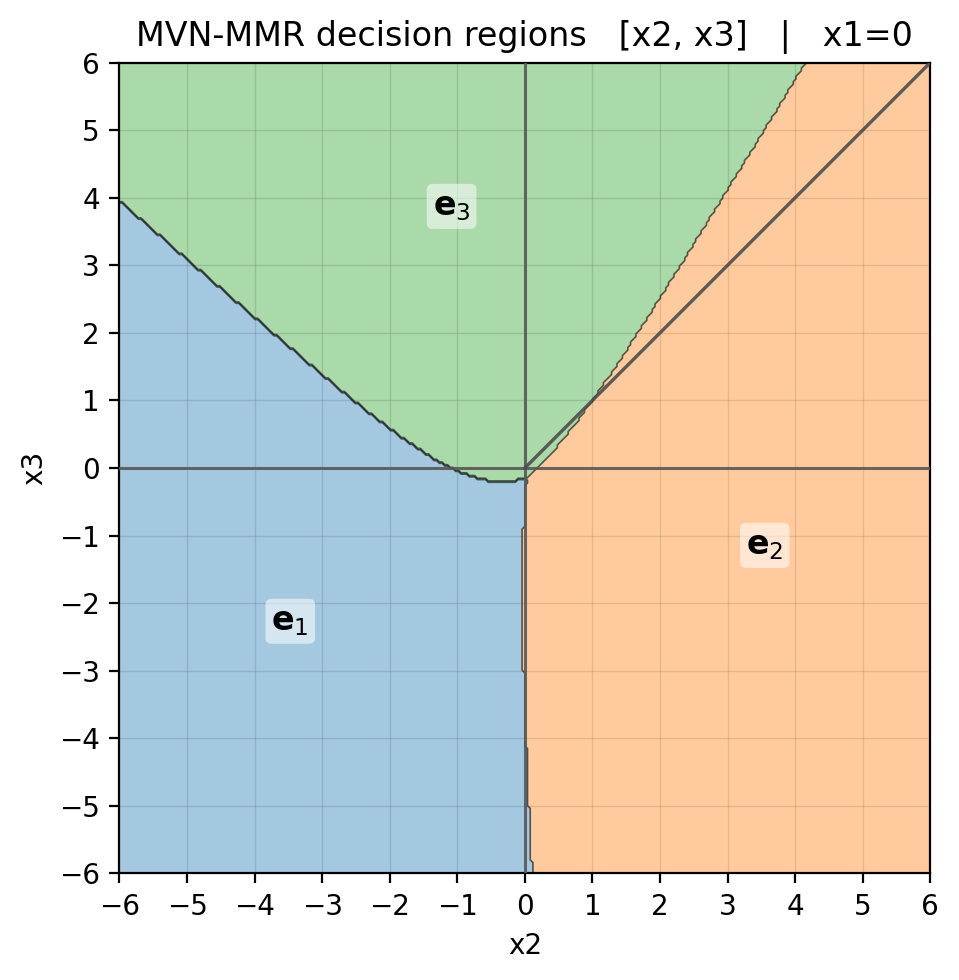}\includegraphics[width=0.3\textwidth]{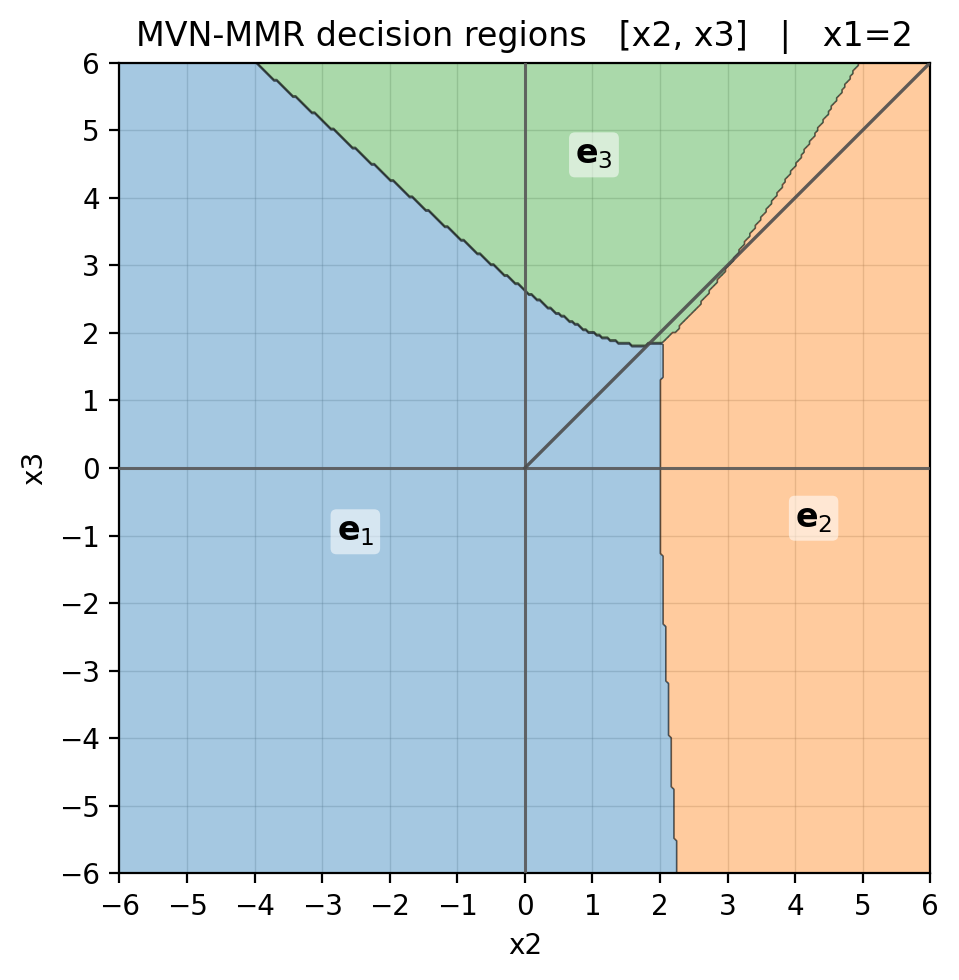}}
\par\end{centering}
{\small Note. The decision regions are calculated by numerically maximizing
the Bayes risk under the 3-point supported least-favorable prior with
the asymmetric covariance matrix $\bm{\Sigma}_{\text{asym}}$. $\mathbf{e}_{j}$
denotes the MMR decision region of the $j$-th arm.}{\small\par}
\end{figure}

\subsubsection{Discussion}

With $J=2$, the joint shift invariance discussed in Section \ref{subsec:Location-Invariance-of}
eliminates the nuisance location $t\mathbf{1}_{2}$ and reduces the
experiment to a single contrast $x_{2}-x_{1}$ with parameter $\theta_{2}-\theta_{1}$.
Under any location family, $x_{2}-x_{1}|\theta_{2}-\theta_{1}$ is
a one-dimensional location model with mean $\theta_{2}-\theta_{1}$
and variance $Var\left(x_{2}-x_{1}\right)$ that depends on the original
covariance matrix $\bm{\Sigma}$ but not on the common shift. Regret
for any rule depends on the location-parameter vector $\bm{\theta}$
only through $\theta_{2}-\theta_{1}$, and on the data only through
$x_{2}-x_{1}$. Therefore, the MMR-risk problem collapses to a one-dimensional
sign decision with symmetric noise, for which the unique MMR rule
is a threshold in $x_{2}-x_{1}$. Equalization of the risk at the
support points (Theorem \ref{thm:(Minimax-Theorem-for} (iv)) forces
the threshold for $x_{2}-x_{1}$ to be 0. In the original $\left(x_{1},x_{2}\right)$
plane, this is precisely the $45^{\circ}$ bisector $x_{1}=x_{2}$.
This is exactly what we see in Table \ref{tab:Least-favorable-prior-and}
and Figure \ref{fig:Two-arm-decision-boundaries}.

For $J\geq3$, the same invariance leaves a $\left(J-1\right)$-dimensional
vector of contrasts $\left(x_{2}-x_{1},...,x_{J}-x_{1}\right)$ with
a joint Gaussian law whose covariance inherits all heteroskedasticity
and correlations from $\bm{\Sigma}$. Deciding that arm $i$ is the
best requires simultaneously beating $J-1$ correlated comparisons
(e.g., arm\,3 must beat both arms\,1 and\,2), so there is no single
one-dimensional sufficient direction and no symmetry that pins each
pairwise threshold to zero. In the dual problem, the $i$-$j$ boundary
solves 
\[
h_{ij}\left(\mathbf{x}\right)\equiv\sum_{k=1}^{J}\pi^{k}p_{\bm{\theta}^{k}}\left(\mathbf{x}\right)\left(\theta_{i}^{k}-\theta_{j}^{k}\right)=0
\]
(see Theorem \ref{thm:(Characterization-of-the}). With two arms,
$h_{12}\left(\mathbf{x}\right)$ contains two exponential terms and
rearranges to a single linear condition, hence a straight line. With
three or more arms, $h_{ij}\left(\mathbf{x}\right)$ is a sum of at
least three exponentials with mixed signs, so the zero-level set is
generically curved. When $\bm{\Sigma}$ is homoskedastic/permutation-invariant,
symmetry still forces linear boundaries with the empirical success
rule (Figure \ref{fig:Three-arm-sym_decision-boundaries} and Table\,\ref{tab:sym_lfp_J3}).
However, when $\bm{\Sigma}$ breaks permutation invariance, nothing
compels symmetry across arms; the minimax rule bends and tilts the
boundaries to penalize high-variance arms, shrinking their decision
regions and requiring stronger evidence to select them. This is the
inward shift of the $\mathbf{e}_{3}$ region in Figure\,\ref{fig:Three-arm-asym-decision-boundaries}
and the asymmetric prior weights in Table\,\ref{tab:asym_lfp_J3}.

The prior probabilities $\pi^{k}$ are $1/J$ in Tables \ref{tab:2-sym},
\ref{tab:2-asym}, and \ref{tab:sym_lfp_J3}, whereas they deviate
from $1/J$ in Table \ref{tab:asym_lfp_J3}. Specifically, even when
the covariance matrix is asymmetric in two arms, $\pi^{k}=1/2$ for
both arms. The intuition for this follows from Theorem \ref{thm:(Minimax-Theorem-for}
(iv) combined with the Bayes risk expression $V=\sum_{k\in\mathcal{J}}\pi^{k}R\left(\bm{\theta}^{k},\bm{\delta}\right)$.
Theorem \ref{thm:(Minimax-Theorem-for} (iv) states that the pointwise
risks at the support points of the least-favorable prior, $R\left(\bm{\theta}^{k},\bm{\delta}\right)$,
equal $V$. Hence, the outer weighted sum in the Bayes risk expression
does not impose any restriction on the values of $\pi^{k}$. What
determines $\pi^{k}$ are the decision boundaries of the MMR decision
rule $\bm{\delta}$, which takes $\left(\pi^{k},\bm{\theta}^{k}\right)_{k\in\mathcal{J}}$
as its argument (see equation (\ref{eq:gaussian_integral})). When
the optimal MMR decision boundaries are symmetric, as in two-arm experiments
or in multi-arm experiments with $\bm{\Sigma}_{\text{sym}}$, the
prior probabilities satisfy $\pi^{k}=1/J$. 

However, when the optimal MMR decision boundaries are asymmetric,
the implied prior probabilities deviate from $1/J$. Intuitively,
because the MMR rule shrinks the selection region for a high-variance
arm, regret for the shrunk region would fall unless nature increases
both the distance of that arm’s support from the line $t\mathbf{1}_{J}$
and its prior mass. The optimized solution therefore loads more prior
weight on the high-variance arm and pushes its support deeper into
$\Theta^{i}$. This is exactly what we see in Table\,\ref{tab:asym_lfp_J3}:
the high-variance arm\,3 receives the largest prior weight ($\pi^{3}=0.433$)
and a more distant support point from the line $t\mathbf{1}_{J}$
($d\left(\bm{\theta}^{k},t\mathbf{1}_{J}\right)=1.679$), jointly
restoring risk equalization stated in Theorem \ref{thm:(Minimax-Theorem-for}
(iv).

\section{Concluding Remarks}\label{sec:Conclusion}

We characterized the MMR-risk-optimal best-population selection rule,
established local asymptotics when the mean rewards are only asymptotically
normal, and illustrated the resulting decision boundaries. Our local
asymptotic approximation allows the plug-in of mean-reward estimates
directly into the MVN-MMR decision boundaries, where a consistent
covariance estimator translates directly into decision boundaries
for mean rewards. These multi-arm plug-in decision boundaries generalize
the two-arm threshold MMR decision rules substantially. Our results
allow researchers and practitioners to compare multiple competing
policies by designing and implementing multi-arm RCTs within a unified
decision-theoretic framework.

\newpage{}

{\small\bibliographystyle{ecta}
\bibliography{multi_arm_ammr}
}{\small\par}

\newpage{}

\appendix

\part*{Appendix}

\section{Preliminary Results}\label{subsec:Preliminary-Results-for}
\begin{lem}[Properties of the Regret Loss]
\label{lem:A1}The regret loss $L\left(\bm{\phi}\left(\mathbf{x}\right),\bm{\theta}\right)$
satisfies the following:

(i) $L\left(\bm{\phi}\left(\mathbf{x}\right),\bm{\theta}\right)$
is bounded below by $0$ and bounded above by $2B$.

(ii) $L\left(\bm{\phi}\left(\mathbf{x}\right),\bm{\theta}\right)$
is 2-Lipschitz in $\bm{\theta}$.
\end{lem}
\begin{proof}
(i) is straightforward from the definition of the regret loss and
the parameter space $\Theta=\left[-B,B\right]^{J}$. (ii) follows
by
\begin{align}
\left|L\left(\bm{\phi}\left(\mathbf{x}\right),\bm{\theta}\right)-L\left(\bm{\phi}\left(\mathbf{x}\right),\bm{\theta}'\right)\right| & \leq\left|\max_{k\in\mathcal{J}}\theta_{k}-\max_{k\in\mathcal{J}}\theta_{k}'\right|+\sum_{j\in\mathcal{J}}\left|\theta_{j}-\theta_{j}'\right|\phi_{j}\left(\mathbf{x}\right)\nonumber \\
 & \leq2\left\Vert \bm{\theta}-\bm{\theta}'\right\Vert _{\infty}\leq2\left\Vert \bm{\theta}-\bm{\theta}'\right\Vert _{2},\label{eq:regretloss_bound}
\end{align}
where the last inequality used $\left\Vert \cdot\right\Vert _{\infty}\leq\left\Vert \cdot\right\Vert _{2}$.
\end{proof}
\begin{lem}[Continuity of Risk in $\bm{\theta}$]
\label{lem:A2}For any decision rule $\bm{\phi}$, the regret risk
$R\left(\bm{\theta},\bm{\phi}\right)$ defined in equation (\ref{eq:risk})
of Section \ref{subsec:Setup} is continuous in $\bm{\theta}$.
\end{lem}
\begin{proof}
$\left\{ P_{\bm{\theta}}\right\} _{\bm{\theta}\in\Theta}$ is a multivariate
location family with fully supported densities $p_{\bm{\theta}}$
in $\mathbb{R}^{J}$. Therefore, for any sequence $\bm{\theta}_{n}\rightarrow\bm{\theta}$,
\[
\int\left|p_{\bm{\theta}_{n}}-p_{\bm{\theta}}\right|d\mu\rightarrow0,
\]
which satisfies condition (A7) of \citet[ch.3]{Liese2008}. Furthermore,
the regret loss is bounded and continuous in $\bm{\theta}$ for any
fixed action $\mathbf{a}\in\bm{\Delta}$. Therefore, Proposition 3.25
of \citet[ch.3]{Liese2008} applies, implying that $R\left(\bm{\theta},\bm{\phi}\right)$
is continuous in $\bm{\theta}$.
\end{proof}
\begin{lem}[Uniform Convergence of MVN Densities Over $\bm{\theta}$ When $\bm{\Sigma}_{n}\rightarrow\bm{\Sigma}$]
\label{lem:A3}Let $\mathcal{S}=\left\{ \bm{\Sigma}\in\mathbb{S}_{++}^{J}:\text{\underbar{\ensuremath{\lambda}}}\mathbf{I}_{J}\preceq\bm{\Sigma}\preceq\bar{\lambda}\mathbf{I}_{J}\right\} $
be a compact set of p.d. covariance matrix for some $0<\underbar{\ensuremath{\lambda}}<\bar{\lambda}<\infty$.
Denote by $p_{\bm{\theta},\bm{\Sigma}}$ the $\mathcal{N}\left(\bm{\theta},\bm{\Sigma}\right)$
density. If $\bm{\Sigma}_{n}\rightarrow\bm{\Sigma}\in\mathcal{S}$,
then 
\[
\sup_{\bm{\theta}\in\Theta}\left\Vert p_{\bm{\theta},\bm{\Sigma}_{n}}-p_{\bm{\theta},\bm{\Sigma}}\right\Vert _{L^{1}\left(\mathbb{R}^{J}\right)}\rightarrow0\qquad\text{as }n\rightarrow\infty.
\]
\end{lem}
\begin{proof}
By the translation $\mathbf{u}=\mathbf{x}-\bm{\theta}$, 
\[
\int\left|p_{\bm{\theta},\bm{\Sigma}_{n}}\left(\mathbf{x}\right)-p_{\bm{\theta},\bm{\Sigma}}\left(\mathbf{x}\right)\right|d\mu\left(\mathbf{x}\right)=\int\left|p_{\mathbf{0},\bm{\Sigma}_{n}}\left(\mathbf{u}\right)-p_{\mathbf{0},\bm{\Sigma}}\left(\mathbf{u}\right)\right|d\mu\left(\mathbf{u}\right),
\]
so the supremum over $\bm{\theta}\in\Theta$ equals the case $\bm{\theta}=\mathbf{0}$.
For each fixed $\mathbf{u}\in\mathbb{R}^{J}$, $p_{\mathbf{0},\bm{\Sigma}_{n}}\left(\mathbf{u}\right)\rightarrow p_{\mathbf{0},\bm{\Sigma}}\left(\mathbf{u}\right)$
by continuity of the MVN density in $\bm{\Sigma}$. Using $\text{\underbar{\ensuremath{\lambda}}}\mathbf{I}_{J}\preceq\bm{\Sigma}\preceq\bar{\lambda}\mathbf{I}_{J}$,
$p_{\mathbf{0},\bm{\Sigma}_{n}}\left(\mathbf{u}\right)\leq\left(2\pi\right)^{-\frac{J}{2}}\text{\underbar{\ensuremath{\lambda}}}^{-\frac{J}{2}}\exp\left(-\frac{1}{2\bar{\lambda}}\left\Vert \mathbf{u}\right\Vert _{2}^{2}\right)\in L^{1}\left(\mathbb{R}^{J}\right)$.
Dominated convergence gives the desired result.
\end{proof}

\section{Proofs of the Main Results}

\subsection{Proof of Theorem \ref{thm:(Minimax-Theorem-for}}\label{subsec:Thm1_proof}

\paragraph{Proof of (i) and (ii):}

The proof of (i) and (ii) proceeds by verifying the conditions for
Theorem 2.4.2 of \citet{Blackwell1954} for the $S$-game formulation.
First, the action set for the selection problem is a finite set $\mathcal{J}$,
corresponding to Player I in the $S$-game. Next, Player II is nature,
who selects $\bm{\theta}\in\Theta$.\footnote{Note that this is opposite to the usual interpretation of $S$-games,
where Player I is nature and Player II is the statistician.} Lastly, the set of attainable risk across the parameter space $\Theta$
$S=\left\{ \left(\max_{k\in\mathcal{J}}\left\{ \theta_{k}\right\} -\theta_{j}\right)_{j\in\mathcal{J}}:\bm{\theta}\in\Theta\right\} $
is a closed and bounded (but not convex) subset of $\mathbb{R}^{J}$.
Theorem 2.4.2 of \citet{Blackwell1954} and the associated discussion
imply that the game is strictly determined and that there exists a
prior $\tilde{\pi}$ supported on at most $J$ distinct points of
$\Theta$ that attains the value $V<\infty$. $0\leq V\leq2B$ is
because the the regret loss $0\leq L\left(\bm{\phi}\left(\mathbf{x}\right),\bm{\theta}\right)\leq2B$,
so the associated risk is also bounded within $\left[0,2B\right]$.%

\paragraph{Proof of (iii): }

(iii) then follows from Proposition 3.56 of \citet{Liese2008}%
{} by invoking the fact that $\left(\tilde{\pi},\bm{\phi}\right)=\left(\pi,\bm{\delta}\right)$
is the saddle point of $\mathbb{E}_{\tilde{\pi}}\left[R\left(\bm{\theta},\bm{\phi}\right)\right]$.%

\paragraph{Proof of (iv):}

Part (i) asserts that $V=\sup_{\bm{\theta}\in\Theta}R\left(\bm{\theta},\bm{\delta}\right)$,
i.e., under the maximin-minimax rule $\bm{\delta}$, the maximum risk
that can be attained across the entire parameter space $\Theta$ is
$V$ so $R\left(\bm{\theta},\bm{\delta}\right)$ cannot exceed $V$.
Suppose, now, $R\left(\bm{\theta},\bm{\delta}\right)<V$ for some
$\bm{\theta}\in\text{supp}\left(\pi\right)$. Then, the Bayes risk
has to be strictly smaller than $V$, a contradiction to (\ref{eq:minimax_value}),
specifically to $\mathbb{E}_{\pi}\left[R\left(\bm{\theta},\bm{\delta}\right)\right]=V$.
This proves (iv).

\subsection{Proof of Lemma \ref{lem:(Pointwise-Bayes-Action}}\label{subsec:Lem1_proof}

The Bayes risk of any decision rule $\bm{\phi}$ under $\pi$ is:
\begin{eqnarray}
 &  & \sum_{k=1}^{m}\pi^{\left(k\right)}R\left(\bm{\theta}^{\left(k\right)},\bm{\phi}\right)\nonumber \\
 & = & \sum_{k=1}^{m}\pi^{\left(k\right)}\int L\left(\bm{\phi}\left(\mathbf{x}\right),\bm{\theta}^{\left(k\right)}\right)p_{\bm{\theta}^{\left(k\right)}}\left(\mathbf{x}\right)d\mu\left(\mathbf{x}\right)\nonumber \\
 & = & \sum_{k=1}^{m}\pi^{\left(k\right)}\int\left[\max_{r\in\mathcal{J}}\theta_{r}^{\left(k\right)}-\sum_{j\in\mathcal{J}}\theta_{j}^{\left(k\right)}\phi_{j}\left(\mathbf{x}\right)\right]p_{\bm{\theta}^{\left(k\right)}}\left(\mathbf{x}\right)d\mu\left(\mathbf{x}\right)\nonumber \\
 & = & \int\sum_{k=1}^{m}\left(\max_{r\in\mathcal{J}}\theta_{r}^{\left(k\right)}\right)\pi^{\left(k\right)}p_{\bm{\theta}^{\left(k\right)}}\left(\mathbf{x}\right)d\mu\left(\mathbf{x}\right)-\sum_{j\in\mathcal{J}}\int\phi_{j}\left(\mathbf{x}\right)\left(\sum_{k=1}^{m}\pi^{\left(k\right)}p_{\bm{\theta}^{\left(k\right)}}\left(\mathbf{x}\right)\theta_{j}^{\left(k\right)}\right)d\mu\left(\mathbf{x}\right),\label{eq:Bayes}
\end{eqnarray}
where we interchanged the finite sum and the integral because both
the regret loss and the risk are bounded by $2B$ (see Lemma \ref{lem:A1}).

Define, for each $\mathbf{x}$:
\begin{align}
C\left(\mathbf{x}\right) & =\sum_{k=1}^{m}\left(\max_{r\in\mathcal{J}}\theta_{r}^{\left(k\right)}\right)\pi^{\left(k\right)}p_{\bm{\theta}^{\left(k\right)}}\left(\mathbf{x}\right),\quad h_{i}\left(\mathbf{x}\right)=\sum_{k=1}^{m}\pi^{\left(k\right)}p_{\bm{\theta}^{\left(k\right)}}\left(\mathbf{x}\right)\theta_{i}^{\left(k\right)}.\label{eq:Bayesrisk}
\end{align}
By construction, $C$ and $h_{i}$ are measurable and integrable.
Simplifying the notation in (\ref{eq:Bayes}) gives the Bayes risk:
\[
\sum_{k=1}^{m}\pi^{\left(k\right)}R\left(\bm{\theta}^{\left(k\right)},\bm{\phi}\right)=\int\left(C\left(\mathbf{x}\right)-\sum_{j\in\mathcal{J}}\phi_{j}\left(\mathbf{x}\right)h_{j}\left(\mathbf{x}\right)\right)d\mu\left(\mathbf{x}\right).
\]

Next, consider the Bayes-risk difference between $\bm{\delta}$ defined
in (\ref{eq:argmax_delta_Bayes}) and any rule $\bm{\phi}$:
\begin{align}
 & \sum_{k=1}^{m}\pi^{\left(k\right)}R\left(\bm{\theta}^{\left(k\right)},\bm{\phi}\right)-\sum_{k=1}^{m}\pi^{\left(k\right)}R\left(\bm{\theta}^{\left(k\right)},\bm{\delta}\right)\nonumber \\
= & \int\left(\sum_{i\in\mathcal{J}}\delta_{i}\left(\mathbf{x}\right)h_{i}\left(\mathbf{x}\right)-\sum_{j\in\mathcal{J}}\phi_{j}\left(\mathbf{x}\right)h_{j}\left(\mathbf{x}\right)\right)d\mu\left(\mathbf{x}\right)\nonumber \\
= & \int\left(\left(\sum_{i\in\mathcal{J}}\delta_{i}\left(\mathbf{x}\right)\right)\left(\sum_{j\in\mathcal{J}}\phi_{j}\left(\mathbf{x}\right)\right)\frac{\sum_{i\in\mathcal{J}}\delta_{i}\left(\mathbf{x}\right)h_{i}\left(\mathbf{x}\right)-\sum_{j\in\mathcal{J}}\phi_{j}\left(\mathbf{x}\right)h_{j}\left(\mathbf{x}\right)}{\left(\sum_{i\in\mathcal{J}}\delta_{i}\left(\mathbf{x}\right)\right)\left(\sum_{j\in\mathcal{J}}\phi_{j}\left(\mathbf{x}\right)\right)}\right)d\mu\left(\mathbf{x}\right)\nonumber \\
= & \int\left(\sum_{i\in\mathcal{J}}\delta_{i}\left(\mathbf{x}\right)\left(\sum_{j\in\mathcal{J}}\phi_{j}\left(\mathbf{x}\right)\right)h_{i}\left(\mathbf{x}\right)-\sum_{j\in\mathcal{J}}\left(\sum_{i\in\mathcal{J}}\delta_{i}\left(\mathbf{x}\right)\right)\phi_{j}\left(\mathbf{x}\right)h_{j}\left(\mathbf{x}\right)\right)d\mu\left(\mathbf{x}\right)\nonumber \\
= & \int\left(\sum_{i\in\mathcal{J}}\sum_{j\in\mathcal{J}}\delta_{i}\left(\mathbf{x}\right)\phi_{j}\left(\mathbf{x}\right)\left(h_{i}\left(\mathbf{x}\right)-h_{j}\left(\mathbf{x}\right)\right)\right)d\mu\left(\mathbf{x}\right)\label{eq:br1}\\
\geq & 0,\nonumber 
\end{align}
where we repeatedly used $\sum_{i\in\mathcal{J}}\delta_{i}\left(\mathbf{x}\right)=\sum_{j\in\mathcal{J}}\phi_{j}\left(\mathbf{x}\right)=1$.
The last inequality is because, by construction, $\delta_{i}\left(\mathbf{x}\right)=1$
whenever $h_{i}\left(\mathbf{x}\right)-h_{j}\left(\mathbf{x}\right)\geq0$
and $\delta_{i}\left(\mathbf{x}\right)=0$ otherwise. Because $\bm{\phi}$
is arbitrary, $\bm{\delta}$ attains the minimum Bayes risk among
all rules. Therefore, $\bm{\delta}$ is Bayes with respect to $\pi$.

\subsection{Proof of Theorem \ref{thm:thm_2}}\label{subsec:Thm2_proof}

\paragraph{Proof of (i):}

We first establish $R\left(\bm{\theta},\bm{\delta}\right)<2B$. Consider
the uniform rule $\phi_{i}\left(\mathbf{x}\right)=\frac{1}{J}$ $\forall i\in\mathcal{J}$,
which ignores the data. Then, for all $\bm{\theta}\in\Theta$, 
\[
R\left(\bm{\theta},\bm{\phi}\right)=\frac{1}{J}\sum_{j=1}^{J}\left(\max_{k}\theta_{k}-\theta_{j}\right)\leq2B\left(1-\frac{1}{J}\right).
\]
Because the upper bound in the RHS does not depend on $\bm{\theta}$,
\[
\sup_{\bm{\theta}\in\Theta}R\left(\bm{\theta},\bm{\phi}\right)\leq2B\left(1-\frac{1}{J}\right).
\]
Invoking that $\bm{\delta}$ is the minimax rule that attains the
value $V$ yields:
\begin{equation}
V=\sup_{\bm{\theta}\in\Theta}R\left(\bm{\theta},\bm{\delta}\right)\leq\sup_{\bm{\theta}\in\Theta}R\left(\bm{\theta},\bm{\phi}\right)=2B\left(1-\frac{1}{J}\right)<2B.\label{eq:MMR_value_strictly_smaller_than_2B}
\end{equation}

Now assume, for a contradiction, that $\delta_{i}\left(\mathbf{x}\right)=0$
$\mu$-a.e. $\mathbf{x}$ for some fixed $i$. Consider the extreme
parameter $\bm{\theta}^{\circ}=B\mathbf{e}_{i}+B\left(\mathbf{e}_{i}-\mathbf{1}_{J}\right)$,
i.e., the $i$-th component is $B$ and all other components are $-B$.
Because $p_{\bm{\theta}^{\circ}}$ is absolutely continuous with respect
to $\mu$, we have $\mathbb{E}_{\bm{\theta}^{\circ}}\left[\delta_{i}\left(\mathbf{x}\right)\right]=0$.
Then we have:
\[
R\left(\bm{\theta}^{\circ},\bm{\delta}\right)=B-\left(B\cdot0+\sum_{j\neq i}\left(-B\right)\mathbb{E}_{\bm{\theta}^{\circ}}\left[\delta_{j}\left(\mathbf{x}\right)\right]\right)=2B,
\]
which contradicts (\ref{eq:MMR_value_strictly_smaller_than_2B}).
If $\delta_{i}\left(\mathbf{x}\right)=1$ $\mu$-a.e., then $\delta_{j}\left(\mathbf{x}\right)=0$
$\mu$-a.e. for all other $j\neq i$. Choosing $\bm{\theta}^{\circ}=B\mathbf{e}_{j}+B\left(\mathbf{e}_{j}-\mathbf{1}_{J}\right)$
for some $j\neq i$ leads to the same contradiction.

\paragraph{Proof of (ii):}

The existence of a least-favorable prior supported on at most $J$
distinct points is guaranteed by Theorem \ref{thm:(Minimax-Theorem-for}
(ii). Throughout, we denote by $\pi$ such least-favorable prior,
by $V$ the value of the game defined in (\ref{eq:minimax_value}),
and by $\bm{\delta}$ the Bayes rule that minimizes the average risk
under $\pi$. Again define the Bayes score functions as in Lemma \ref{lem:(Pointwise-Bayes-Action}:
$h_{j}\left(\mathbf{x}\right)\equiv\sum_{k=1}^{m}\pi^{\left(k\right)}p_{\bm{\theta}^{\left(k\right)}}\left(\mathbf{x}\right)\theta_{j}^{\left(k\right)}$.

\uline{Contradiction under the Bayes rule \mbox{$\bm{\delta}$}}.
Assume, for a contradiction, that $\pi$ assigns no support point
to some region $\Theta^{i}$. Below, we show that we can construct
a Bayes rule $\bm{\delta}$ with respect to the least-favorable prior
$\pi$ such that $\delta_{i}\left(\mathbf{x}\right)=0$ $\mu$-a.e.
$\mathbf{x}$, leading to a contradiction to part (i).

By the contradiction assumption, for every support point $\bm{\theta}^{\left(k\right)}$,
the $i$-th component $\theta_{i}^{\left(k\right)}$ is not strictly
the largest. Suppose $\bm{\alpha}=\left(\alpha_{j}\right)_{j\in\mathcal{J}}$
is a nonnegative weight vector that is independent of $\mathbf{x}$,
and let $w^{\left(k\right)}\left(\bm{\alpha}\right)\equiv\sum_{j\in\mathcal{J}}\alpha_{j}\left(\theta_{j}^{\left(k\right)}-\theta_{i}^{\left(k\right)}\right)$.
Multiplying $h_{j}\left(\mathbf{x}\right)$ by $\alpha_{j}$ and summing
over $j$ yields, for every $\mathbf{x}$, 
\begin{equation}
\sum_{j\in\mathcal{J}}\alpha_{j}h_{j}\left(\mathbf{x}\right)-\left(\sum_{j\in\mathcal{J}}\alpha_{j}\right)h_{i}\left(\mathbf{x}\right)=\sum_{k=1}^{m}\pi^{\left(k\right)}p_{\bm{\theta}^{\left(k\right)}}\left(\mathbf{x}\right)\sum_{j\in\mathcal{J}}\alpha_{j}\left(\theta_{j}^{\left(k\right)}-\theta_{i}^{\left(k\right)}\right)=\sum_{k=1}^{m}\pi^{\left(k\right)}p_{\bm{\theta}^{\left(k\right)}}\left(\mathbf{x}\right)w^{\left(k\right)}\left(\bm{\alpha}\right).\label{eq:master_diff}
\end{equation}
If we manage to choose $\bm{\alpha}$ such that 
\begin{align}
\sum_{j\in\mathcal{J}}\alpha_{j}=1,\quad\sum_{j\neq i}\alpha_{j}>0,\quad\text{and }w^{\left(k\right)}\left(\bm{\alpha}\right) & \geq0\quad\forall k=1,...,m,\label{eq:weight_ineq_alpha}
\end{align}
then (\ref{eq:master_diff}) is nonnegative for every $\mathbf{x}$,
because the densities $p_{\bm{\theta}^{\left(k\right)}}\left(\cdot\right)$
are nonnegative and not identically zero. In turn, we have
\[
\sum_{j\in\mathcal{J}}\alpha_{j}h_{j}\left(\mathbf{x}\right)-h_{i}\left(\mathbf{x}\right)\geq0,
\]
and rearranging gives: 
\begin{align*}
h_{i}\left(\mathbf{x}\right) & \leq\sum_{j\neq i}\left(\frac{\alpha_{j}}{1-\alpha_{i}}\right)h_{j}\left(\mathbf{x}\right)\leq\max_{j\neq i}h_{j}\left(\mathbf{x}\right),
\end{align*}
where the last inequality holds because a convex weighted sum with
$\sum_{j\neq i}\left(\frac{\alpha_{j}}{1-\alpha_{i}}\right)\leq1$
cannot exceed the maximum. Therefore, $i$ is never uniquely optimal.
Lemma \ref{lem:(Pointwise-Bayes-Action} asserts a Bayes rule $\bm{\delta}$
takes the form $\delta_{i}\left(\mathbf{x}\right)=1\left(i=\arg\max_{j\in\mathcal{J}}h_{j}\left(\mathbf{x}\right)\right)$
with any tie-breaking rule. Therefore, we may choose a tie-breaking
rule that never favors arm $i$ at any $\mathbf{x}$ so that $\delta_{i}\left(\mathbf{x}\right)=0$
$\mu$-a.e., which is still Bayes with respect to the least-favorable
prior. However, $\delta_{i}\left(\mathbf{x}\right)=0$ $\mu$-a.e.
is contradiction to part (i) of the present theorem.

\uline{Existence of \mbox{$\bm{\alpha}$}.} It remains to choose the
weight vector $\bm{\alpha}$ that satisfies (\ref{eq:weight_ineq_alpha}).
Let $m\leq J$ be the number of the support points $\left\{ \bm{\theta}^{\left(k\right)}\right\} _{k=1}^{m}$.
Define the $m\times J$ matrix $\mathbf{D}$ with the $k$-th row:
\[
\mathbf{D}_{\left(k,\cdot\right)}=\begin{pmatrix}\theta_{1}^{\left(k\right)}-\theta_{i}^{\left(k\right)} & \theta_{2}^{\left(k\right)}-\theta_{i}^{\left(k\right)} & \cdots & \theta_{J}^{\left(k\right)}-\theta_{i}^{\left(k\right)}\end{pmatrix}.
\]
Note $D_{\left(k,i\right)}=\theta_{i}^{\left(k\right)}-\theta_{i}^{\left(k\right)}=0$
for all $k$, i.e., $i$-th column of $\mathbf{D}$ is a zero vector
by construction.

Now let $\epsilon\in\left(0,1\right)$. Condition (\ref{eq:weight_ineq_alpha})
can be reformulated as the following system of linear inequalities:
\begin{equation}
\bm{\alpha}\geq\mathbf{0},\quad\mathbf{1}^{\top}\bm{\alpha}=1,\quad\alpha_{i}\leq1-\epsilon,\quad\mathbf{D}\bm{\alpha}\geq\mathbf{0}.\label{eq:linear_system}
\end{equation}
$\mathbf{1}^{\top}\bm{\alpha}=1$ with $\alpha_{i}\leq1-\epsilon$
enforces some $\alpha_{j}>0$ for $j\neq i$. Then, exactly one of
the following systems has a solution, by the Farkas' alternative linear
inequality \citep[Theorem 12]{Border2013}: (1) system (\ref{eq:linear_system});
or (2) there exists $\mathbf{y}\geq\mathbf{0}$, $\mu\geq0$, and
$\lambda\in\mathbb{R}$ such that the following (\ref{eq:linear_system_2})
and (\ref{eq:linear_system_2_2}) hold:\footnote{See Theorem \ref{thm:(Farkas's-Alternative-Linear} in Appendix \ref{subsec:Theorems-Used-in}.
These systems are obtained by taking $\mathbf{A}=\begin{pmatrix}\mathbf{1}^{\top} & 0\\
\mathbf{e}_{i}^{\top} & 1
\end{pmatrix}\in\mathbb{R}^{2\times\left(J+1\right)}$, $\mathbf{b}=\begin{pmatrix}1\\
1-\epsilon
\end{pmatrix}\in\mathbb{R}^{2}$, $\mathbf{B}=\begin{pmatrix}-\mathbf{D} & \mathbf{0}\end{pmatrix}\in\mathbb{R}^{m\times\left(J+1\right)}$,
$\mathbf{c}=\mathbf{0}\in\mathbb{R}^{m}$, $\mathbf{x}=\begin{pmatrix}\bm{\alpha}\\
t
\end{pmatrix}\in\mathbb{R}^{J+1}$, $\mathbf{q}=\mathbf{y}\in\mathbb{R}_{+}^{m}$, $\mathbf{p}=\begin{pmatrix}-\lambda\\
-\mu
\end{pmatrix}$ in Theorem \ref{thm:(Farkas's-Alternative-Linear}'s notation, where
$t$ is an auxiliary nonnegative slack variable we introduce to fit
into the system.}%
\begin{align}
\mathbf{D}^{\top}\mathbf{y}+\lambda\mathbf{1}+\mu\mathbf{e}_{i} & \leq\mathbf{0}\label{eq:linear_system_2}\\
\lambda+\mu\left(1-\epsilon\right) & >0.\label{eq:linear_system_2_2}
\end{align}

We show that (2) cannot hold. Because $i$-th column of $\mathbf{D}$
is the zero vector, the $i$-th inequality in (\ref{eq:linear_system_2})
reduces to $0+\lambda\cdot1+\mu\leq0$. Because $\mu\geq0$, this
forces $\lambda\leq-\mu$. Then,
\[
\lambda+\mu\left(1-\epsilon\right)\leq-\mu+\mu\left(1-\epsilon\right)=-\mu\epsilon\leq0,
\]
which contradicts (\ref{eq:linear_system_2_2}). Therefore, (2) cannot
have a solution, and (1) must have a solution $\bm{\alpha}$ that
satisfies (\ref{eq:weight_ineq_alpha}).

Therefore, any least-favorable prior of support size $\leq J$ must
put positive mass on every region $\Theta^{i}$. Since the regions
$\left\{ \Theta^{i}\right\} _{i\in\mathcal{J}}$ are disjoint and
we have at most $J$ support points, the support must contain exactly
one point $\bm{\theta}^{i}$ in each $\Theta^{i}$ with a strictly
positive mass $\pi^{i}$.

\subsection{Proof of Theorem \ref{thm:(Characterization-of-the}}\label{subsec:Thm3_proof}

\paragraph{Proof of (i):}

Applying Lemma \ref{lem:(Pointwise-Bayes-Action} to the prior $\pi=\left(\pi^{j},\bm{\theta}^{j}\right)_{j\in\mathcal{J}}$
implies $\bm{\delta}$ is Bayes with respect to $\pi$, and Theorem
\ref{thm:(Minimax-Theorem-for} then implies $\bm{\delta}$ is minimax.%

\paragraph{Proof of (ii):}

Define $h_{ij}\left(\mathbf{x}\right)=\sum_{k=1}^{J}\pi^{k}p^{k}\left(\mathbf{x}\right)\left(\theta_{i}^{k}-\theta_{j}^{k}\right)$.%
{} Let $\tilde{\bm{\delta}}$ be any Bayes decision rule under $\pi=\left(\pi^{j},\bm{\theta}^{j}\right)_{j\in\mathcal{J}}$.
Consider the risk difference between $\tilde{\bm{\delta}}$ and $\bm{\delta}$,
derived in (\ref{eq:br1}) in the proof of Lemma \ref{lem:(Pointwise-Bayes-Action}:
\begin{equation}
\sum_{k=1}^{J}\pi^{k}R\left(\bm{\theta}^{k},\tilde{\bm{\delta}}\right)-\sum_{k=1}^{J}\pi^{k}R\left(\bm{\theta}^{k},\bm{\delta}\right)=\sum_{i=1}^{J}\sum_{j=1}^{J}\int\delta_{i}\left(\mathbf{x}\right)\tilde{\delta}_{j}\left(\mathbf{x}\right)h_{ij}\left(\mathbf{x}\right)d\mu\left(\mathbf{x}\right).\label{eq:bayes_disagree}
\end{equation}
The integrand in the RHS of (\ref{eq:bayes_disagree}) is nonnegative
as shown in the proof of Lemma \ref{lem:(Pointwise-Bayes-Action}
by definition of $\bm{\delta}$.

Because of our lexicographical assignment of the boundary points to
$\Theta^{i}$s, every $\theta_{i}^{k}-\theta_{j}^{k}$ for $i<j$
is strictly negative when $k=j$ because the surface $\left\{ \bm{\theta}\in\Theta:\theta_{i}=\theta_{j}\right\} $
is already included in $\Theta^{i}$ so it cannot be included in $\Theta^{j}$.
Hence, the vector $\left(\theta_{i}^{k}-\theta_{j}^{k}\right)_{k=1}^{J}$
is not identically zero. Then, under our assumption that $\mu\left(\left\{ \mathbf{x}:h_{ij}\left(\mathbf{x}\right)=0\right\} \right)=0$,
which reduces to the restrictions imposed on the shape of $p^{k}\left(\mathbf{x}\right)$s,
the level set $H=\bigcup_{i<j}\left\{ \mathbf{x}:h_{ij}\left(\mathbf{x}\right)=0\right\} $
is a finite union of $\mu$-null sets in $\mathbb{R}^{J}$.

For any $\mathbf{x}\in\mathbb{R}^{J}\backslash H$, there is a unique
index $i^{*}$ such that $h_{i^{*}j}\left(\mathbf{x}\right)>0\:\forall j\neq i^{*}$.
Define the set $A=\left\{ \mathbf{x}\in\mathbb{R}^{J}\backslash H:\bm{\delta}\left(\mathbf{x}\right)\neq\tilde{\bm{\delta}}\left(\mathbf{x}\right)\right\} $,
where $\bm{\delta}$ and $\tilde{\bm{\delta}}$ disagree on $\mathbb{R}^{J}\backslash H$.
$\mathbf{x}\in A$ implies $\delta_{i^{*}}\left(\mathbf{x}\right)=1$
by construction of $A$, $\bm{\delta}$, and $i^{*}$; while $\tilde{\delta}_{i^{*}}\left(\mathbf{x}\right)<1$
because $\delta_{i^{*}}\left(\mathbf{x}\right)\neq\tilde{\delta}_{i}\left(\mathbf{x}\right)$.
This implies there exists at least one $j\neq i^{*}$ such that $\tilde{\delta}_{j}\left(\mathbf{x}\right)>0$.

Now consider the integrand in the RHS of (\ref{eq:bayes_disagree}).
For every $\mathbf{x}\in A$, all the three terms of $\delta_{i^{*}}\left(\mathbf{x}\right)\tilde{\delta}_{j}\left(\mathbf{x}\right)h_{i^{*}j}\left(\mathbf{x}\right)$
are strictly positive. Thus, if $\bm{\delta}\left(\mathbf{x}\right)\neq\tilde{\bm{\delta}}\left(\mathbf{x}\right)$
on a set of positive Lebesgue measure (so $\mu\left(A\right)>0$),
the integral is strictly positive, yielding $\sum_{k=1}^{J}\pi^{k}R\left(\bm{\theta}^{k},\tilde{\bm{\delta}}\right)>\sum_{k=1}^{J}\pi^{k}R\left(\bm{\theta}^{k},\bm{\delta}\right)$.
However, this is contradiction to the assumption that $\tilde{\bm{\delta}}$
is Bayes with respect to $\pi$. Therefore, we conclude that $\bm{\delta}\left(\mathbf{x}\right)=\tilde{\bm{\delta}}\left(\mathbf{x}\right)$
$\mu$-a.e..

\paragraph{Proof of (iii):}

Since $\bm{\delta}$ is unique Bayes with respect to the least-favorable
prior $\pi$, Theorem 1.4 of \citet[ch. 5]{lehmann1998theory} asserts
that it is also unique minimax.

\paragraph{Proof of (iv):}

$\bm{\delta}$ is admissible because $\bm{\delta}$ is unique Bayes
with respect to a proper prior distribution \citep[e.g.,][ch. 4, Theorem 8]{Berger1985}.

\subsection{Proof of Proposition \ref{prop:Scaling}}\label{subsec:Prop1_proof}

\paragraph{Preliminary results:}

Let $\mathbf{y}=\frac{1}{\sqrt{n}}\mathbf{x}$ and $\bm{\vartheta}=\frac{1}{\sqrt{n}}\bm{\theta}$.
Then, using the affine transformation gives:
\begin{align}
p_{\bm{\theta},\bm{\Sigma}_{\left[1\right]}}\left(\mathbf{x}\right) & =\left|\bm{\Sigma}_{\left[1\right]}\right|^{-\frac{1}{2}}p_{\mathbf{0},\mathbf{I}}\left(\bm{\Sigma}_{\left[1\right]}^{-\frac{1}{2}}\left(\mathbf{x}-\bm{\theta}\right)\right)=\left|\bm{\Sigma}_{\left[1\right]}\right|^{-\frac{1}{2}}p_{\mathbf{0},\mathbf{I}}\left(\bm{\Sigma}_{\left[n\right]}^{-\frac{1}{2}}\left(\frac{1}{\sqrt{n}}\mathbf{x}-\frac{1}{\sqrt{n}}\bm{\theta}\right)\right)\nonumber \\
 & =\frac{\left|\bm{\Sigma}_{\left[1\right]}\right|^{-\frac{1}{2}}}{\left|\bm{\Sigma}_{\left[n\right]}\right|^{-\frac{1}{2}}}p_{\frac{1}{\sqrt{n}}\bm{\theta},\bm{\Sigma}_{\left[n\right]}}\left(\frac{1}{\sqrt{n}}\mathbf{x}\right)=n^{-\frac{J}{2}}p_{\bm{\vartheta},\bm{\Sigma}_{\left[n\right]}}\left(\mathbf{y}\right).\label{eq:density_eq}
\end{align}
We note that the Jacobian $\left|\frac{d\mu\left(\mathbf{y}\right)}{d\mu\left(\mathbf{x}\right)}\right|=n^{-\frac{J}{2}}$,
which yields 
\begin{equation}
p_{\bm{\vartheta},\bm{\Sigma}_{\left[n\right]}}\left(\mathbf{y}\right)d\mu\left(\mathbf{y}\right)=p_{\bm{\theta},\bm{\Sigma}_{\left[1\right]}}\left(\mathbf{x}\right)d\mu\left(\mathbf{x}\right).\label{eq:density_measure_eq}
\end{equation}

Because the regret loss is homogeneous of degree 1 in $\bm{\theta}$,
for any rule $\bm{\phi}$ we have
\begin{equation}
L\left(\bm{\phi}\left(\mathbf{x}\right),\bm{\theta}\right)=\sqrt{n}L\left(\bm{\phi}\left(\mathbf{x}\right),\frac{1}{\sqrt{n}}\bm{\theta}\right)=\sqrt{n}L\left(\bm{\phi}\left(\mathbf{x}\right),\bm{\vartheta}\right).\label{eq:loss_rescaling}
\end{equation}
Define the scaled rule 
\[
\bm{\varphi}\left(\mathbf{y}\right)=\bm{\phi}\left(\sqrt{n}\mathbf{y}\right)=\bm{\phi}\left(\mathbf{x}\right)=\bm{\varphi}\left(\frac{1}{\sqrt{n}}\mathbf{x}\right).
\]

\paragraph{Proof of (i):}

The risk-scaling identity is derived as follows:
\begin{align}
R_{\left[1\right]}\left(\bm{\theta},\bm{\phi}\right) & =\int L\left(\bm{\phi}\left(\mathbf{x}\right),\bm{\theta}\right)p_{\bm{\theta},\bm{\Sigma}_{\left[1\right]}}\left(\mathbf{x}\right)d\mu\left(\mathbf{x}\right)\nonumber \\
 & =\int\sqrt{n}L\left(\bm{\phi}\left(\mathbf{x}\right),\frac{1}{\sqrt{n}}\bm{\theta}\right)p_{\frac{1}{\sqrt{n}}\bm{\theta},\bm{\Sigma}_{\left[n\right]}}\left(\frac{1}{\sqrt{n}}\mathbf{x}\right)d\mu\left(\mathbf{x}\right)\nonumber \\
 & =\sqrt{n}\int L\left(\bm{\phi}\left(\sqrt{n}\mathbf{y}\right),\bm{\vartheta}\right)p_{\bm{\vartheta},\bm{\Sigma}_{\left[n\right]}}\left(\mathbf{y}\right)d\mu\left(\mathbf{y}\right)\label{eq:scaling}\\
 & =\sqrt{n}\int L\left(\bm{\varphi}\left(\mathbf{y}\right),\bm{\vartheta}\right)p_{\bm{\vartheta},\bm{\Sigma}_{\left[n\right]}}\left(\mathbf{y}\right)d\mu\left(\mathbf{y}\right)\label{eq:varphi}\\
 & =\sqrt{n}R_{\left[n\right]}\left(\bm{\vartheta},\bm{\varphi}\right),\label{eq:risk_scaling}
\end{align}
where we used $\mathbf{x}=\sqrt{n}\mathbf{y}$ with (\ref{eq:density_measure_eq})
in (\ref{eq:scaling}) and $\bm{\varphi}\left(\mathbf{y}\right)=\bm{\phi}\left(\sqrt{n}\mathbf{y}\right)$
in (\ref{eq:varphi}). Because $\bm{\vartheta}=\frac{1}{\sqrt{n}}\bm{\theta}$,
$R_{\left[n\right]}\left(\bm{\vartheta},\bm{\varphi}\right)$ is defined
on the rescaled parameter set $\Theta_{\left[n\right]}=\frac{1}{\sqrt{n}}\left[-B,B\right]^{J}$.

\paragraph{Proof of (ii):}

Let $\Phi\left(\bm{\phi}\right)=\bm{\varphi}$, which is a bijection
with the rescaled parameter set $\Theta_{\left[n\right]}=\frac{1}{\sqrt{n}}\Theta_{\left[1\right]}$.
Then, $R_{\left[1\right]}\left(\bm{\theta},\bm{\phi}\right)=\sqrt{n}R_{\left[n\right]}\left(\bm{\vartheta},\Phi\left(\bm{\phi}\right)\right)$.
Let 
\[
F_{\left[1\right]}\left(\bm{\phi}\right)=\sup_{\bm{\theta}}R_{\left[1\right]}\left(\bm{\theta},\bm{\phi}\right),\quad F_{\left[n\right]}\left(\bm{\phi}\right)=\sup_{\bm{\vartheta}}R_{\left[n\right]}\left(\bm{\vartheta},\Phi\left(\bm{\phi}\right)\right).
\]
Because $R_{\left[1\right]}\left(\bm{\theta},\bm{\phi}\right)=\sqrt{n}R_{\left[n\right]}\left(\frac{1}{\sqrt{n}}\bm{\theta},\Phi\left(\bm{\phi}\right)\right)$
for any $\bm{\theta}$ and $\bm{\phi}$,
\[
F_{\left[1\right]}\left(\bm{\phi}\right)=\sup_{\bm{\theta}}R_{\left[1\right]}\left(\bm{\theta},\bm{\phi}\right)=\sup_{\bm{\theta}}\sqrt{n}R_{\left[n\right]}\left(\frac{1}{\sqrt{n}}\bm{\theta},\Phi\left(\bm{\phi}\right)\right)=\sqrt{n}F_{\left[n\right]}\left(\Phi\left(\bm{\phi}\right)\right).
\]
Taking infima by invoking the fact that $\Phi\left(\cdot\right)$
is a bijection gives:
\[
V_{\left[1\right]}\equiv\inf_{\bm{\phi}}\sup_{\bm{\theta}}R_{\left[1\right]}\left(\bm{\theta},\bm{\phi}\right)=\inf_{\bm{\phi}}F_{\left[1\right]}\left(\bm{\phi}\right)=\sqrt{n}\inf_{\bm{\phi}}F_{\left[n\right]}\left(\Phi\left(\bm{\phi}\right)\right)=\sqrt{n}\inf_{\bm{\varphi}}\sup_{\bm{\vartheta}}R_{\left[n\right]}\left(\bm{\vartheta},\bm{\varphi}\right)\equiv\sqrt{n}V_{\left[n\right]}.
\]
Therefore, $V_{\left[n\right]}=\frac{1}{\sqrt{n}}V_{\left[1\right]}$.

\paragraph{Proof of (iii) and (iv):}

Let $\bm{\delta}_{\left[1\right]}$ be the minimax/maximin rule when
$m=1$ and $\pi_{\left[1\right]}=\left(\pi_{\left[1\right]}^{k},\bm{\theta}_{\left[1\right]}^{k}\right)_{k\in\mathcal{J}}$
is the associated least-favorable prior. Let $\bm{\delta}_{\left[n\right]}\left(\mathbf{y}\right)=\bm{\delta}_{\left[1\right]}\left(\sqrt{n}\mathbf{y}\right)=\bm{\delta}_{\left[1\right]}\left(\mathbf{x}\right)$
be our candidate rule. In the following, we first show that $\bm{\delta}_{\left[n\right]}\left(\mathbf{y}\right)=\bm{\delta}_{\left[1\right]}\left(\sqrt{n}\mathbf{y}\right)=\bm{\delta}_{\left[1\right]}\left(\mathbf{x}\right)$
is Bayes with respect to $\left(\pi^{k},\frac{1}{\sqrt{n}}\bm{\theta}_{\left[1\right]}^{k}\right)_{k\in\mathcal{J}}$,
and then the pair $\left(\bm{\delta}_{\left[n\right]},\left(\pi^{k},\frac{1}{\sqrt{n}}\bm{\theta}_{\left[1\right]}^{k}\right)_{k\in\mathcal{J}}\right)$
achieves the minimax value $V_{\left[n\right]}=\frac{1}{\sqrt{n}}V_{\left[1\right]}$
at each support point $\bm{\theta}_{\left[n\right]}^{k}\equiv\frac{1}{\sqrt{n}}\bm{\theta}_{\left[1\right]}^{k}$,
to conclude $\bm{\delta}_{\left[n\right]}$ is minimax and $\left(\pi^{k},\frac{1}{\sqrt{n}}\bm{\theta}_{\left[1\right]}^{k}\right)_{k\in\mathcal{J}}$
is least-favorable.

By Theorem \ref{thm:(Characterization-of-the}, the Bayes rule under
the priors $\pi_{\left[m\right]}=\left(\pi_{\left[m\right]}^{k},\bm{\theta}_{\left[m\right]}^{k}\right)_{k\in\mathcal{J}}$
when $m=1$ and $m=n$ are, respectively:
\begin{align}
\delta_{\left[1\right],i}\left(\mathbf{x}\right) & =\begin{cases}
1 & \text{if }\sum_{k=1}^{J}\pi^{k}\theta_{\left[1\right],i}^{k}p_{\bm{\theta}_{\left[1\right]}^{k},\bm{\Sigma}_{\left[1\right]}}\left(\mathbf{x}\right)>\sum_{k=1}^{J}\pi^{k}\theta_{\left[1\right],j}^{k}p_{\bm{\theta}_{\left[1\right]}^{k},\bm{\Sigma}_{\left[1\right]}}\left(\mathbf{x}\right)\:\forall j\neq i\\
0 & \text{otherwise}
\end{cases}\qquad\forall\mathbf{x}\in\mathbb{R}^{J}\label{eq:bayes_under_1}\\
\delta_{\left[n\right],i}\left(\mathbf{y}\right) & =\begin{cases}
1 & \text{if }\sum_{k=1}^{J}\pi^{k}\theta_{\left[n\right],i}^{k}p_{\bm{\theta}_{\left[n\right]}^{k},\bm{\Sigma}_{\left[n\right]}}\left(\mathbf{y}\right)>\sum_{k=1}^{J}\pi^{k}\theta_{\left[n\right],j}^{k}p_{\bm{\theta}_{\left[n\right]}^{k},\bm{\Sigma}_{\left[n\right]}}\left(\mathbf{y}\right)\:\forall j\neq i\\
0 & \text{otherwise}
\end{cases}\qquad\forall\mathbf{y}\in\mathbb{R}^{J}.\label{eq:bayes_under_n}
\end{align}
Suppose $\pi_{\left[n\right]}=\left(\pi^{k},\bm{\theta}_{\left[n\right]}^{k}\right)_{k\in\mathcal{J}}=\left(\pi^{k},\frac{1}{\sqrt{n}}\bm{\theta}_{\left[1\right]}^{k}\right)_{k\in\mathcal{J}}$.
Then, for each $i\in\mathcal{J}$,
\begin{align}
\sum_{k=1}^{J}\pi^{k}\theta_{\left[1\right],i}^{k}p_{\bm{\theta}_{\left[1\right]}^{k},\bm{\Sigma}_{\left[1\right]}}\left(\mathbf{x}\right) & =n^{-\frac{J}{2}}\sum_{k=1}^{J}\pi^{k}\theta_{\left[1\right],i}^{k}p_{\frac{1}{\sqrt{n}}\bm{\theta}_{\left[1\right]}^{k},\bm{\Sigma}_{\left[n\right]}}^{k}\left(\frac{1}{\sqrt{n}}\mathbf{x}\right)\label{eq:used_density_eq}\\
 & =n^{\frac{1}{2}-\frac{J}{2}}\sum_{k=1}^{J}\pi^{k}\left(\frac{\theta_{\left[1\right],i}^{k}}{\sqrt{n}}\right)p_{\frac{1}{\sqrt{n}}\bm{\theta}_{\left[1\right]}^{k},\bm{\Sigma}_{\left[n\right]}}^{k}\left(\frac{1}{\sqrt{n}}\mathbf{x}\right)\label{eq:assumption_imposed}\\
 & =n^{\frac{1}{2}-\frac{J}{2}}\sum_{k=1}^{J}\pi^{k}\theta_{\left[n\right],i}^{k}p_{\bm{\theta}_{\left[n\right]}^{k},\bm{\Sigma}_{\left[n\right]}}^{k}\left(\frac{1}{\sqrt{n}}\mathbf{x}\right),\label{eq:assumption_imposed2}
\end{align}
where (\ref{eq:used_density_eq}) used (\ref{eq:density_eq}) and
(\ref{eq:assumption_imposed2}) invoked $\bm{\theta}_{\left[n\right]}^{k}=\frac{1}{\sqrt{n}}\bm{\theta}_{\left[1\right]}^{k}$.
Therefore, $\delta_{\left[n\right],i}\left(\frac{1}{\sqrt{n}}\mathbf{x}\right)=1$
iff $\delta_{\left[1\right],i}\left(\mathbf{x}\right)=1$ for all
$i$, i.e., $\delta_{\left[1\right],i}\left(\sqrt{n}\mathbf{y}\right)=\delta_{\left[n\right],i}\left(\mathbf{y}\right)$
for all $i$.

Finally, the risk-scaling identity (\ref{eq:risk_scaling}) gives,
$\forall k\in\mathcal{J}$,
\begin{align*}
R_{\left[n\right]}\left(\frac{1}{\sqrt{n}}\bm{\theta}_{\left[1\right]}^{k},\bm{\delta}_{\left[n\right]}\right) & =\frac{1}{\sqrt{n}}R_{\left[1\right]}\left(\bm{\theta}_{\left[1\right]}^{k},\bm{\delta}_{\left[1\right]}\right)=\frac{1}{\sqrt{n}}V_{\left[1\right]}=V_{\left[n\right]}
\end{align*}
where the last equality is by part (ii) of the current proposition
already established. Corollary 1.6 of \citet[ch. 5]{lehmann1998theory}
yields $\bm{\delta}_{\left[n\right]}$ is minimax and the associated
$\pi_{\left[n\right]}=\left(\pi^{k},\bm{\theta}_{\left[n\right]}^{k}\right)_{k\in\mathcal{J}}=\left(\pi^{k},\frac{1}{\sqrt{n}}\bm{\theta}_{\left[1\right]}^{k}\right)_{k\in\mathcal{J}}$
is least-favorable.

\subsection{Proof of Proposition \ref{prop:Prop2}}\label{subsec:Proof-of-Proposition}

The key challenge in establishing the proposition is the least-favorable
prior $\pi_{\bm{\Sigma}_{n}}=\left(\pi_{\bm{\Sigma}_{n}}^{k},\bm{\theta}_{\bm{\Sigma}_{n}}^{k}\right)_{k\in\mathcal{J}}$
moves along $\bm{\Sigma}_{n}$, without a guarantee that $\pi_{\bm{\Sigma}_{n}}$
converges even if $\bm{\Sigma}_{n}$ converges. In the following,
we invoke compactness of $\bm{\Delta}\times\Theta^{J}$ to extract
a convergent subsequence of least-favorable priors, prove the associated
Bayes rules converge pointwise a.e. along the convergent subsequence,
and show that the limiting Bayes rule is minimax with respect to $\bm{\Sigma}$.
The a.e. equality of the limiting rule $\bm{\delta}_{\bm{\Sigma}_{n}}$
with $\bm{\delta}_{\bm{\Sigma}}$ follows by the a.e. uniqueness of
the MVN-MMR rule (Corollary of Theorem \ref{thm:(Characterization-of-the}).

\paragraph{Step 1. Extraction of a convergent subsequence of the least-favorable
prior and the associated maximin decision rule:}

Let $\pi_{\bm{\Sigma}_{n}}=\left(\pi_{\bm{\Sigma}_{n}}^{k},\bm{\theta}_{\bm{\Sigma}_{n}}^{k}\right)_{k\in\mathcal{J}}$
be any sequence of least-favorable prior indexed by $\bm{\Sigma}_{n}$
supported on $J$ distinct points where each $\bm{\theta}_{\bm{\Sigma}_{n}}^{k}\in\Theta^{k}$.
For the sequence of priors $\pi_{\bm{\Sigma}_{n}}$, pick any accumulation
point $\bar{\pi}=\left(\bar{\pi}^{k},\bar{\bm{\theta}}^{k}\right)_{k\in\mathcal{J}}$
and extract the associated subsequence $\pi_{\bm{\Sigma}_{l}}=\left(\pi_{\bm{\Sigma}_{l}}^{k},\bm{\theta}_{\bm{\Sigma}_{l}}^{k}\right)_{k\in\mathcal{J}}$
such that $\pi_{\bm{\Sigma}_{l}}\rightarrow\bar{\pi}$, where $l$
denotes the index of the convergent subsequence for the priors. This
convergence subsequence extraction is always possible because $\bm{\Delta}\times\Theta\times...\times\Theta$
is compact.

Note, the limit $\bar{\pi}$ need not be least-favorable for $\bm{\Sigma}$;
in the subsequent steps, we establish that $\bar{\pi}$ is actually
least-favorable for $\bm{\Sigma}$.

\paragraph{Step 2. Continuity of MVN Bayes risk in $\pi_{\bm{\Sigma}_{l}}$:}

For any $l$, for the least-favorable prior $\pi_{\bm{\Sigma}_{l}}=\left(\pi_{\bm{\Sigma}_{l}}^{k},\bm{\theta}_{\bm{\Sigma}_{l}}^{k}\right)_{k\in\mathcal{J}}$,
define: 
\begin{align*}
C_{\bm{\Sigma}_{l}}\left(\mathbf{x}\right) & =\sum_{k=1}^{J}\left(\max_{r\in\mathcal{J}}\theta_{\bm{\Sigma}_{l},r}^{k}\right)\pi_{\bm{\Sigma}_{l}}^{k}p_{\bm{\theta}_{\bm{\Sigma}_{l}}^{k},\bm{\Sigma}_{l}}\left(\mathbf{x}\right)\\
h_{\bm{\Sigma}_{l},i}\left(\mathbf{x}\right) & =\sum_{k=1}^{J}\theta_{\bm{\Sigma}_{l},i}^{k}\pi_{\bm{\Sigma}_{l}}^{k}p_{\bm{\theta}_{\bm{\Sigma}_{l}}^{k},\bm{\Sigma}_{l}}\left(\mathbf{x}\right).
\end{align*}
The MVN-MMR decision rule is:
\[
\delta_{\bm{\Sigma}_{l},i}\left(\mathbf{x}\right)\equiv1\left(h_{\bm{\Sigma}_{l},i}\left(\mathbf{x}\right)\geq h_{\bm{\Sigma}_{l},j}\left(\mathbf{x}\right)\quad\forall j\neq i\text{ with arbitrary tie-breaking}\right).
\]
Define the Bayes risk under the MVN-MMR decision rule, which attains
the value $V_{\bm{\Sigma}_{l}}$:
\begin{align*}
\mathcal{R}\left(\left(\pi_{\bm{\Sigma}_{l}}^{k},\bm{\theta}_{\bm{\Sigma}_{l}}^{k}\right)_{k\in\mathcal{J}},\bm{\Sigma}_{l}\right)=\sum_{k=1}^{J}\pi_{\bm{\Sigma}_{l}}^{k}R\left(\bm{\theta}_{\bm{\Sigma}_{l}}^{k},\bm{\delta}_{\bm{\Sigma}_{l}}\right) & =\int\left(C_{\bm{\Sigma}_{l}}\left(\mathbf{x}\right)-\max_{i\in\mathcal{J}}h_{\bm{\Sigma}_{l},i}\left(\mathbf{x}\right)\right)d\mu\left(\mathbf{x}\right).
\end{align*}

We claim the map $\left(\left(\pi_{\bm{\Sigma}_{l}}^{k},\bm{\theta}_{\bm{\Sigma}_{l}}^{k}\right)_{k\in\mathcal{J}},\bm{\Sigma}_{l}\right)\mapsto\mathcal{R}\left(\left(\pi_{\bm{\Sigma}_{l}}^{k},\bm{\theta}_{\bm{\Sigma}_{l}}^{k}\right)_{k\in\mathcal{J}},\bm{\Sigma}_{l}\right)$
is continuous on the compact domain $\bm{\Delta}\times\Theta\times...\times\Theta\times\mathcal{S}$.
To see why, whenever $\bm{\Sigma}_{l}\rightarrow\bm{\Sigma}$, 
\begin{align*}
 & \left|\mathcal{R}\left(\left(\pi_{\bm{\Sigma}_{l}}^{k},\bm{\theta}_{\bm{\Sigma}_{l}}^{k}\right)_{k\in\mathcal{J}},\bm{\Sigma}_{l}\right)-\mathcal{R}\left(\left(\pi_{\bm{\Sigma}}^{k},\bm{\theta}_{\bm{\Sigma}}^{k}\right)_{k\in\mathcal{J}},\bm{\Sigma}\right)\right|\\
\leq & \int\left|C_{\bm{\Sigma}_{l}}\left(\mathbf{x}\right)-C_{\bm{\Sigma}}\left(\mathbf{x}\right)\right|d\mu\left(\mathbf{x}\right)+\int\left|\max_{i\in\mathcal{J}}h_{\bm{\Sigma}_{l},i}\left(\mathbf{x}\right)-\max_{i\in\mathcal{J}}h_{\bm{\Sigma},i}\left(\mathbf{x}\right)\right|d\mu\left(\mathbf{x}\right).
\end{align*}
The first term is bounded by:
\begin{align}
 & \int\left|C_{\bm{\Sigma}_{l}}-C_{\bm{\Sigma}}\right|d\mu\nonumber \\
\leq & \sum_{k=1}^{J}\int\left|\left(\max_{r\in\mathcal{J}}\theta_{\bm{\Sigma}_{l},r}^{k}\right)\pi_{\bm{\Sigma}_{l}}^{k}p_{\bm{\theta}_{\bm{\Sigma}_{l}}^{k},\bm{\Sigma}_{l}}-\left(\max_{r\in\mathcal{J}}\bar{\theta}_{r}^{k}\right)\bar{\pi}^{k}p_{\bar{\bm{\theta}}^{k},\bm{\Sigma}}\right|d\mu\nonumber \\
\leq & \sum_{k=1}^{J}\left|\left(\max_{r\in\mathcal{J}}\theta_{\bm{\Sigma}_{l},r}^{k}\right)\pi_{\bm{\Sigma}_{l}}^{k}-\left(\max_{r\in\mathcal{J}}\bar{\theta}_{r}^{k}\right)\bar{\pi}^{k}\right|\int p_{\bm{\theta}_{\bm{\Sigma}_{l}}^{k},\bm{\Sigma}_{l}}d\mu+\left|\bar{\pi}^{k}\right|\left|\left(\max_{r\in\mathcal{J}}\bar{\theta}_{r}^{k}\right)\right|\left\Vert p_{\bm{\theta}_{\bm{\Sigma}_{l}}^{k},\bm{\Sigma}_{l}}-p_{\bar{\bm{\theta}}^{k},\bm{\Sigma}}\right\Vert _{L^{1}\left(\mathbb{R}^{J}\right)}\nonumber \\
\leq & \sum_{k=1}^{J}\left|\left(\max_{r\in\mathcal{J}}\theta_{\bm{\Sigma}_{l},r}^{k}\right)\pi_{\bm{\Sigma}_{l}}^{k}-\left(\max_{r\in\mathcal{J}}\bar{\theta}_{r}^{k}\right)\bar{\pi}^{k}\right|+B\left\Vert p_{\bm{\theta}_{\bm{\Sigma}_{l}}^{k},\bm{\Sigma}_{l}}-p_{\bar{\bm{\theta}}^{k},\bm{\Sigma}}\right\Vert _{L^{1}\left(\mathbb{R}^{J}\right)},\label{eq:C_difference}
\end{align}
where t(\ref{eq:C_difference}) used $\int p_{\bm{\theta}_{\bm{\Sigma}_{l}}^{k}}d\mu=1$,
$\left|\max_{r\in\mathcal{J}}\bar{\theta}_{r}^{k}\right|\leq B$,
and $\left|\bar{\pi}^{k}\right|\leq1$. The second term is bounded
by:
\begin{align}
\int\left|\max_{i\in\mathcal{J}}h_{\bm{\Sigma}_{l},i}-\max_{i\in\mathcal{J}}h_{\bm{\Sigma},i}\right|d\mu & \leq\sum_{i=1}^{J}\int\left|h_{\bm{\Sigma}_{l},i}\left(\mathbf{x}\right)-h_{\bm{\Sigma},i}\left(\mathbf{x}\right)\right|d\mu\nonumber \\
 & \leq\sum_{i=1}^{J}\sum_{k=1}^{J}\left(\left|\pi_{\bm{\Sigma}_{l}}^{k}\theta_{\bm{\Sigma}_{l},i}^{k}-\bar{\pi}^{k}\bar{\theta}_{i}^{k}\right|\int p_{\bm{\theta}_{\bm{\Sigma}_{l}}^{k}}d\mu+\left|\bar{\pi}^{k}\right|\left|\bar{\theta}_{i}^{k}\right|\left\Vert p_{\bm{\theta}_{\bm{\Sigma}_{l}}^{k},\bm{\Sigma}_{l}}-p_{\bar{\bm{\theta}}^{k},\bm{\Sigma}}\right\Vert _{L^{1}\left(\mathbb{R}^{J}\right)}\right)\nonumber \\
 & \leq\sum_{i=1}^{J}\sum_{k=1}^{J}\left(\left|\pi_{\bm{\Sigma}_{l}}^{k}\theta_{\bm{\Sigma}_{l},i}^{k}-\bar{\pi}^{k}\bar{\theta}_{i}^{k}\right|+B\left\Vert p_{\bm{\theta}_{\bm{\Sigma}_{l}}^{k},\bm{\Sigma}_{l}}-p_{\bar{\bm{\theta}}^{k},\bm{\Sigma}}\right\Vert _{L^{1}\left(\mathbb{R}^{J}\right)}\right).\label{eq:h_difference}
\end{align}
We have $\pi_{\bm{\Sigma}_{l}}^{k}\rightarrow\bar{\pi}^{k}$, $\theta_{\bm{\Sigma}_{l},i}^{k}\rightarrow\bar{\theta}^{k}$,
and 
\begin{align*}
\left\Vert p_{\bm{\theta}_{\bm{\Sigma}_{l}}^{k},\bm{\Sigma}_{l}}-p_{\bar{\bm{\theta}}^{k},\bm{\Sigma}}\right\Vert _{L^{1}\left(\mathbb{R}^{J}\right)} & \leq\left\Vert p_{\bm{\theta}_{\bm{\Sigma}_{l}}^{k},\bm{\Sigma}_{l}}-p_{\bm{\theta}_{\bm{\Sigma}_{l}}^{k},\bm{\Sigma}}\right\Vert _{L^{1}\left(\mathbb{R}^{J}\right)}+\left\Vert p_{\bm{\theta}_{\bm{\Sigma}_{l}}^{k},\bm{\Sigma}}-p_{\bar{\bm{\theta}}^{k},\bm{\Sigma}}\right\Vert _{L^{1}\left(\mathbb{R}^{J}\right)},
\end{align*}
where the first term converges to 0 by Lemma \ref{lem:A3} and the
second term converges to 0 by $L^{1}$ continuity of Gaussian densities
and by dominated convergence. Therefore, $\mathcal{R}\left(\pi_{\bm{\Sigma}_{l}},\bm{\Sigma}_{l}\right)$
is continuous in $\left(\pi_{\bm{\Sigma}_{l}},\bm{\Sigma}_{l}\right)$.

\paragraph{Step 3. Accumulation points of least-favorable priors are least-favorable:}

For any prior $\pi'$ and for all $l$,
\[
\mathcal{R}\left(\pi_{\bm{\Sigma}_{l}},\bm{\Sigma}_{l}\right)\geq\mathcal{R}\left(\pi',\bm{\Sigma}_{l}\right),
\]
by definition of the MVN-MMR decision rule and the Bayes risk. Letting
$l\rightarrow\infty$ and using continuity of $\mathcal{R}\left(\pi_{\bm{\Sigma}_{l}},\bm{\Sigma}_{l}\right)$
in both arguments yields, for any prior $\pi'$:
\[
\mathcal{R}\left(\bar{\pi},\bm{\Sigma}\right)\geq\mathcal{R}\left(\pi',\bm{\Sigma}\right).
\]
Therefore, the accumulation point $\bar{\pi}=\left(\bar{\pi}^{k},\bar{\bm{\theta}}^{k}\right)_{k\in\mathcal{J}}$
is least-favorable for $\bm{\Sigma}$.

\paragraph{Step 4. Convergence of the decision rule along the original sequence:}

Define 
\begin{align}
\bar{h}_{ij}\left(\mathbf{x}\right) & \equiv\sum_{k=1}^{J}\bar{\pi}^{k}p_{\bar{\bm{\theta}}^{k},\bm{\Sigma}}\left(\mathbf{x}\right)\left(\bar{\theta}_{i}^{k}-\bar{\theta}_{j}^{k}\right)\nonumber \\
\bar{\delta}_{i}\left(\mathbf{x}\right) & \equiv1\left(\bar{h}_{ij}\left(\mathbf{x}\right)\geq0\quad\forall j\neq i\text{ with arbitrary tie-breaking}\right).\label{eq:tiebreaking_delta_sigma_2}
\end{align}
By Step 3, $\bar{\bm{\delta}}$ is Bayes with respect to the least-favorable
prior $\bar{\pi}$, implying:
\begin{equation}
V_{\bm{\Sigma}}=\sup_{\bm{\theta}\in\Theta}R_{\bm{\Sigma}}\left(\bm{\theta},\bar{\bm{\delta}}\right),\label{eq:minimax_continuity}
\end{equation}
i.e., the rule $\bar{\bm{\delta}}$ achieves the minimax value $V_{\bm{\Sigma}}$.
By the a.e. uniqueness of the MVN-MMR rule (Theorem \ref{thm:(Characterization-of-the}
and its Corollary), $\bar{\bm{\delta}}=\bm{\delta}_{\bm{\Sigma}}$
a.e.. Furthermore, because our choice of the convergent subsequence
(indexed by $l$) was arbitrary and all the subsequences of the decision
rules $\bm{\delta}_{\bm{\Sigma}_{l}}$ converge to the same decision
rule $\bm{\delta}_{\bm{\Sigma}}$ a.e., we conclude that 
\[
\bm{\delta}_{\bm{\Sigma}_{n}}\left(\mathbf{x}\right)\rightarrow\bm{\delta}_{\bm{\Sigma}}\left(\mathbf{x}\right)\quad\text{a.e.-}\mathbf{x}\;\text{as }n\rightarrow\infty.
\]
This establishes part (i) of the proposition.

\paragraph{Step 5. Convergence of the risk along the original sequence: }

Along the subsequence $l$, 
\begin{align}
\lim_{l\rightarrow\infty}\sup_{\bm{\theta}\in\Theta}R_{\bm{\Sigma}_{l}}\left(\bm{\theta},\bm{\delta}_{\bm{\Sigma}_{l}}\right) & =\lim_{l\rightarrow\infty}V_{\bm{\Sigma}_{l}}\nonumber \\
 & =\lim_{l\rightarrow\infty}\mathcal{R}\left(\pi_{\bm{\Sigma}_{l}},\bm{\Sigma}_{l}\right)\nonumber \\
 & =\mathcal{R}\left(\bar{\pi},\bm{\Sigma}\right)\label{eq:step2_prop32}\\
 & =\sup_{\bm{\theta}\in\Theta}R_{\bm{\Sigma}}\left(\bm{\theta},\bar{\bm{\delta}}\right)\nonumber \\
 & =V_{\bm{\Sigma}},\label{eq:V_Sigma_attained}
\end{align}
where (\ref{eq:step2_prop32}) used the continuity established in
Step 2, and (\ref{eq:V_Sigma_attained}) invoked (\ref{eq:minimax_continuity}).
Because our choice of convergent subsequence was arbitrary and all
the subsequences of the values converge to the same point $V_{\bm{\Sigma}}$,
we get 
\[
\lim_{n\rightarrow\infty}\sup_{\bm{\theta}\in\Theta}R_{\bm{\Sigma}_{n}}\left(\bm{\theta},\bm{\delta}_{\bm{\Sigma}_{n}}\right)=\sup_{\bm{\theta}\in\Theta}R_{\bm{\Sigma}}\left(\bm{\theta},\bm{\delta}_{\bm{\Sigma}}\right)=V_{\bm{\Sigma}}.
\]
This establishes part (ii) of the proposition.

\paragraph{Step 6. Pointwise $L^{1}$-convergence of the decision rule }

Let $g_{n}\left(\mathbf{x}\right)=\sum_{j=1}^{J}\left|\delta_{\bm{\Sigma}_{n},i}\left(\mathbf{x}\right)-\delta_{\bm{\Sigma},i}\left(\mathbf{x}\right)\right|$.
By Step 4, for a.e.-$\mathbf{x}$, $g_{n}\left(\mathbf{x}\right)\rightarrow0$
and $0\leq g_{n}\left(\mathbf{x}\right)\leq2$ for all $\mathbf{x}$.
For any fixed $\bm{\theta}$ and $\bm{\Lambda}\in\mathcal{S}$
\[
\lim_{n\rightarrow\infty}\int g_{n}\left(\mathbf{x}\right)p_{\bm{\theta},\bm{\Lambda}}\left(\mathbf{x}\right)d\mu\left(\mathbf{x}\right)=\int\lim_{n\rightarrow\infty}g_{n}\left(\mathbf{x}\right)p_{\bm{\theta},\bm{\Lambda}}\left(\mathbf{x}\right)d\mu\left(\mathbf{x}\right)=0
\]
by dominated convergence. This establishes $\left\Vert \bm{\delta}_{\bm{\Sigma}}-\bm{\delta}_{\bm{\Sigma}_{n}}\right\Vert _{L^{1}\left(P_{\bm{\theta},\bm{\Lambda}}\right)}\rightarrow0$
as $n\rightarrow\infty$ pointwise-$\bm{\theta}$.

\paragraph{Step 7. Uniform $L^{1}$-convergence of the decision rule in $\bm{\theta}$
under Gaussian laws:}

The family $\left\{ P_{\bm{\theta},\bm{\Lambda}}:\bm{\theta}\in\tilde{\Theta}\right\} $
is uniformly tight for a compact $\tilde{\Theta}$. Therefore, for
any $\epsilon>0$, we can choose an $\eta>0$ large enough so that
$\sup_{\bm{\theta}\in\tilde{\Theta}}P_{\bm{\theta},\bm{\Lambda}}\left(\left\Vert \mathbf{x}\right\Vert >\eta\right)<\frac{\epsilon}{4}$.
Then we have:
\begin{align}
\sup_{\bm{\theta}\in\tilde{\Theta}}\int g_{n}\left(\mathbf{x}\right)dP_{\bm{\theta},\bm{\Lambda}}\left(\mathbf{x}\right) & \leq\sup_{\bm{\theta}\in\tilde{\Theta}}\left(\int_{\left\{ \mathbf{x}:\left\Vert \mathbf{x}\right\Vert \leq\eta\right\} }g_{n}\left(\mathbf{x}\right)p_{\bm{\theta},\bm{\Lambda}}\left(\mathbf{x}\right)d\mu\left(\mathbf{x}\right)+\int_{\left\{ \mathbf{x}:\left\Vert \mathbf{x}\right\Vert >\eta\right\} }g_{n}\left(\mathbf{x}\right)p_{\bm{\theta},\bm{\Lambda}}\left(\mathbf{x}\right)d\mu\left(\mathbf{x}\right)\right)\nonumber \\
 & \leq\int_{\left\{ \mathbf{x}:\left\Vert \mathbf{x}\right\Vert \leq\eta\right\} }\left(2\pi\right)^{-\frac{J}{2}}\left|\bm{\Lambda}\right|^{-\frac{1}{2}}\left|g_{n}\left(\mathbf{x}\right)\right|d\mu\left(\mathbf{x}\right)+2\sup_{\bm{\theta}\in\tilde{\Theta}}P_{\bm{\theta},\bm{\Lambda}}\left(\left\{ \mathbf{x}:\left\Vert \mathbf{x}\right\Vert >\eta\right\} \right)\label{eq:line3_twosplit}\\
 & \leq\frac{\epsilon}{2}+\frac{\epsilon}{2}\quad\text{for }n\text{ large enough,}\label{eq:dct_uniformL1}
\end{align}
where the first term of (\ref{eq:line3_twosplit}) used $\sup_{\bm{\theta}}p_{\bm{\theta},\bm{\Lambda}}\left(\mathbf{x}\right)=\left(2\pi\right)^{-\frac{J}{2}}\left|\bm{\Lambda}\right|^{-\frac{1}{2}}$,
the second term of (\ref{eq:line3_twosplit}) used $\left|g_{n}\left(\mathbf{x}\right)\right|\leq2$,
and (\ref{eq:dct_uniformL1}) invoked dominated convergence on the
finite-measure set $\left\{ \mathbf{x}:\left\Vert \mathbf{x}\right\Vert \leq\eta\right\} $. 

\paragraph{Step 8. Uniform $L^{1}$-convergence of the decision rule over Gaussian
laws:}

Fix $\epsilon>0$ and a compact $\tilde{\Theta}$. By Lemma \ref{lem:A3}
and compactness of $\mathcal{S}$, there exists a finite set $\left\{ \bm{\Lambda}^{\left(1\right)},...,\bm{\Lambda}^{\left(M\right)}\right\} \subset\mathcal{S}$
such that, for every $\bm{\Lambda}\in\mathcal{S}$, we can find an
index $m\left(\bm{\Lambda}\right)$ with 
\begin{equation}
\sup_{\bm{\theta}\in\tilde{\Theta}}\left\Vert p_{\bm{\theta},\bm{\Lambda}}-p_{\bm{\theta},\bm{\Lambda}^{\left(m\right)}}\right\Vert _{L^{1}\left(\mathbb{R}^{J}\right)}<\frac{\epsilon}{4}.\label{eq:epsilonover4}
\end{equation}
For each $m=1,2,...,M$, Step 7 implies there exists $N_{m}$ such
that $\forall n\geq N_{m}$,
\[
\sup_{\bm{\theta}\in\tilde{\Theta}}\int g_{n}p_{\bm{\theta},\bm{\Lambda}^{\left(m\right)}}d\mu<\frac{\epsilon}{2}.
\]
Let $N=\max_{m}N_{m}$. Then, $\forall n\geq N$, any $\bm{\Lambda}\in\mathcal{S}$,
any $\bm{\theta}\in\tilde{\Theta}$, choose $m=m\left(\bm{\Lambda}\right)$
such that (\ref{eq:epsilonover4}) holds, and write 
\begin{align*}
\int g_{n}p_{\bm{\theta},\bm{\Lambda}}d\mu & \leq\int g_{n}p_{\bm{\theta},\bm{\Lambda}^{\left(m\right)}}d\mu+\int g_{n}\left|p_{\bm{\theta},\bm{\Lambda}}-p_{\bm{\theta},\bm{\Lambda}^{\left(m\right)}}\right|d\mu\\
 & \leq\int g_{n}p_{\bm{\theta},\bm{\Lambda}^{\left(m\right)}}d\mu+2\int\left|p_{\bm{\theta},\bm{\Lambda}}-p_{\bm{\theta},\bm{\Lambda}^{\left(m\right)}}\right|d\mu\\
 & \leq\frac{\epsilon}{2}+2\cdot\frac{\epsilon}{4}=\epsilon,
\end{align*}
where we used $g_{n}\leq2$ in the second inequality. This $\epsilon$-bound
does not depend on $\bm{\theta}$ or $\bm{\Lambda}$. Therefore, taking
suprema in $\bm{\theta}$ and in $\bm{\Lambda}$ gives:
\[
\sup_{\bm{\Lambda}\in\mathcal{S}}\sup_{\bm{\theta}\in\tilde{\Theta}}\left\Vert \bm{\delta}_{\bm{\Sigma}_{n}}\left(\mathbf{x}\right)-\bm{\delta}_{\bm{\Sigma}}\left(\mathbf{x}\right)\right\Vert _{L^{1}\left(P_{\bm{\theta},\bm{\Lambda}}\right)}=\sup_{\bm{\Lambda}\in\mathcal{S},\bm{\theta}\in\tilde{\Theta}}\int g_{n}p_{\bm{\theta},\bm{\Lambda}}d\mu<\epsilon.
\]
This establishes part (iii) of the proposition.

\subsection{Proof of Proposition \ref{prop:(Consequences-of-Local}}\label{subsec:Prop2_proof}

Let $n\in\mathbb{N}$. Because $\Theta$ is a compact hyperrectangle,
there exists $H<\infty$ with $\Theta\subset\text{cl}\left(B_{H}\left(0\right)\right)$.
Assumption \ref{assu:For-every-} implies that, for any $\bm{\theta}\in\text{int}\left(\Theta\right)$
and any compact set $K\subset\text{cl}\left(B_{H}\left(0\right)\right)$
(e.g., $K=\Theta$),
\[
\sup_{\mathbf{h}\in K}d_{LP}\left(\mathcal{L}_{\bm{\theta}+\frac{\mathbf{h}}{\sqrt{n}}}\left(\sqrt{n}\left(\hat{\bm{\theta}}_{n}-\left(\bm{\theta}+\frac{\mathbf{h}}{\sqrt{n}}\right)\right)\right),\mathcal{N}\left(\mathbf{0},\bm{\Sigma}\right)\right)\rightarrow0\quad\text{as }n\rightarrow\infty.
\]
Setting $K=\text{cl}\left(B_{H}\left(0\right)\right)$, $\bm{\theta}=\mathbf{0}$,
and $\bm{\vartheta}=\mathbf{h}$ gives (\ref{eq:Q_central}). A location
shift invoking the translation invariance of $d_{LP}$ then gives
(\ref{eq:P_central}).

\subsection{Proof of Theorem \ref{thm:(Uniform-Convergence-of} and Its Corollary}\label{subsec:Thm4_proof}

\paragraph{Step 0. Notations and preliminary facts:}

Fix a decision rule $\bm{\phi}$ that satisfies Assumption \ref{assu:For-each-,}.
For simplicity of notation, denote by $g_{\bm{\vartheta}}\left(\mathbf{y}\right)=L\left(\bm{\phi}\left(\mathbf{y}\right),\bm{\vartheta}\right)$.
$g_{\bm{\vartheta}}$ is a.e. continuous in $\mathbf{y}$ with respect
to $P_{\bm{\vartheta},\bm{\Sigma}}\left(\ll\mu\right)$ by Assumption
\ref{assu:For-each-,}, 2-Lipschitz in $\bm{\vartheta}$, and $\left|g_{\bm{\vartheta}}\right|\leq2B$
by Lemma \ref{lem:A1}.

Under any coupling $\left(\mathbf{y}_{\bm{\vartheta},n},\mathbf{z}_{\bm{\vartheta}}\right)$
of $\left(P_{\bm{\vartheta},n},P_{\bm{\vartheta},\bm{\Sigma}}\right)$,
\begin{align}
\left|R_{\left[n\right]}\left(\bm{\vartheta},\bm{\phi}\right)-R_{\bm{\Sigma}}\left(\bm{\vartheta},\bm{\phi}\right)\right| & =\left|\int g_{\bm{\vartheta}}dP_{\bm{\vartheta},n}-\int g_{\bm{\vartheta}}dP_{\bm{\vartheta},\bm{\Sigma}}\right|\nonumber \\
 & =\left|\mathbb{E}\left[g_{\bm{\vartheta}}\left(\mathbf{y}_{\bm{\vartheta},n}\right)-g_{\bm{\vartheta}}\left(\mathbf{z}_{\bm{\vartheta}}\right)\right]\right|\nonumber \\
 & \leq\mathbb{E}\left[\left|g_{\bm{\vartheta}}\left(\mathbf{y}_{\bm{\vartheta},n}\right)-g_{\bm{\vartheta}}\left(\mathbf{z}_{\bm{\vartheta}}\right)\right|\right],\label{eq:coupling_bound}
\end{align}
and our goal is to control (\ref{eq:coupling_bound}) below an arbitrary
$\epsilon>0$ uniformly across $\Theta$.

\paragraph{Step 1. Uniform bound on coupling discrepancy probability:}

Define the uniform Levy-Prokhorov gap $\epsilon_{n}\equiv\sup_{\bm{\vartheta}\in\Theta}d_{LP}\left(P_{\bm{\vartheta},n},P_{\bm{\vartheta},\bm{\Sigma}}\right)$.
By Proposition \ref{prop:(Consequences-of-Local}, for any $\eta>0$,
there exists $N_{\eta}$ such that $n\geq N_{\eta}$ implies $\epsilon_{n}<\eta$.

Fix $\eta>0$ (to be chosen). Because $P_{\bm{\vartheta},n}\rightsquigarrow P_{\bm{\vartheta},\bm{\Sigma}}$,
Prokhorov's theorem \citep[Theorem 2.4]{vanderVaart1998} asserts
that $\left\{ P_{\bm{\vartheta},n}\right\} $ is uniformly tight.
By Strassen's coupling theorem (Theorem \ref{thm:(Strassen-Prokhorov/Strassen-Dud}
in Appendix \ref{subsec:Theorems-Used-in}) applied pointwise in $\bm{\vartheta}$,
for each $\bm{\vartheta}$ there exist random variables $\mathbf{y}_{\bm{\vartheta},n}\sim P_{\bm{\vartheta},n}$
and $\mathbf{z}_{\bm{\vartheta}}\sim P_{\bm{\vartheta}}$ (possibly
on $\bm{\vartheta}$-dependent probability spaces) such that 
\[
\Pr\left(\left\Vert \mathbf{y}_{\bm{\vartheta},n}-\mathbf{z}_{\bm{\vartheta}}\right\Vert >\eta\right)\leq\Pr\left(\left\Vert \mathbf{y}_{\bm{\vartheta},n}-\mathbf{z}_{\bm{\vartheta}}\right\Vert >\epsilon_{n}\right)\leq\epsilon_{n},\quad\forall n\geq N_{\eta},
\]
where the first inequality used the tail event $\left\{ \left\Vert \mathbf{y}_{\bm{\vartheta},n}-\mathbf{z}_{\bm{\vartheta}}\right\Vert >\eta\right\} \subset\left\{ \left\Vert \mathbf{y}_{\bm{\vartheta},n}-\mathbf{z}_{\bm{\vartheta}}\right\Vert >\epsilon_{n}\right\} $
for $\epsilon_{n}<\eta$. Because $\eta$ and $N_{\eta}$ do not depend
on the choice of $\bm{\vartheta}$, the bound is uniform:
\begin{equation}
\sup_{\bm{\vartheta}\in\Theta}\Pr\left(\left\Vert \mathbf{y}_{\bm{\vartheta},n}-\mathbf{z}_{\bm{\vartheta}}\right\Vert >\eta\right)\leq\epsilon_{n},\quad\forall n\geq N_{\eta}.\label{eq:coupling_uniform}
\end{equation}
We note that each $\bm{\vartheta}$ may use possibly different coupling,
which does not pose a problem for our purpose as we only want to uniformly
bound the probability of discrepancy across $\bm{\vartheta}\in\Theta$.

\paragraph{Step 2. Oscillation bound of the loss:}

For the bounded regret loss function $g_{\bm{\vartheta}}$ and the
$\eta>0$, define the local oscillation: 
\[
\underset{\eta}{\text{osc}}\,g_{\bm{\vartheta}}\left(\mathbf{y}\right)\equiv\sup_{\left\Vert \mathbf{h}\right\Vert \leq\eta}\left|g_{\bm{\vartheta}}\left(\mathbf{y}+\mathbf{h}\right)-g_{\bm{\vartheta}}\left(\mathbf{y}\right)\right|.
\]

Let $A\equiv\left\{ \left\Vert \mathbf{y}_{\bm{\vartheta},n}-\mathbf{z}_{\bm{\vartheta}}\right\Vert \leq\eta\right\} $.
(\ref{eq:coupling_bound}) is bounded above by: 
\begin{align}
\mathbb{E}\left[\left|g_{\bm{\vartheta}}\left(\mathbf{y}_{\bm{\vartheta},n}\right)-g_{\bm{\vartheta}}\left(\mathbf{z}_{\bm{\vartheta}}\right)\right|\right] & =\mathbb{E}\left[\left|g_{\bm{\vartheta}}\left(\mathbf{y}_{\bm{\vartheta},n}\right)-g_{\bm{\vartheta}}\left(\mathbf{z}_{\bm{\vartheta}}\right)\right|\cdot\mathbf{1}_{A}\right]+\mathbb{E}\left[\left|g_{\bm{\vartheta}}\left(\mathbf{y}_{\bm{\vartheta},n}\right)-g_{\bm{\vartheta}}\left(\mathbf{z}_{\bm{\vartheta}}\right)\right|\cdot\mathbf{1}_{A^{c}}\right]\nonumber \\
 & \leq\mathbb{E}\left[\underset{\eta}{\text{osc}}\,g_{\bm{\vartheta}}\left(\mathbf{z}_{\bm{\vartheta}}\right)\right]+2\left\Vert g_{\bm{\vartheta}}\right\Vert _{\infty}\Pr\left(\left\Vert \mathbf{y}_{\bm{\vartheta},n}-\mathbf{z}_{\bm{\vartheta}}\right\Vert >\eta\right)\nonumber \\
 & \leq\mathbb{E}\left[\underset{\eta}{\text{osc}}\,g_{\bm{\vartheta}}\left(\mathbf{z}_{\bm{\vartheta}}\right)\right]+4B\epsilon_{n}\quad\forall n\geq N_{\eta},\label{eq:intermediate_close_to_final}
\end{align}
where the last inequality used $\left\Vert g_{\bm{\vartheta}}\right\Vert _{\infty}\leq2B$
and (\ref{eq:coupling_uniform}).

In the following Steps 3 and 4, we bound the first term of (\ref{eq:intermediate_close_to_final})
uniformly over $\bm{\vartheta}$.

\paragraph{Step 3. A uniform modulus of continuity:}

Let $\delta>0$ (to be chosen). Because $\Theta=\left[-B,B\right]^{J}$
is compact, there exists a finite $\delta$-net $\left\{ \bm{\vartheta}^{\left(m\right)}\right\} _{m=1}^{M}\subset\Theta$
such that for each $\bm{\vartheta}\in\Theta$ there is an index $m=m\left(\bm{\vartheta}\right)$
with $\left\Vert \bm{\vartheta}-\bm{\vartheta}^{\left(m\right)}\right\Vert \leq\delta$.
Then, 
\begin{align}
 & \underset{\eta}{\text{osc}}\,g_{\bm{\vartheta}}\left(\mathbf{z}_{\bm{\vartheta}}\right)\nonumber \\
= & \sup_{\left\Vert \mathbf{h}\right\Vert \leq\eta}\left|g_{\bm{\vartheta}}\left(\mathbf{z}_{\bm{\vartheta}}+\mathbf{h}\right)-g_{\bm{\vartheta}}\left(\mathbf{z}_{\bm{\vartheta}}\right)\right|\nonumber \\
\leq & \sup_{\left\Vert \mathbf{h}\right\Vert \leq\eta}\left(\left|g_{\bm{\vartheta}^{\left(m\right)}}\left(\mathbf{z}_{\bm{\vartheta}}+\mathbf{h}\right)-g_{\bm{\vartheta}^{\left(m\right)}}\left(\mathbf{z}_{\bm{\vartheta}}\right)\right|+\left|g_{\bm{\vartheta}}\left(\mathbf{z}_{\bm{\vartheta}}+\mathbf{h}\right)-g_{\bm{\vartheta}^{\left(m\right)}}\left(\mathbf{z}_{\bm{\vartheta}}+\mathbf{h}\right)\right|+\left|g_{\bm{\vartheta}^{\left(m\right)}}\left(\mathbf{z}_{\bm{\vartheta}}\right)-g_{\bm{\vartheta}}\left(\mathbf{z}_{\bm{\vartheta}}\right)\right|\right)\nonumber \\
\leq & \underset{\eta}{\text{osc}}\,g_{\bm{\vartheta}^{\left(m\right)}}\left(\mathbf{z}_{\bm{\vartheta}}\right)+2\cdot2\left\Vert \bm{\vartheta}-\bm{\vartheta}^{\left(m\right)}\right\Vert _{2}\leq\underset{\eta}{\text{osc}}\,g_{\bm{\vartheta}^{\left(m\right)}}\left(\mathbf{z}_{\bm{\vartheta}}\right)+4\delta,\label{eq:uniform_modulus}
\end{align}
where the last inequality invoked Lemma \ref{lem:A1}.

\paragraph{Step 4. Uniform bound of $\mathbb{E}_{P_{\bm{\vartheta},\bm{\Sigma}}}\left[\protect\underset{\eta}{\text{osc}}\,g_{\bm{\vartheta}^{\left(m\right)}}\left(\mathbf{z}_{\bm{\vartheta}}\right)\right]$
over $\bm{\vartheta}$ using Dini's theorem:}

Taking expectations under $P_{\bm{\vartheta},\bm{\Sigma}}$ and suprema
over $\bm{\vartheta}$ gives:
\begin{equation}
\sup_{\bm{\vartheta}\in\Theta}\mathbb{E}_{P_{\bm{\vartheta},\bm{\Sigma}}}\left[\underset{\eta}{\text{osc}}\,g_{\bm{\vartheta}}\left(\mathbf{z}_{\bm{\vartheta}}\right)\right]\leq\max_{1\leq m\leq M}\sup_{\bm{\vartheta}\in\Theta}\mathbb{E}_{P_{\bm{\vartheta},\bm{\Sigma}}}\left[\underset{\eta}{\text{osc}}\,g_{\bm{\vartheta}^{\left(m\right)}}\left(\mathbf{z}_{\bm{\vartheta}}\right)\right]+4\delta.\label{eq:step3_conc}
\end{equation}

For each fixed $m$ and $\bm{\vartheta}^{\left(m\right)}$, define
\[
F_{\eta}^{\left(m\right)}\left(\bm{\vartheta}\right)\equiv\mathbb{E}_{P_{\bm{\vartheta},\bm{\Sigma}}}\left[\underset{\eta}{\text{osc}}\,g_{\bm{\vartheta}^{\left(m\right)}}\left(\mathbf{z}_{\bm{\vartheta}}\right)\right]=\int\underset{\eta}{\text{osc}}\,g_{\bm{\vartheta}^{\left(m\right)}}\left(\mathbf{y}\right)dP_{\bm{\vartheta},\bm{\Sigma}}\left(\mathbf{y}\right).
\]
We claim that $F_{\eta}^{\left(m\right)}\left(\bm{\vartheta}\right)$
is (i) monotonically decreasing to 0 for fixed $\bm{\vartheta}$ as
$\eta\downarrow0$, and (ii) continuous in $\bm{\vartheta}$ for $\eta>0$
fixed.

\uline{(1) (Monotonicity in \mbox{$\eta$})}: Fix $\bm{\vartheta}$.
For any fixed $\bm{\vartheta}^{\left(m\right)}$, and a.e.-$\mathbf{y}$,
$\eta<\eta'$ implies $\underset{\eta}{\text{osc}}\,g_{\bm{\vartheta}^{\left(m\right)}}\left(\mathbf{y}\right)\leq\underset{\eta'}{\text{osc}}\,g_{\bm{\vartheta}^{\left(m\right)}}\left(\mathbf{y}\right)$
because the domain of suprema in the oscillation $\left\{ \mathbf{h}:\left\Vert \mathbf{h}\right\Vert \leq\eta\right\} \subseteq\left\{ \mathbf{h}:\left\Vert \mathbf{h}\right\Vert \leq\eta'\right\} $,
and $\lim_{\eta\downarrow0}\underset{\eta}{\text{osc}}\,g_{\bm{\vartheta}^{\left(m\right)}}\left(\mathbf{y}\right)=0$.
Furthermore, $\left|\underset{\eta}{\text{osc}}\,g_{\bm{\vartheta}^{\left(m\right)}}\left(\mathbf{y}\right)\right|\leq4B$.
Therefore, for fixed $\bm{\vartheta}$ and for $\eta<\eta'$,
\[
F_{\eta}^{\left(m\right)}\left(\bm{\vartheta}\right)=\int\underset{\eta}{\text{osc}}\,g_{\bm{\vartheta}^{\left(m\right)}}\left(\mathbf{y}\right)dP_{\bm{\vartheta},\bm{\Sigma}}\left(\mathbf{y}\right)\leq\int\underset{\eta'}{\text{osc}}\,g_{\bm{\vartheta}^{\left(m\right)}}\left(\mathbf{y}\right)dP_{\bm{\vartheta},\bm{\Sigma}}\left(\mathbf{y}\right)=F_{\eta'}^{\left(m\right)}\left(\bm{\vartheta}\right),
\]
and $F_{\eta}^{\left(m\right)}\left(\bm{\vartheta}\right)\downarrow0$
as $\eta\downarrow0$ by dominated convergence.

\uline{(2) (Continuity in \mbox{$\bm{\vartheta}$})}: For $\eta>0$
fixed, 
\begin{align*}
\left|F_{\eta}^{\left(m\right)}\left(\bm{\vartheta}\right)-F_{\eta}^{\left(m\right)}\left(\bm{\vartheta}'\right)\right| & =\left|\int\underset{\eta}{\text{osc}}\,g_{\bm{\vartheta}^{\left(m\right)}}\left(\mathbf{y}\right)dP_{\bm{\vartheta},\bm{\Sigma}}\left(\mathbf{y}\right)-\int\underset{\eta}{\text{osc}}\,g_{\bm{\vartheta}^{\left(m\right)}}\left(\mathbf{y}\right)dP_{\bm{\vartheta}',\bm{\Sigma}}\left(\mathbf{y}\right)\right|\\
 & \leq4B\left\Vert P_{\bm{\vartheta}}-P_{\bm{\vartheta}'}\right\Vert _{TV}.
\end{align*}
Because $P_{\bm{\vartheta},\bm{\Sigma}}=\mathcal{N}\left(\bm{\vartheta},\bm{\Sigma}\right)$
is continuous in total variation in $\bm{\vartheta}$ on compact $\Theta$,
$\bm{\vartheta}\mapsto F_{\eta}^{\left(m\right)}\left(\bm{\vartheta}\right)$
is continuous on $\Theta$.

\uline{(3) (Applying Dini's theorem)}: (1) and (2) are sufficient
for Dini's theorem to apply, which yields the uniform convergence
over $\bm{\vartheta}$:
\begin{equation}
\lim_{\eta\downarrow0}\sup_{\bm{\vartheta}\in\Theta}F_{\eta}^{\left(m\right)}\left(\bm{\vartheta}\right)=\lim_{\eta\downarrow0}\sup_{\bm{\vartheta}\in\Theta}\mathbb{E}_{P_{\bm{\vartheta},\bm{\Sigma}}}\left[\underset{\eta}{\text{osc}}\,g_{\bm{\vartheta}^{\left(m\right)}}\left(\mathbf{z}_{\bm{\vartheta}}\right)\right]=0,\quad\text{for each fixed }m.\label{eq:uniform_converg_expectation_osc}
\end{equation}

\uline{(4) (Maximum Over Finitely Many \mbox{$m$} to Obtain the Uniform
Bound)}: With finitely many $m$, taking $\max_{1\leq m\leq M}$ preserves
the uniform convergence to 0. Combining (\ref{eq:step3_conc}) and
(\ref{eq:uniform_converg_expectation_osc}) gives, for any $\epsilon>0$,
we can first pick $\delta>0$ so that $4\delta\leq\frac{\epsilon}{4}$,
and then pick $\eta>0$ small enough so that:
\[
\max_{1\leq m\leq M}\sup_{\bm{\vartheta}\in\Theta}\mathbb{E}_{P_{\bm{\vartheta},\bm{\Sigma}}}\left[\underset{\eta}{\text{osc}}\,g_{\bm{\vartheta}^{\left(m\right)}}\left(\mathbf{z}_{\bm{\vartheta}}\right)\right]\leq\frac{\epsilon}{4}.
\]
Therefore, 
\begin{equation}
\sup_{\bm{\vartheta}\in\Theta}\mathbb{E}_{P_{\bm{\vartheta},\bm{\Sigma}}}\left[\underset{\eta}{\text{osc}}\,g_{\bm{\vartheta}}\left(\mathbf{z}_{\bm{\vartheta}}\right)\right]\leq\max_{1\leq m\leq M}\sup_{\bm{\vartheta}\in\Theta}\mathbb{E}_{P_{\bm{\vartheta},\bm{\Sigma}}}\left[\underset{\eta}{\text{osc}}\,g_{\bm{\vartheta}^{\left(m\right)}}\left(\mathbf{z}_{\bm{\vartheta}}\right)\right]+4\delta\leq\frac{\epsilon}{4}+\frac{\epsilon}{4}=\frac{\epsilon}{2},\label{eq:step4_ineq}
\end{equation}
bounding the first term of (\ref{eq:intermediate_close_to_final})
under $\frac{\epsilon}{2}$ uniformly over $\bm{\vartheta}$.

\paragraph{Step 5. Uniform bound of the regret difference:}

Recall (\ref{eq:coupling_bound}) and (\ref{eq:intermediate_close_to_final})
with suprema over $\Theta$ taken:
\begin{align*}
\sup_{\bm{\vartheta}\in\Theta}\left|R_{\left[n\right]}\left(\bm{\vartheta},\bm{\phi}\right)-R_{\bm{\Sigma}}\left(\bm{\vartheta},\bm{\phi}\right)\right| & \leq\sup_{\bm{\vartheta}\in\Theta}\mathbb{E}\left[\left|g_{\bm{\vartheta}}\left(\mathbf{y}_{\bm{\vartheta},n}\right)-g_{\bm{\vartheta}}\left(\mathbf{z}_{\bm{\vartheta}}\right)\right|\right]\\
 & \leq\sup_{\bm{\vartheta}\in\Theta}\mathbb{E}_{P_{\bm{\vartheta},\bm{\Sigma}}}\left[\underset{\eta}{\text{osc}}\,g_{\bm{\vartheta}}\left(\mathbf{z}_{\bm{\vartheta}}\right)\right]+4B\epsilon_{n}\quad\forall n\geq N_{\eta}.
\end{align*}

Let $\epsilon>0$, we can pick $\delta>0$ small enough in the inequality
(\ref{eq:step3_conc}), and then pick $\eta>0$ small enough in the
inequality (\ref{eq:step4_ineq}) so that $\mathbb{E}_{P_{\bm{\vartheta},\bm{\Sigma}}}\left[\underset{\eta}{\text{osc}}\,g_{\bm{\vartheta}}\left(\mathbf{z}_{\bm{\vartheta}}\right)\right]<\frac{\epsilon}{2}$.
Then pick $n\geq N_{\eta}$ large enough so $4B\epsilon_{n}<\frac{\epsilon}{2}$.
This completes the proof of the main theorem.

\paragraph{Proof of the Corollary:}

We have 
\begin{align}
\left|V_{\bm{\Sigma}}-\sup_{\bm{\vartheta}\in\Theta}R_{\left[n\right]}\left(\bm{\vartheta},\bm{\delta}_{\bm{\Sigma}}\right)\right| & =\left|\sup_{\bm{\vartheta}\in\Theta}R_{\bm{\Sigma}}\left(\bm{\vartheta},\bm{\delta}_{\bm{\Sigma}}\right)-\sup_{\bm{\vartheta}\in\Theta}R_{\left[n\right]}\left(\bm{\vartheta},\bm{\delta}_{\bm{\Sigma}}\right)\right|\nonumber \\
 & \leq\sup_{\bm{\vartheta}\in\Theta}\left|R_{\bm{\Sigma}}\left(\bm{\vartheta},\bm{\delta}_{\bm{\Sigma}}\right)-R_{\left[n\right]}\left(\bm{\vartheta},\bm{\delta}_{\bm{\Sigma}}\right)\right|\label{eq:diff_max_max_diff}\\
 & \rightarrow0\quad as\quad n\rightarrow\infty,\label{eq:thm4_application}
\end{align}
where we used $\left|\sup_{x}f\left(x\right)-\sup_{y}g\left(y\right)\right|\leq\sup_{x}\left|f\left(x\right)-g\left(x\right)\right|$
in (\ref{eq:diff_max_max_diff}), and (\ref{eq:thm4_application})
follows by applying the theorem with $\bm{\phi}=\bm{\delta}_{\bm{\Sigma}}$,
as $\bm{\delta}_{\bm{\Sigma}}$ satisfies Assumption \ref{assu:For-each-,}.

\subsection{Proof of Lemma \ref{lem:41}}\label{subsec:Proof-of-Lemma41}

We closely follow the proof steps of Theorem \ref{thm:(Uniform-Convergence-of}.
Denote by $g_{\bm{\vartheta}}^{\bm{\Gamma}}\left(\mathbf{y}\right)=L\left(\bm{\delta}_{\bm{\Gamma}}\left(\mathbf{y}\right),\bm{\vartheta}\right)$.
Steps 0-3 of the proof are identical with those of the proof of Theorem
\ref{thm:(Uniform-Convergence-of}, which does not concern uniformity
over $\bm{\Gamma}$. 

\paragraph{Step 4'. Uniform bound of $\mathbb{E}_{P_{\bm{\vartheta},\bm{\Sigma}}}\left[\protect\underset{\eta}{\text{osc}}\,g_{\bm{\vartheta}^{\left(m\right)}}^{\bm{\Gamma}}\left(\mathbf{z}_{\bm{\vartheta}}\right)\right]$
over $\left(\bm{\vartheta},\bm{\Gamma}\right)$ using Dini's theorem:}

For the $\delta$-net $\left\{ \bm{\vartheta}^{\left(m\right)}\right\} $,
taking expectations under $P_{\bm{\vartheta},\bm{\Sigma}}$ and suprema
over $\left(\bm{\vartheta},\bm{\Gamma}\right)$ gives:
\[
\sup_{\bm{\vartheta},\bm{\Gamma}}\mathbb{E}_{P_{\bm{\vartheta},\bm{\Sigma}}}\left[\underset{\eta}{\text{osc}}\,g_{\bm{\vartheta}}^{\bm{\Gamma}}\left(\mathbf{z}_{\bm{\vartheta}}\right)\right]\leq\max_{1\leq m\leq M}\sup_{\bm{\vartheta},\bm{\Gamma}}\mathbb{E}_{P_{\bm{\vartheta},\bm{\Sigma}}}\left[\underset{\eta}{\text{osc}}\,g_{\bm{\vartheta}^{\left(m\right)}}^{\bm{\Gamma}}\left(\mathbf{z}_{\bm{\vartheta}}\right)\right]+4\delta.
\]
For each fixed $m$ and $\bm{\vartheta}^{\left(m\right)}$, define
\[
F_{\eta}^{\left(m\right)}\left(\bm{\vartheta},\bm{\Gamma}\right)\equiv\mathbb{E}_{P_{\bm{\vartheta},\bm{\Sigma}}}\left[\underset{\eta}{\text{osc}}\,g_{\bm{\vartheta}^{\left(m\right)}}^{\bm{\Gamma}}\left(\mathbf{z}_{\bm{\vartheta}}\right)\right]=\int\underset{\eta}{\text{osc}}\,g_{\bm{\vartheta}^{\left(m\right)}}^{\bm{\Gamma}}\left(\mathbf{y}\right)dP_{\bm{\vartheta},\bm{\Sigma}}\left(\mathbf{y}\right).
\]

\uline{(1) (Monotonicity in \mbox{$\eta$})}: For any fixed $\bm{\vartheta}^{\left(m\right)}$,
$\bm{\Gamma}$, and a.e-$\mathbf{y}$, $\eta<\eta'$ implies $\underset{\eta}{\text{osc}}\,g_{\bm{\vartheta}^{\left(m\right)}}^{\bm{\Gamma}}\left(\mathbf{y}\right)\leq\underset{\eta'}{\text{osc}}\,g_{\bm{\vartheta}^{\left(m\right)}}^{\bm{\Gamma}}\left(\mathbf{y}\right)$
because $\left\{ \mathbf{h}:\left\Vert \mathbf{h}\right\Vert \leq\eta\right\} \subseteq\left\{ \mathbf{h}:\left\Vert \mathbf{h}\right\Vert \leq\eta'\right\} $,
and $\lim_{\eta\downarrow0}\underset{\eta}{\text{osc}}\,g_{\bm{\vartheta}^{\left(m\right)}}^{\bm{\Gamma}}\left(\mathbf{y}\right)=0$.
Furthermore, $\left|\underset{\eta}{\text{osc}}\,g_{\bm{\vartheta}^{\left(m\right)}}^{\bm{\Gamma}}\left(\mathbf{y}\right)\right|\leq4B$.
Therefore, for fixed $\bm{\vartheta}$ and for $\eta<\eta'$,
\[
F_{\eta}^{\left(m\right)}\left(\bm{\vartheta},\bm{\Gamma}\right)=\int\underset{\eta}{\text{osc}}\,g_{\bm{\vartheta}^{\left(m\right)}}^{\bm{\Gamma}}\left(\mathbf{y}\right)dP_{\bm{\vartheta},\bm{\Sigma}}\leq\int\underset{\eta'}{\text{osc}}\,g_{\bm{\vartheta}^{\left(m\right)}}^{\bm{\Gamma}}\left(\mathbf{y}\right)dP_{\bm{\vartheta},\bm{\Sigma}}=F_{\eta'}^{\left(m\right)}\left(\bm{\vartheta},\bm{\Gamma}\right),
\]
and $F_{\eta}^{\left(m\right)}\left(\bm{\vartheta},\bm{\Gamma}\right)\downarrow0$
as $\eta\downarrow0$ by dominated convergence.

\uline{(2) (Continuity in \mbox{$\left(\bm{\vartheta},\bm{\Gamma}\right)$})}:
For each $\eta>0$ fixed, we show $\left(\bm{\vartheta},\bm{\Gamma}\right)\mapsto F_{\eta}^{\left(m\right)}\left(\bm{\vartheta},\bm{\Gamma}\right)$
is continuous on the compact product $\Theta\times\mathcal{S}$. For
$\left(\bm{\vartheta}',\bm{\Gamma}'\right)\neq\left(\bm{\vartheta},\bm{\Gamma}\right)$:
\begin{eqnarray}
\left|F_{\eta}^{\left(m\right)}\left(\bm{\vartheta}',\bm{\Gamma}'\right)-F_{\eta}^{\left(m\right)}\left(\bm{\vartheta},\bm{\Gamma}\right)\right| & \leq & \left|F_{\eta}^{\left(m\right)}\left(\bm{\vartheta}',\bm{\Gamma}'\right)-F_{\eta}^{\left(m\right)}\left(\bm{\vartheta},\bm{\Gamma}'\right)\right|+\left|F_{\eta}^{\left(m\right)}\left(\bm{\vartheta},\bm{\Gamma}'\right)-F_{\eta}^{\left(m\right)}\left(\bm{\vartheta},\bm{\Gamma}\right)\right|\nonumber \\
 & = & \left|\int\underset{\eta}{\text{osc}}\,g_{\bm{\vartheta}^{\left(m\right)}}^{\bm{\Gamma}'}\left(\mathbf{y}\right)dP_{\bm{\vartheta}',\bm{\Sigma}}\left(\mathbf{y}\right)-\int\underset{\eta}{\text{osc}}\,g_{\bm{\vartheta}^{\left(m\right)}}^{\bm{\Gamma}'}\left(\mathbf{y}\right)dP_{\bm{\vartheta},\bm{\Sigma}}\left(\mathbf{y}\right)\right|\nonumber \\
 &  & +\left|\int\underset{\eta}{\text{osc}}\,g_{\bm{\vartheta}^{\left(m\right)}}^{\bm{\Gamma}'}\left(\mathbf{y}\right)dP_{\bm{\vartheta},\bm{\Sigma}}\left(\mathbf{y}\right)-\int\underset{\eta}{\text{osc}}\,g_{\bm{\vartheta}^{\left(m\right)}}^{\bm{\Gamma}}\left(\mathbf{y}\right)dP_{\bm{\vartheta},\bm{\Sigma}}\left(\mathbf{y}\right)\right|\nonumber \\
 & \leq & 4B\left\Vert P_{\bm{\vartheta},\bm{\Sigma}}-P_{\bm{\vartheta}',\bm{\Sigma}}\right\Vert _{TV}\nonumber \\
 &  & +\left|\int\underset{\eta}{\text{osc}}\,g_{\bm{\vartheta}^{\left(m\right)}}^{\bm{\Gamma}}\left(\mathbf{y}\right)dP_{\bm{\vartheta},\bm{\Sigma}}\left(\mathbf{y}\right)-\int\underset{\eta}{\text{osc}}\,g_{\bm{\vartheta}^{\left(m\right)}}^{\bm{\Gamma}'}\left(\mathbf{y}\right)dP_{\bm{\vartheta},\bm{\Sigma}}\left(\mathbf{y}\right)\right|.\label{eq:RHS_F_eta_m}
\end{eqnarray}
$4B\left\Vert P_{\bm{\vartheta}}-P_{\bm{\vartheta}'}\right\Vert _{TV}\rightarrow0$
as $\bm{\vartheta}'\rightarrow\bm{\vartheta}$ for Gaussian measures
$P_{\bm{\vartheta},\bm{\Sigma}},P_{\bm{\vartheta}',\bm{\Sigma}}$.

Using $\left|\sup_{u}f\left(u\right)-\sup_{u}g\left(u\right)\right|\leq\sup_{u}\left|f\left(u\right)-g\left(u\right)\right|$
and $\left|\vartheta_{j}^{\left(m\right)}\right|\leq B$, the oscillation-difference
term of (\ref{eq:RHS_F_eta_m}) is bounded further by:
\begin{eqnarray}
 &  & \int\left|\underset{\eta}{\text{osc}}\,g_{\bm{\vartheta}^{\left(m\right)}}^{\bm{\Gamma}}\left(\mathbf{y}\right)-\underset{\eta}{\text{osc}}\,g_{\bm{\vartheta}^{\left(m\right)}}^{\bm{\Gamma}'}\left(\mathbf{y}\right)\right|dP_{\bm{\vartheta},\bm{\Sigma}}\left(\mathbf{y}\right)\nonumber \\
 & = & \int\left|\sup_{\left\Vert \mathbf{h}\right\Vert \leq\eta}\left|g_{\bm{\vartheta}^{\left(m\right)}}^{\bm{\Gamma}}\left(\mathbf{y}+\mathbf{h}\right)-g_{\bm{\vartheta}^{\left(m\right)}}^{\bm{\Gamma}}\left(\mathbf{y}\right)\right|-\sup_{\left\Vert \mathbf{h}\right\Vert \leq\eta}\left|g_{\bm{\vartheta}^{\left(m\right)}}^{\bm{\Gamma}'}\left(\mathbf{y}+\mathbf{h}\right)-g_{\bm{\vartheta}^{\left(m\right)}}^{\bm{\Gamma}'}\left(\mathbf{y}\right)\right|\right|dP_{\bm{\vartheta},\bm{\Sigma}}\left(\mathbf{y}\right)\nonumber \\
 & \leq & \int\sup_{\left\Vert \mathbf{h}\right\Vert \leq\eta}\left|\left(g_{\bm{\vartheta}^{\left(m\right)}}^{\bm{\Gamma}}\left(\mathbf{y}+\mathbf{h}\right)-g_{\bm{\vartheta}^{\left(m\right)}}^{\bm{\Gamma}'}\left(\mathbf{y}+\mathbf{h}\right)\right)-\left(g_{\bm{\vartheta}^{\left(m\right)}}^{\bm{\Gamma}}\left(\mathbf{y}\right)-g_{\bm{\vartheta}^{\left(m\right)}}^{\bm{\Gamma}'}\left(\mathbf{y}\right)\right)\right|dP_{\bm{\vartheta},\bm{\Sigma}}\left(\mathbf{y}\right)\nonumber \\
 & \leq & \int\sup_{\left\Vert \mathbf{h}\right\Vert \leq\eta}\left|\sum_{k=1}^{J}\vartheta_{k}^{\left(m\right)}\left(\delta_{\bm{\Gamma},k}\left(\mathbf{y}+\mathbf{h}\right)-\delta_{\bm{\Gamma}',k}\left(\mathbf{y}+\mathbf{h}\right)\right)+\sum_{k=1}^{J}\vartheta_{k}^{\left(m\right)}\left(\delta_{\bm{\Gamma}',k}\left(\mathbf{y}\right)-\delta_{\bm{\Gamma},k}\left(\mathbf{y}\right)\right)\right|dP_{\bm{\vartheta},\bm{\Sigma}}\left(\mathbf{y}\right)\nonumber \\
 & \leq & \int\sup_{\left\Vert \mathbf{h}\right\Vert \leq\eta}\left(\sum_{k=1}^{J}\left|\vartheta_{k}^{\left(m\right)}\right|\left|\delta_{\bm{\Gamma},k}\left(\mathbf{y}+\mathbf{h}\right)-\delta_{\bm{\Gamma}',k}\left(\mathbf{y}+\mathbf{h}\right)\right|+\sum_{k=1}^{J}\left|\vartheta_{k}^{\left(m\right)}\right|\left|\delta_{\bm{\Gamma}',k}\left(\mathbf{y}\right)-\delta_{\bm{\Gamma},k}\left(\mathbf{y}\right)\right|\right)dP_{\bm{\vartheta},\bm{\Sigma}}\left(\mathbf{y}\right)\nonumber \\
 & \leq & B\int\sup_{\left\Vert \mathbf{h}\right\Vert \leq\eta}\sum_{k=1}^{J}\left(\left|\delta_{\bm{\Gamma},k}\left(\mathbf{y}+\mathbf{h}\right)-\delta_{\bm{\Gamma}',k}\left(\mathbf{y}+\mathbf{h}\right)\right|+\left|\delta_{\bm{\Gamma}',k}\left(\mathbf{y}\right)-\delta_{\bm{\Gamma},k}\left(\mathbf{y}\right)\right|\right)dP_{\bm{\vartheta},\bm{\Sigma}}\left(\mathbf{y}\right)\nonumber \\
 & \leq & B\int\sup_{\left\Vert \mathbf{h}\right\Vert \leq\eta}\sum_{k=1}^{J}\left|\delta_{\bm{\Gamma},k}\left(\mathbf{y}+\mathbf{h}\right)-\delta_{\bm{\Gamma}',k}\left(\mathbf{y}+\mathbf{h}\right)\right|dP_{\bm{\vartheta},\bm{\Sigma}}\left(\mathbf{y}\right)+B\int\sum_{k=1}^{J}\left|\delta_{\bm{\Gamma}',k}\left(\mathbf{y}\right)-\delta_{\bm{\Gamma},k}\left(\mathbf{y}\right)\right|dP_{\bm{\vartheta},\bm{\Sigma}}\left(\mathbf{y}\right)\nonumber \\
 & \leq & B\int\sup_{\left\Vert \mathbf{h}\right\Vert \leq\eta}\sum_{k=1}^{J}\left|\delta_{\bm{\Gamma},k}\left(\mathbf{z}\right)-\delta_{\bm{\Gamma}',k}\left(\mathbf{z}\right)\right|dP_{\bm{\vartheta}+\mathbf{h},\bm{\Sigma}}\left(\mathbf{z}\right)+B\int\sum_{k=1}^{J}\left|\delta_{\bm{\Gamma}',k}\left(\mathbf{y}\right)-\delta_{\bm{\Gamma},k}\left(\mathbf{y}\right)\right|dP_{\bm{\vartheta},\bm{\Sigma}}\left(\mathbf{y}\right)\nonumber \\
 & \leq & B\int\sup_{\left\Vert \mathbf{h}\right\Vert \leq\eta}\left\Vert \bm{\delta}_{\bm{\Gamma}}\left(\mathbf{z}\right)-\bm{\delta}_{\bm{\Gamma}'}\left(\mathbf{z}\right)\right\Vert _{1}p_{\bm{\vartheta}+\mathbf{h},\bm{\Sigma}}\left(\mathbf{z}\right)d\mu\left(\mathbf{z}\right)+B\int\left\Vert \bm{\delta}_{\bm{\Gamma}}-\bm{\delta}_{\bm{\Gamma}'}\right\Vert _{1}dP_{\bm{\vartheta},\bm{\Sigma}}.\label{eq:finalresult_rightbefore}
\end{eqnarray}

We further bound the first term of (\ref{eq:finalresult_rightbefore})
below. Consider the equality
\begin{equation}
p_{\bm{\vartheta}+\mathbf{h},\bm{\Sigma}}\left(\mathbf{z}\right)=\Lambda_{\mathbf{h}}\left(\mathbf{z}\right)p_{\bm{\vartheta},\bm{\Sigma}}\left(\mathbf{z}\right)\label{eq:density_upper_bound_1}
\end{equation}
where 
\[
\Lambda_{\mathbf{h}}\left(\mathbf{z}\right)=\exp\left(\mathbf{h}^{\top}\bm{\Sigma}^{-1}\left(\mathbf{z}-\bm{\vartheta}\right)-\frac{1}{2}\mathbf{h}^{\top}\bm{\Sigma}^{-1}\mathbf{h}\right)\leq\exp\left(\mathbf{h}^{\top}\bm{\Sigma}^{-1}\left(\mathbf{z}-\bm{\vartheta}\right)\right).
\]
For any $\bm{\Sigma}\in\mathcal{S}$ and $\left\Vert \mathbf{h}\right\Vert \leq\eta$,
by applying Cauchy-Schwarz inequality in the inner product:
\[
\Lambda_{\mathbf{h}}\left(\mathbf{z}\right)\leq\exp\left(\left\Vert \bm{\Sigma}^{-\frac{1}{2}}\mathbf{h}\right\Vert \left\Vert \bm{\Sigma}^{-\frac{1}{2}}\left(\mathbf{z}-\bm{\vartheta}\right)\right\Vert \right)\leq\exp\left(\frac{\eta}{\sqrt{\text{\underbar{\ensuremath{\lambda}}}}}\left\Vert \bm{\Sigma}^{-\frac{1}{2}}\left(\mathbf{z}-\bm{\vartheta}\right)\right\Vert \right),
\]
where the last inequality is because $\text{\underbar{\ensuremath{\lambda}}}\mathbf{I}_{J}\preceq\bm{\Sigma}$
implies $\left\Vert \bm{\Sigma}^{-\frac{1}{2}}\right\Vert \leq\frac{1}{\sqrt{\text{\underbar{\ensuremath{\lambda}}}}}$.
Therefore, 
\begin{equation}
\sup_{\left\Vert \mathbf{h}\right\Vert \leq\eta}\Lambda_{\mathbf{h}}\left(\mathbf{z}\right)\leq\exp\left(\frac{\eta}{\sqrt{\text{\underbar{\ensuremath{\lambda}}}}}\left\Vert \bm{\Sigma}^{-\frac{1}{2}}\left(\mathbf{z}-\bm{\vartheta}\right)\right\Vert \right).\label{eq:density_upper_bound_2}
\end{equation}
With (\ref{eq:density_upper_bound_1}) and (\ref{eq:density_upper_bound_2}),
the first term of (\ref{eq:finalresult_rightbefore}) is bounded further
by:
\begin{align}
 & B\int\sup_{\left\Vert \mathbf{h}\right\Vert \leq\eta}\left\Vert \bm{\delta}_{\bm{\Gamma}}\left(\mathbf{z}\right)-\bm{\delta}_{\bm{\Gamma}'}\left(\mathbf{z}\right)\right\Vert _{1}p_{\bm{\vartheta}+\mathbf{h},\bm{\Sigma}}\left(\mathbf{z}\right)d\mu\left(\mathbf{z}\right)\nonumber \\
= & B\int\sup_{\left\Vert \mathbf{h}\right\Vert \leq\eta}\left\Vert \bm{\delta}_{\bm{\Gamma}}\left(\mathbf{z}\right)-\bm{\delta}_{\bm{\Gamma}'}\left(\mathbf{z}\right)\right\Vert _{1}\Lambda_{\mathbf{h}}\left(\mathbf{z}\right)p_{\bm{\vartheta},\bm{\Sigma}}\left(\mathbf{z}\right)d\mu\left(\mathbf{z}\right)\nonumber \\
\leq & B\mathbb{E}_{P_{\bm{\theta},\bm{\Sigma}}}\left[\left\Vert \bm{\delta}_{\bm{\Gamma}}\left(\mathbf{z}\right)-\bm{\delta}_{\bm{\Gamma}'}\left(\mathbf{z}\right)\right\Vert _{1}\exp\left(\frac{\eta}{\sqrt{\text{\underbar{\ensuremath{\lambda}}}}}\left\Vert \bm{\Sigma}^{-\frac{1}{2}}\left(\mathbf{z}-\bm{\vartheta}\right)\right\Vert \right)\right]\label{eq:firstterm_dini_Gamma1}\\
\leq & B\left(\mathbb{E}_{P_{\bm{\theta},\bm{\Sigma}}}\left[\left\Vert \bm{\delta}_{\bm{\Gamma}}\left(\mathbf{z}\right)-\bm{\delta}_{\bm{\Gamma}'}\left(\mathbf{z}\right)\right\Vert _{1}^{2}\right]\right)^{\frac{1}{2}}\left(\mathbb{E}_{P_{\bm{\theta},\bm{\Sigma}}}\left[\exp\left(2\frac{\eta}{\sqrt{\text{\underbar{\ensuremath{\lambda}}}}}\left\Vert \bm{\Sigma}^{-\frac{1}{2}}\left(\mathbf{z}-\bm{\vartheta}\right)\right\Vert \right)\right]\right)^{\frac{1}{2}}\label{eq:firstterm_dini_Gamma2}\\
= & B\left(\mathbb{E}_{P_{\bm{\theta},\bm{\Sigma}}}\left[\left\Vert \bm{\delta}_{\bm{\Gamma}}\left(\mathbf{z}\right)-\bm{\delta}_{\bm{\Gamma}'}\left(\mathbf{z}\right)\right\Vert _{1}^{2}\right]\right)^{\frac{1}{2}}\left(\mathbb{E}_{P_{\mathbf{0},\mathbf{I}_{J}}}\left[\exp\left(2\frac{\eta}{\sqrt{\text{\underbar{\ensuremath{\lambda}}}}}\left\Vert \mathbf{q}\right\Vert \right)\right]\right)^{\frac{1}{2}},\label{eq:firstterm_dini_Gamma3}
\end{align}
where (\ref{eq:firstterm_dini_Gamma1}) invoked (\ref{eq:density_upper_bound_2}),
(\ref{eq:firstterm_dini_Gamma2}) invoked Cauchy-Schwarz inequality,
(\ref{eq:firstterm_dini_Gamma3}) used $\mathbf{q}\equiv\bm{\Sigma}^{-\frac{1}{2}}\left(\mathbf{z}-\bm{\vartheta}\right)\sim\mathcal{N}\left(0,\mathbf{I}_{J}\right)$.
The second factor of (\ref{eq:firstterm_dini_Gamma3}) depends only
on $\left(\eta,\text{\underbar{\ensuremath{\lambda}}},J\right)$ and
is finite, which we denote by $M$. Because $0\leq\left\Vert \bm{\delta}_{\bm{\Gamma}}\left(\mathbf{z}\right)-\bm{\delta}_{\bm{\Gamma}'}\left(\mathbf{z}\right)\right\Vert _{1}\leq2$,
$\left\Vert \bm{\delta}_{\bm{\Gamma}}\left(\mathbf{z}\right)-\bm{\delta}_{\bm{\Gamma}'}\left(\mathbf{z}\right)\right\Vert _{1}^{2}\leq2\left\Vert \bm{\delta}_{\bm{\Gamma}}\left(\mathbf{z}\right)-\bm{\delta}_{\bm{\Gamma}'}\left(\mathbf{z}\right)\right\Vert _{1}$,
and therefore, (\ref{eq:finalresult_rightbefore}) is bounded above
by:
\begin{align*}
 & \sqrt{2}BM\left(\mathbb{E}_{P_{\bm{\theta},\bm{\Sigma}}}\left[\left\Vert \bm{\delta}_{\bm{\Gamma}}\left(\mathbf{z}\right)-\bm{\delta}_{\bm{\Gamma}'}\left(\mathbf{z}\right)\right\Vert _{1}\right]\right)^{\frac{1}{2}}+B\int\left\Vert \bm{\delta}_{\bm{\Gamma}}-\bm{\delta}_{\bm{\Gamma}'}\right\Vert _{1}dP_{\bm{\vartheta},\bm{\Sigma}}\\
= & \sqrt{2}BM\left\Vert \bm{\delta}_{\bm{\Gamma}}-\bm{\delta}_{\bm{\Gamma}'}\right\Vert _{L^{1}\left(P_{\bm{\theta},\bm{\Sigma}}\right)}^{\frac{1}{2}}+B\left\Vert \bm{\delta}_{\bm{\Gamma}}-\bm{\delta}_{\bm{\Gamma}'}\right\Vert _{L^{1}\left(P_{\bm{\theta},\bm{\Sigma}}\right)}\\
\leq & \sqrt{2}BM\sup_{\bm{\theta}\in\Theta}\left\Vert \bm{\delta}_{\bm{\Gamma}}-\bm{\delta}_{\bm{\Gamma}'}\right\Vert _{L^{1}\left(P_{\bm{\theta},\bm{\Sigma}}\right)}^{\frac{1}{2}}+B\sup_{\bm{\theta}\in\Theta}\left\Vert \bm{\delta}_{\bm{\Gamma}}-\bm{\delta}_{\bm{\Gamma}'}\right\Vert _{L^{1}\left(P_{\bm{\theta},\bm{\Sigma}}\right)}\\
\rightarrow & 0\qquad\text{as }\bm{\Gamma}'\rightarrow\bm{\Gamma},
\end{align*}
where the last step is by invoking Proposition \ref{prop:Prop2} (iii)
for a fixed $\bm{\Sigma}$.

Therefore, (\ref{eq:RHS_F_eta_m}) converges to zero as $\left(\bm{\vartheta}',\bm{\Gamma}'\right)\rightarrow\left(\bm{\vartheta},\bm{\Gamma}\right)$,
implying continuity of $F_{\eta}^{\left(m\right)}\left(\bm{\vartheta},\bm{\Gamma}\right)$
in $\left(\bm{\vartheta},\bm{\Gamma}\right)$.

\uline{(3) (Applying Dini's theorem)}: (1) and (2) are sufficient
for Dini's theorem to apply. Therefore. 
\[
\sup_{\left(\bm{\vartheta},\bm{\Gamma}\right)\in\Theta\times\mathcal{S}}F_{\eta}^{\left(m\right)}\left(\bm{\vartheta},\bm{\Gamma}\right)\downarrow0\qquad\text{as }\eta\downarrow0.
\]
The remainder of the proof logic is identical to the proof of Theorem
\ref{thm:(Uniform-Convergence-of}.

\subsection{Proof of Theorem \ref{thm:(Plug-in-of-the}}\label{subsec:Proof-of-Theorem42}

Take the triangle inequality decomposition:
\begin{eqnarray}
 &  & \left|\sup_{\bm{\vartheta}\in\Theta}R_{\left[n\right]}\left(\bm{\vartheta},\bm{\delta}_{\hat{\bm{\Sigma}}}\right)-\sup_{\bm{\vartheta}\in\Theta}R_{\bm{\Sigma}}\left(\bm{\vartheta},\bm{\delta}_{\bm{\Sigma}}\right)\right|\nonumber \\
 & \leq & \sup_{\bm{\vartheta}\in\Theta}\left|R_{\left[n\right]}\left(\bm{\vartheta},\bm{\delta}_{\hat{\bm{\Sigma}}}\right)-R_{\bm{\Sigma}}\left(\bm{\vartheta},\bm{\delta}_{\bm{\Sigma}}\right)\right|\nonumber \\
 & \leq & \sup_{\bm{\vartheta}\in\Theta}\left|R_{\left[n\right]}\left(\bm{\vartheta},\bm{\delta}_{\hat{\bm{\Sigma}}}\right)-R_{\bm{\Sigma}}\left(\bm{\vartheta},\bm{\delta}_{\hat{\bm{\Sigma}}}\right)\right|+\sup_{\bm{\vartheta}\in\Theta}\left|R_{\bm{\Sigma}}\left(\bm{\vartheta},\bm{\delta}_{\hat{\bm{\Sigma}}}\right)-R_{\bm{\Sigma}}\left(\bm{\vartheta},\bm{\delta}_{\bm{\Sigma}}\right)\right|.\label{eq:final_triangle_decomposition}
\end{eqnarray}
The first term of (\ref{eq:final_triangle_decomposition}) is bounded
by:
\begin{align*}
\sup_{\bm{\vartheta}\in\Theta}\left|R_{\left[n\right]}\left(\bm{\vartheta},\bm{\delta}_{\hat{\bm{\Sigma}}}\right)-R_{\bm{\Sigma}}\left(\bm{\vartheta},\bm{\delta}_{\hat{\bm{\Sigma}}}\right)\right| & \leq\sup_{\bm{\Gamma}\in\mathcal{S}}\sup_{\bm{\vartheta}\in\Theta}\left|R_{\left[n\right]}\left(\bm{\vartheta},\bm{\delta}_{\bm{\Gamma}}\right)-R_{\bm{\Sigma}}\left(\bm{\vartheta},\bm{\delta}_{\bm{\Gamma}}\right)\right|\\
 & =o_{p}\left(1\right)\qquad\text{as }n\rightarrow\infty
\end{align*}
by applying Lemma \ref{lem:41} to the random sequence $\hat{\bm{\Sigma}}\in\mathcal{S}$
w.p.a.1. The second term of (\ref{eq:final_triangle_decomposition})
is bounded by:
\begin{align*}
\sup_{\bm{\vartheta}\in\Theta}\left|R_{\bm{\Sigma}}\left(\bm{\vartheta},\bm{\delta}_{\hat{\bm{\Sigma}}}\right)-R_{\bm{\Sigma}}\left(\bm{\vartheta},\bm{\delta}_{\bm{\Sigma}}\right)\right| & \leq\sup_{\bm{\vartheta}\in\Theta}\int\sum_{k=1}^{J}\left|\vartheta_{k}\right|\left|\delta_{\hat{\bm{\Sigma}},k}\left(\mathbf{y}\right)-\delta_{\bm{\Sigma},k}\left(\mathbf{y}\right)\right|dP_{\bm{\vartheta},\bm{\Sigma}}\left(\mathbf{y}\right)\\
 & \leq B\sup_{\bm{\vartheta}\in\Theta}\left\Vert \bm{\delta}_{\hat{\bm{\Sigma}}}-\bm{\delta}_{\bm{\Sigma}}\right\Vert _{L^{1}\left(P_{\bm{\vartheta},\bm{\Sigma}}\right)}\\
 & =o_{p}\left(1\right)\qquad\text{as }n\rightarrow\infty,
\end{align*}
where we invoked Proposition \ref{prop:Prop2} (iii) for a fixed $\bm{\Sigma}$.
The conclusion follows.

\newpage{}

\section{Other Results Used in the Proofs}\label{subsec:Theorems-Used-in}

For completeness, we state the two theorems used in the proof of the
main results. The following Theorem \ref{thm:(Farkas's-Alternative-Linear}
is taken from \citet{Border2013}. 
\begin{thm}[Farkas' Alternative Linear Inequality; Theorem 12 of \citet{Border2013}]
\label{thm:(Farkas's-Alternative-Linear}Let $\mathbf{A}$ be an
$m\times n$ real matrix, $\mathbf{B}$ be $l\times n$ matrix, $\mathbf{b}\in\mathbb{R}^{m}$
and $\mathbf{c}\in\mathbb{R}^{l}$. Then, exactly one of the following
alternatives hold:

(i) There exists $\mathbf{x}\in\mathbb{R}^{n}$ satisfying
\begin{align*}
\mathbf{A}\mathbf{x} & =\mathbf{b},\quad\mathbf{B}\mathbf{x}\leq\mathbf{c},\quad\mathbf{x}\geq\mathbf{0}.
\end{align*}

(ii) There exists $\mathbf{p}\in\mathbb{R}^{m}$ and $\mathbf{q}\in\mathbb{R}^{l}$
satisfying 
\[
\mathbf{p}\mathbf{A}+\mathbf{q}\mathbf{B}\geq\mathbf{0},\quad\mathbf{q}\geq\mathbf{0},\quad\mathbf{p}^{\top}\mathbf{b}+\mathbf{q}^{\top}\mathbf{c}<0.
\]
\end{thm}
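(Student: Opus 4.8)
The plan is to prove the dichotomy in the two standard pieces: first the easy \emph{at most one} assertion by a direct sign argument, then the substantive \emph{at least one} assertion by reformulating feasibility of (i) as membership in a finitely generated cone and applying a separating-hyperplane theorem.

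For \emph{at most one}, I would assume both alternatives held, take $\mathbf{x}$ from (i) and $(\mathbf{p},\mathbf{q})$ from (ii), and chain the inequalities
\[
\mathbf{p}^{\top}\mathbf{b}+\mathbf{q}^{\top}\mathbf{c}
=\mathbf{p}^{\top}\mathbf{A}\mathbf{x}+\mathbf{q}^{\top}\mathbf{c}
\ge \mathbf{p}^{\top}\mathbf{A}\mathbf{x}+\mathbf{q}^{\top}\mathbf{B}\mathbf{x}
=(\mathbf{p}^{\top}\mathbf{A}+\mathbf{q}^{\top}\mathbf{B})\mathbf{x}\ge 0,
\]
using $\mathbf{A}\mathbf{x}=\mathbf{b}$ in the first step, $\mathbf{q}\ge\mathbf{0}$ together with $\mathbf{B}\mathbf{x}\le\mathbf{c}$ in the second, and $\mathbf{x}\ge\mathbf{0}$ with $\mathbf{p}^{\top}\mathbf{A}+\mathbf{q}^{\top}\mathbf{B}\ge\mathbf{0}$ in the last. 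This contradicts $\mathbf{p}^{\top}\mathbf{b}+\mathbf{q}^{\top}\mathbf{c}<0$, so the two alternatives are mutually exclusive.

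For \emph{at least one}, I would introduce a slack vector $\mathbf{s}\ge\mathbf{0}$ so that $\mathbf{B}\mathbf{x}\le\mathbf{c}$ becomes $\mathbf{B}\mathbf{x}+\mathbf{s}=\mathbf{c}$; then (i) is solvable iff $(\mathbf{b},\mathbf{c})$ lies in the cone
\[
K=\left\{(\mathbf{A}\mathbf{x},\,\mathbf{B}\mathbf{x}+\mathbf{s}):\ \mathbf{x}\ge\mathbf{0},\ \mathbf{s}\ge\mathbf{0}\right\}\subset\mathbb{R}^{m+l},
\]
which is the conic hull of finitely many vectors (the columns of the block matrix $\bigl[\begin{smallmatrix}\mathbf{A}&\mathbf{0}\\\mathbf{B}&\mathbf{I}\end{smallmatrix}\bigr]$) and hence convex. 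Assuming (i) fails, i.e. $(\mathbf{b},\mathbf{c})\notin K$, I would invoke separation of a point from a closed convex cone to obtain $(\mathbf{p},\mathbf{q})\in\mathbb{R}^{m+l}$ with $\mathbf{p}^{\top}\mathbf{b}+\mathbf{q}^{\top}\mathbf{c}<0$ and $(\mathbf{p},\mathbf{q})\cdot\mathbf{k}\ge0$ for every $\mathbf{k}\in K$. Unpacking the latter over all $\mathbf{x}\ge\mathbf{0},\mathbf{s}\ge\mathbf{0}$ gives $(\mathbf{p}^{\top}\mathbf{A}+\mathbf{q}^{\top}\mathbf{B})\mathbf{x}+\mathbf{q}^{\top}\mathbf{s}\ge0$; taking $\mathbf{x}=\mathbf{0}$ and letting $\mathbf{s}$ range over the nonnegative orthant forces $\mathbf{q}\ge\mathbf{0}$, while taking $\mathbf{s}=\mathbf{0}$ and letting $\mathbf{x}$ range forces $\mathbf{p}^{\top}\mathbf{A}+\mathbf{q}^{\top}\mathbf{B}\ge\mathbf{0}$. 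Together with $\mathbf{p}^{\top}\mathbf{b}+\mathbf{q}^{\top}\mathbf{c}<0$, these are precisely the three conditions of (ii).

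The main obstacle is the closedness of $K$, on which the separation step depends: separating a point from a merely convex (non-closed) set can fail, so closedness of the finitely generated cone is where the real content sits. I would discharge it either by citing the Weyl--Minkowski theorem or by a Carath\'eodory-style argument (each element of $K$ is a nonnegative combination of at most $m+l$ generators, whence closedness follows by a compactness argument over the finitely many index subsets). An alternative that bypasses the closedness discussion entirely is to reduce directly to the standard equality-form Farkas lemma: the slack reformulation writes (i) as $M\mathbf{z}=\mathbf{d}$, $\mathbf{z}\ge\mathbf{0}$ with $M=\bigl[\begin{smallmatrix}\mathbf{A}&\mathbf{0}\\\mathbf{B}&\mathbf{I}\end{smallmatrix}\bigr]$, $\mathbf{z}=(\mathbf{x},\mathbf{s})$, $\mathbf{d}=(\mathbf{b},\mathbf{c})$, and the standard alternative $\mathbf{w}^{\top}M\ge\mathbf{0}$, $\mathbf{w}^{\top}\mathbf{d}<0$ with $\mathbf{w}=(\mathbf{p},\mathbf{q})$ reads off as $\mathbf{p}^{\top}\mathbf{A}+\mathbf{q}^{\top}\mathbf{B}\ge\mathbf{0}$, $\mathbf{q}\ge\mathbf{0}$, $\mathbf{p}^{\top}\mathbf{b}+\mathbf{q}^{\top}\mathbf{c}<0$, matching (ii) verbatim.
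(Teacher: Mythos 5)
Your proof is correct, but note that the paper itself offers no proof of this statement: it is an auxiliary result quoted verbatim from \citet{Border2013} and used as a black box in the proof of Theorem \ref{thm:thm_2}, so there is no internal argument to compare against. Your two-part structure is the standard one and both halves check out: the exclusivity chain $\mathbf{p}^{\top}\mathbf{b}+\mathbf{q}^{\top}\mathbf{c}=\mathbf{p}^{\top}\mathbf{A}\mathbf{x}+\mathbf{q}^{\top}\mathbf{c}\geq(\mathbf{p}^{\top}\mathbf{A}+\mathbf{q}^{\top}\mathbf{B})\mathbf{x}\geq0$ uses each hypothesis exactly once and correctly, and the existence half via the slack reformulation $\mathbf{B}\mathbf{x}+\mathbf{s}=\mathbf{c}$ is sound. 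You also correctly identified the one genuinely nontrivial point, the closedness of the finitely generated cone $K$, without which the separation step is invalid; either of your discharges (Weyl--Minkowski, or conic Carath\'eodory over linearly independent generator subsets followed by a finite union of closed cones) works, and your second route---reducing to the equality-form Farkas lemma with $M=\bigl(\begin{smallmatrix}\mathbf{A}&\mathbf{0}\\ \mathbf{B}&\mathbf{I}\end{smallmatrix}\bigr)$ so that $\mathbf{w}^{\top}M\geq\mathbf{0}$, $\mathbf{w}^{\top}\mathbf{d}<0$ transcribes exactly into the three conditions of (ii)---is the cleanest, since it pushes the closedness issue into the standard lemma. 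One small gloss worth making explicit if you write this up: after separating, you get $\mathbf{w}\cdot(\mathbf{b},\mathbf{c})<\alpha\leq\mathbf{w}\cdot\mathbf{k}$ for all $\mathbf{k}\in K$, and you should use $\mathbf{0}\in K$ plus positive scaling of $\mathbf{k}$ to conclude $\alpha\leq0$ and $\mathbf{w}\cdot\mathbf{k}\geq0$, which is the form you then unpack.
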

The following Strassen's coupling Theorem \ref{thm:(Strassen-Prokhorov/Strassen-Dud}
is taken from \citet[Sec 10.3.]{pollard2002user}, used in the proof
of Theorem \ref{thm:(Uniform-Convergence-of}.
\begin{thm}[{Strassen's Coupling, \citet[Sec 10.3., Theorem <8>]{pollard2002user}}]
\label{thm:(Strassen-Prokhorov/Strassen-Dud}Let $P$ and $Q$ be
tight probability measures on the Borel sigma field $\mathfrak{B}$
of a separable metric space $\mathcal{X}$. Let $\epsilon,\epsilon'$
be positive constants. There exists random elements $X$ and $Y$
of $\mathcal{X}$ with distributions $P$ and $Q$ such that $\Pr\left(d\left(X,Y\right)>\epsilon\right)\leq\epsilon'$
if and only if $P\left(B\right)\leq Q\left(B^{\epsilon}\right)+\epsilon'$
for all Borel sets $B$, where $B^{\epsilon}$ denotes the $\epsilon$-enlargement
of $B$.
\end{thm}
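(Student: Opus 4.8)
The two directions are of very different character. The forward implication (a coupling exists $\Rightarrow$ the inequality) is an elementary conditioning argument, whereas the converse (the inequality $\Rightarrow$ a coupling) is the substantive part where essentially all the work lies. I would dispatch the easy direction first. Suppose $X\sim P$ and $Y\sim Q$ satisfy $\Pr(d(X,Y)>\epsilon)\le\epsilon'$, fix a Borel set $B$, and write $B^{\epsilon}=\{y:d(y,B)\le\epsilon\}$. Splitting on the event that $X,Y$ are close,
\begin{equation}
P(B)=\Pr(X\in B,\,d(X,Y)\le\epsilon)+\Pr(X\in B,\,d(X,Y)>\epsilon),\nonumber
\end{equation}
on the first event $Y$ lies within distance $\epsilon$ of a point of $B$, hence $Y\in B^{\epsilon}$, so that term is at most $Q(B^{\epsilon})$; the second term is at most $\epsilon'$. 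This yields $P(B)\le Q(B^{\epsilon})+\epsilon'$.

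For the converse I would recast the goal as the existence of a coupling with large mass near the diagonal. Let $R=\{(x,y)\in\mathcal{X}\times\mathcal{X}:d(x,y)\le\epsilon\}$, a closed set, and let $\Pi(P,Q)$ denote the couplings of $P$ and $Q$. It suffices to produce $\mu\in\Pi(P,Q)$ with $\mu(R)\ge 1-\epsilon'$, since then $X,Y$ drawn from $\mu$ obey $\Pr(d(X,Y)>\epsilon)=1-\mu(R)\le\epsilon'$. I would prove this first in the \emph{finite} case, where $P$ and $Q$ are supported on finitely many atoms. There the closeness relation defines a bipartite graph joining $x$ to $y$ iff $d(x,y)\le\epsilon$, and the maximal mass of a sub-coupling supported on close pairs is governed by the maximum-flow/minimum-cut (defect Hall) theorem: the maximal matchable mass equals $1-\sup_B\bigl(P(B)-Q(B^{\epsilon})\bigr)$, where the neighborhood of a source set $B$ is exactly its enlargement $B^{\epsilon}$. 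The hypothesis bounds the deficiency $\sup_B\bigl(P(B)-Q(B^{\epsilon})\bigr)$ by $\epsilon'$, so a close sub-coupling of mass $\ge 1-\epsilon'$ exists; completing it to a full coupling of $P$ and $Q$ by joining the leftover marginal masses arbitrarily produces $\mu$ with $\mu(R)\ge 1-\epsilon'$.

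To reach a general separable metric space I would approximate and pass to a limit, working throughout with \emph{full} couplings so that marginals stay exact. Using tightness of $P$ and $Q$, choose a compact $K$ carrying all but $\eta_k$ of each mass, partition $K$ into finitely many Borel cells of diameter $<\eta_k$, and push $P,Q$ forward to finitely supported measures $P^{(k)},Q^{(k)}$ on cell representatives. Moving points to representatives enlarges the closeness radius by at most $2\eta_k$ and perturbs masses by $O(\eta_k)$, so the finite case applies with parameters $\epsilon+2\eta_k$ and $\epsilon'+O(\eta_k)$, yielding couplings $\mu^{(k)}\in\Pi(P^{(k)},Q^{(k)})$ with $\mu^{(k)}\bigl(\{d\le\epsilon+2\eta_k\}\bigr)\ge 1-\epsilon'-O(\eta_k)$. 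The family $\{\mu^{(k)}\}$ is tight because its marginals are, so by Prokhorov's theorem (using separability of $\mathcal{X}\times\mathcal{X}$) a subsequence converges weakly to some $\mu$; continuity of the coordinate projections together with $P^{(k)}\rightsquigarrow P$ and $Q^{(k)}\rightsquigarrow Q$ forces $\mu\in\Pi(P,Q)$ exactly. Finally, for each $\delta>0$ the set $F_\delta=\{d\le\epsilon+\delta\}$ is closed, and for large $k$ the portmanteau inequality gives $\mu(F_\delta)\ge\limsup_k\mu^{(k)}(F_\delta)\ge 1-\epsilon'$; letting $\delta\downarrow0$ and using continuity from above at the closed set $R=\bigcap_{\delta>0}F_\delta$ yields $\mu(R)\ge 1-\epsilon'$, as required.

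The step I expect to be the main obstacle is the limiting passage, and it is precisely the reason to carry full couplings rather than sub-couplings: a weak limit of full couplings inherits exact marginals $P$ and $Q$ by continuity of the projections, whereas domination of marginals by a sub-probability is awkward to preserve under weak limits. The two quantitative points that must be tracked carefully are (i) that the discretization transfers the hypothesis with a controlled $O(\eta_k)$ loss in both the enlargement radius and the additive slack—cleanest to verify by reducing the inequality to closed sets $B$, where $B^{\epsilon}$ behaves monotonically and continuously in $\epsilon$—and (ii) that the closedness of $R$ makes $\mu\mapsto\mu(R)$ upper semicontinuous so the portmanteau bound survives the limit. Once these are in place, the $\delta\downarrow0$ contraction recovers the exact radius $\epsilon$ and level $\epsilon'$, completing the converse.
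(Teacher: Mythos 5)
The paper does not actually prove this statement: it is quoted as a known result, with the proof deferred entirely to the citation \citet[Sec 10.3]{pollard2002user}. Your proposal supplies a genuine proof, and it is correct; moreover it is essentially the classical argument found in that cited literature: the easy direction by conditioning on the event $\left\{ d\left(X,Y\right)\leq\epsilon\right\}$, and the converse by (a) solving the finitely supported case with a max-flow/min-cut (defect form of Hall's marriage) argument, whose min-cut value is exactly $1-\sup_{B}\left(P\left(B\right)-Q\left(B^{\epsilon}\right)\right)$, (b) discretizing via tightness and a fine partition of a compact set, with the hypothesis transferring at radius $\epsilon+2\eta_{k}$ and slack $\epsilon'+O\left(\eta_{k}\right)$, and (c) extracting a weak limit by Prokhorov and recovering the exact radius through the closed sets $F_{\delta}=\left\{ d\leq\epsilon+\delta\right\}$, portmanteau, and continuity from above. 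Your emphasis on carrying full couplings through the limit (so the marginals are pinned down by continuity of the projections) is exactly the right device, and your two flagged quantitative checkpoints are the correct ones. Two points are stated rather than verified and would need a line each in a complete write-up: the uniform tightness of $\left\{ \mu^{\left(k\right)}\right\}$ (which holds because the supports of $P^{\left(k\right)},Q^{\left(k\right)}$ accumulate on a fixed compact set, since the representatives lie within $\eta_{k}\downarrow0$ of $K$), and the disposition of the mass outside $K$ in the discretization (assign it to a fixed representative point, absorbing it into the $O\left(\eta_{k}\right)$ perturbation). Neither is a gap in the argument's substance.
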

\begin{proof}
See, e.g., \citet[Sec 10.3.]{pollard2002user}.
\end{proof}
\newpage{}

\section{Details of Numerical Implementation}\label{sec:Details-of-Numerical}

\subsection{The Bayes Risk Under the Minimax/Maximin Rule}

The objective is to maximize the Bayes risk under the minimax/maximin
rule $\bm{\delta}$:
\begin{align}
 & \max_{\left(\bm{\theta}^{k},\pi^{k}\right)_{k\in\mathcal{J}}}\sum_{k=1}^{J}\pi^{k}R\left(\bm{\theta}^{k},\bm{\delta}\right)=\sum_{k=1}^{J}\pi^{k}\left(\theta_{k}^{k}-\sum_{i=1}^{J}\theta_{i}^{k}\mathbb{E}_{\bm{\theta}^{k}}\left[\delta_{i}\left(\mathbf{x}\right)\right]\right)\nonumber \\
= & \max_{\left(\bm{\theta}^{k},\pi^{k}\right)_{k\in\mathcal{J}}}\sum_{k=1}^{J}\pi^{k}\left[\theta_{k}^{k}-\sum_{i=1}^{J}\theta_{i}^{k}\int\mathbf{1}\left(\sum_{k=1}^{J}\theta_{i}^{k}\pi^{k}p_{\bm{\theta}^{k}}\left(\mathbf{x}\right)>\sum_{k=1}^{J}\theta_{j}^{k}\pi^{k}p_{\bm{\theta}^{k}}\left(\mathbf{x}\right)\:\forall j\neq i\right)p_{\bm{\theta}^{k}}\left(\mathbf{x}\right)d\mu\left(\mathbf{x}\right)\right].\label{eq:maximization_problem}
\end{align}
We also impose the constraints $\bm{\theta}^{k}\in\Theta^{k}$ during
the optimization.

Throughout, we approximate $\mathbb{E}_{\bm{\theta}^{k}}\left[\delta_{i}\left(\mathbf{x}\right)\right]$
with respect to the Normal density $p_{\bm{\theta}^{k}}\left(\mathbf{x}\right)$
using the quasi Monte Carlo (QMC) draws. Specifically, we draw $2^{19}$
$\mathbf{z}\sim\mathcal{N}\left(\mathbf{0},\mathbf{I}\right)$ based
on the Sobol sequence, and then take the location-scale transformation
$\mathbf{x}=\bm{\theta}^{k}+\sqrt{\bm{\Sigma}}\mathbf{z}$. This yields
the numerical optimization problem:
\[
\max_{\left(\bm{\theta}^{k},\pi^{k}\right)_{k\in\mathcal{J}}}\sum_{k=1}^{J}\pi^{k}\left[\theta_{k}^{k}-\sum_{i=1}^{J}\theta_{i}^{k}\left(\frac{1}{N_{s}}\sum_{s=1}^{N_{s}}\delta_{i}^{\tau}\left(\mathbf{x}^{s};\bm{\theta}^{k},\pi^{k}\right)\right)\right],
\]
where $s$ is the index for the QMC draw. 

\subsection{Softmax Approximation of the Objective Function With Analytic Gradients}

When the problem dimension is relatively large, derivative-free algorithms
struggle to find the optimum. Therefore, we employ smooth ``softmax''
approximation of the indicator function and provide analytic gradients
of the objective function to use derivative-based optimization algorithms
such as BFGS and its variants.

\subsubsection{Objective Function}

The objective function (\ref{eq:maximization_problem}) is non-smooth,
thereby preventing the use of gradient-based optimizers. Therefore,
we approximate the indicator with the softmax (logit) function as
follows:
\begin{align*}
h_{i}\left(\mathbf{x};\bm{\theta}^{k},\pi^{k}\right) & =\sum_{k=1}^{J}\theta_{i}^{k}\pi^{k}p_{\bm{\theta}^{k}}\left(\mathbf{x}\right)\\
\delta_{i}^{\tau}\left(\mathbf{x};\bm{\theta}^{k},\pi^{k}\right) & =\frac{\exp\left(\tau h_{i}\left(\mathbf{x};\bm{\theta}^{k},\pi^{k}\right)\right)}{\sum_{j\in\mathcal{J}}\exp\left(\tau h_{j}\left(\mathbf{x};\bm{\theta}^{k},\pi^{k}\right)\right)}.
\end{align*}
$\tau\uparrow\infty$ yields the original ``hard-max'' indicator
function. A large enough $\tau$ makes $\bm{\delta}^{\tau}\approx\bm{\delta}$
while it is still smooth. During the implementation, we take $\tau=100,000$.

\subsubsection{Gradients}

We invoke a series of chain rules to calculate the objective function
gradients. For the gradients, we introduce additional indices $m$
and $r$.

\paragraph{Component Gradients}

\[
\frac{\partial\delta_{i}^{\tau}}{\partial h_{m}}=\tau\delta_{i}^{\tau}\left(1_{\left\{ i=m\right\} }-\delta_{m}^{\tau}\right)
\]
\begin{align*}
\frac{\partial h_{i}}{\partial\theta_{u}^{r}} & =\pi^{r}p^{r}\left(\mathbf{x}\right)\left(1_{\left\{ i=u\right\} }+\theta_{i}^{r}\left[\bm{\Sigma}^{-1}\left(\mathbf{x}-\bm{\theta}^{r}\right)\right]_{u}\right)\\
\frac{\partial h_{i}}{\partial\pi^{r}} & =\theta_{i}^{r}p^{r}\left(\mathbf{x}\right)
\end{align*}

\paragraph*{Gradient of the Smoothened Pointwise Bayes Risk for a Fixed Support
Point $\bm{\theta}^{k}$}

Let 
\begin{align*}
G_{ik} & =\frac{1}{N_{s}}\sum_{s=1}^{N_{s}}\delta_{i}^{\tau}\left(\mathbf{x}^{s};\bm{\theta}^{k},\pi^{k}\right)\;\text{and}\;R_{k}=\theta_{k}^{k}-\sum_{i=1}^{J}\theta_{i}^{k}G_{ik}.
\end{align*}
Then, 
\begin{align*}
\frac{\partial G_{ik}}{\partial\theta_{u}^{r}} & =\frac{1}{N_{s}}\sum_{s=1}^{N_{s}}\left(\sum_{m=1}^{J}\frac{\partial\delta_{i}^{\tau}}{\partial h_{m}}\left(\mathbf{x}^{s}\right)\frac{\partial h_{m}}{\partial\theta_{u}^{r}}\left(\mathbf{x}^{s}\right)+\sum_{m=1}^{J}\frac{\partial\delta_{i}^{\tau}}{\partial x_{m}^{s}}\left(\mathbf{x}^{s}\right)\frac{\partial x_{m}^{s}}{\partial\theta_{u}^{r}}\right)\\
\frac{\partial R_{k}}{\partial\theta_{u}^{r}} & =1_{\left\{ r=k\right\} }1_{\left\{ u=k\right\} }-\sum_{i=1}^{J}\left(1_{\left\{ r=k\right\} }1_{\left\{ u=i\right\} }G_{ik}+\theta_{i}^{k}\frac{\partial G_{ik}}{\partial\theta_{u}^{r}}\right)\\
\frac{\partial G_{ik}}{\partial\pi^{r}} & =\frac{1}{N_{s}}\sum_{s=1}^{N_{s}}\left(\sum_{m=1}^{J}\frac{\partial\delta_{i}^{\tau}}{\partial h_{m}}\left(\mathbf{x}^{s}\right)\frac{\partial h_{m}}{\partial\pi^{r}}\left(\mathbf{x}^{s}\right)\right)\\
\frac{\partial R_{k}}{\partial\pi^{r}} & =-\sum_{i=1}^{J}\theta_{i}^{k}\frac{\partial G_{ik}}{\partial\pi^{r}},
\end{align*}
where
\begin{align*}
\frac{\partial x_{m}^{s}}{\partial\theta_{u}^{r}} & =1_{\left\{ r=k\right\} }1_{\left\{ u=m\right\} }\\
\frac{\partial\delta_{i}^{\tau}}{\partial x_{m}^{s}} & =\sum_{l=1}^{J}\frac{\partial\delta_{i}^{\tau}}{\partial h_{l}}\frac{\partial h_{l}}{\partial x_{m}^{s}}\\
 & =-\tau\delta_{i}^{\tau}\left(\left[\sum_{r=1}^{J}\theta_{i}^{r}\pi^{r}p^{r}\left(\mathbf{x}^{s}\right)\bm{\Sigma}^{-1}\left(\mathbf{x}^{s}-\bm{\theta}^{r}\right)\right]_{m}-\left[\sum_{l=1}^{J}\delta_{l}^{\tau}\left(\mathbf{x}^{s}\right)\left(\sum_{r=1}^{J}\theta_{l}^{r}\pi^{r}p^{r}\left(\mathbf{x}^{s}\right)\bm{\Sigma}^{-1}\left(\mathbf{x}^{s}-\bm{\theta}^{r}\right)\right)\right]_{m}\right).
\end{align*}

\paragraph{Objective Function Gradient}

Let $f=\sum_{k=1}^{J}\pi^{k}R\left(\bm{\theta}^{k},\bm{\delta}\right)$
(IPOPT minimizes). Then:
\begin{align*}
\frac{\partial f}{\partial\theta_{u}^{r}} & =\sum_{k=1}^{J}\pi^{k}\frac{\partial R_{k}}{\partial\theta_{u}^{r}}\\
\frac{\partial f}{\partial\pi^{r}} & =R_{r}+\sum_{k=1}^{J}\pi^{k}\frac{\partial R_{k}}{\partial\pi^{r}}.
\end{align*}

\end{document}